\begin{document}
\newtheorem{theorem}{Theorem}
\newtheorem{corollary}{Corollary}
\newtheorem{conjecture}{Conjecture}
\newtheorem{definition}{Definition}
\newtheorem{lemma}{Lemma}
\newtheorem{remark}{Remark}
\newcommand{\define}{\stackrel{\triangle}{=}}
\newcommand{\xS}{\mathbf{S}}
\newtheorem{construction}{Construction}

\newcommand\blfootnote[1]{%
  \begingroup
  \renewcommand\thefootnote{}\footnote{#1}%
  \addtocounter{footnote}{-1}%
  \endgroup
}

\pagestyle{empty}

\def\QED{\mbox{\rule[0pt]{1.5ex}{1.5ex}}}
\def\proof{\noindent{\it Proof: }}

\date{}

\title{Index Coding -- An Interference Alignment Perspective}

\author{Hamed Maleki, Viveck R. Cadambe, Syed A. Jafar\\
\blfootnote{Presented in part at ISIT 2012. Hamed Maleki and Syed Jafar (email: hmaleki@uci.edu, syed@uci.edu) are with the Center for Pervasive Communications and Computing (CPCC) at the University of California Irvine, Irvine, CA, 92697. Viveck Cadambe (email: viveck@mit.edu) is with MIT. This work is supported in part by ONR N00014-12-1-0067 and by NSF CCF-1143982.}}


\maketitle
\thispagestyle{empty}
\begin{abstract}
The index coding problem is studied from an interference alignment perspective, providing new results as well as new insights into, and generalizations of, previously known results. An equivalence is established between multiple unicast index coding where each message is desired by exactly one receiver, and multiple groupcast index coding where a message can be desired by multiple receivers, which settles the heretofore open question of insufficiency of linear codes for the multiple unicast index coding problem by equivalence with multiple groupcast settings where this question has previously been answered. Necessary and sufficient conditions for the achievability of rate half per message are shown to be a natural consequence of interference alignment constraints, and generalizations to feasibility of rate $\frac{1}{L+1}$ per message when each destination desires at least $L$ messages, are similarly obtained. Finally, capacity optimal solutions are presented to a series of symmetric index coding problems inspired by the local connectivity and local interference characteristics of wireless networks. The solutions are based on vector linear coding.
\end{abstract}

\newpage


\section{Introduction}

Much progress in network information theory can be attributed to the pursuit of the capacity of simple-to-describe canonical network communication models. Simplicity in the network communication models often affords a clear formulation of techniques involved in the communication system. The focus of this paper is the \emph{index coding} problem which  is arguably the simplest multiuser capacity problem  because it is a communication network that has \emph{only one link with finite capacity}. Yet, this turns out to be the proverbial case where appearances can be quite deceiving. More than a decade after it was introduced by Birk and Kol in \cite{Birk_Kol,Birk_Kol_Trans}, the index coding problem not only remains open, but also has been shown to include as special cases a number of difficult problems in both wired and wireless settings --- such as the general  multiple  unicast problem with linear network coding \cite{Rouayheb_Sprintson_Georghiades}, multi-way relay networks \cite{Yazdi_Savari_Kramer}, and the blind cellular interference alignment problem in wireless networks \cite{CBIA}, to name a few. Remarkably, the index coding problem is also the origin of the fundamental idea of interference alignment \cite{Birk_Kol_Trans}, which was re-discovered, extensively studied and developed in a variety of forms in wireless networks \cite{Jafar_FnT,Cadambe_Jafar_int,Cadambe_Jafar_X} and has recently found applications in network coding problems such as the distributed data storage exact repair problem \cite{Cadambe_Jafar_Maleki_Ramchandran_Suh, Dimakis_survey} and the 3 unicast problem \cite{Das_Vishwanath_Jafar_Markopoulou_ISIT, Meng_Ramakrishnan_Markopoulou_Jafar}. In this paper, we attempt to bring this idea ``home", by applying the understanding of the principles of interference alignment, into the original setting --- the index coding problem.

The essence of the index coding problem lies in its focus on a single bottleneck network. Having only one link with  finite capacity concentrates the challenge  of network coding in one place, highlighting some of the most fundamental, challenging, and surprising aspects of the network coding problem. Understanding the role of a single bottleneck edge in a network when the rest of the network is composed only of trivial links (of  infinite capacity), is a natural stepping stone toward a broader understanding of communication networks\footnote{The edge-removal problem introduced in \cite{Ho_Effros_Jalali_Allerton10, Jalali_Effros_Ho_ITA11} is another intriguing open problem that seeks to understand the role of a single edge in a network, and has been shown to be closely related to the general question of whether zero-error capacity and $\epsilon$-error capacity are the same for general network coding instances\cite{Langberg_Effros_Allerton11}.}. We start with a discussion of similarly motivated single-bottleneck  settings for both wired and wireless networks.

\subsection{Single Bottleneck Wired Networks -- Index Coding}

\begin{figure}[h]\centering
\includegraphics[width=5in]{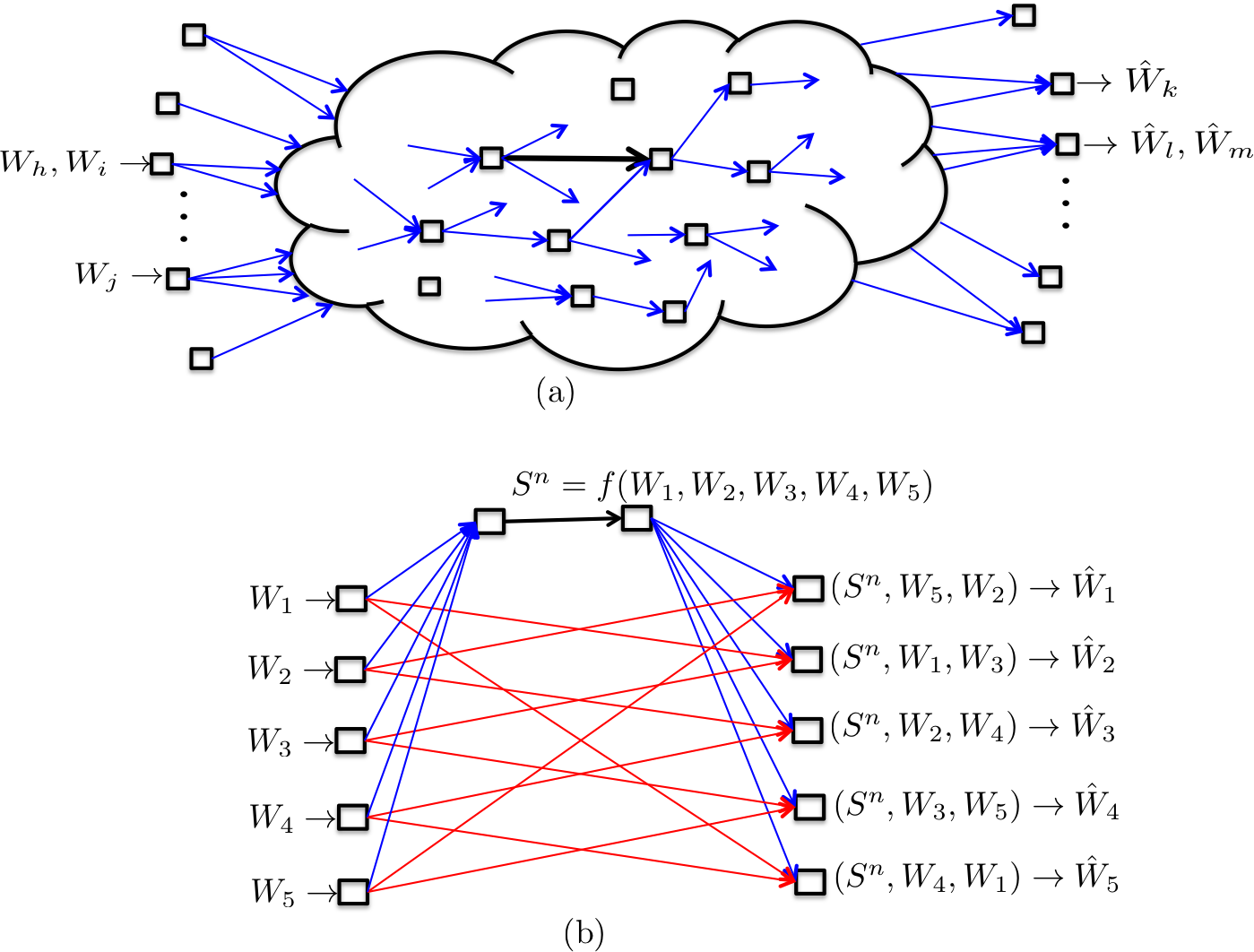}
\caption{\small (a) General network coding problem: If only one link (shown in black) in the intermediate network has finite (unit) capacity, and all the other links have infinite capacity, then the remaining problem is the index coding problem. (b) Example of an index coding setting.}\label{fig:generalnet}
\end{figure}

Consider a general network coding setting shown in Figure \ref{fig:generalnet}(a) where the source nodes on the left communicate with the destination nodes on the right through a network of intermediate nodes connected via orthogonal, noiseless, capacitated links. If only one link (shown in black) in the intermediate network has finite (unit) capacity, and all the other links have infinite capacity, then the remaining problem  is the index coding problem. Clearly, the only non-trivial message flows are those for which every path between the source and the desired destination(s) \emph{must} pass through the finite capacity link. All other messages have either rate zero or infinity, and can be eliminated. The remaining network contains three kinds of infinite capacity links in addition to the finite capacity link. First, the remaining source nodes must connect to the transmitter of the finite capacity link via infinite capacity links. Second, the destination nodes must connect to the receiver of the finite capacity link through infinite capacity links. Third, if there are infinite capacity paths between any sources and their non-desired destinations, those paths are replaced with infinite capacity links known as ``antidote" links. An example is shown in Figure \ref{fig:generalnet}(b) where 5 messages originate at the sources on the left and  are desired by the destination nodes on the right.  The bottleneck link (finite capacity link) is shown at the top of the figure in black and  carries a sequence of symbols from a finite alphabet $S^n\in\mathcal{S}^n$, which are chosen with full knowledge of all messages. The antidote links are shown in red. Clearly, the best use of the antidote links, which have infinite capacity, is to convey all the information, i.e., the messages, from the transmitters to the receivers of the antidote links. These antidotes comprise the side information that makes the problem interesting and quite challenging in general. Each destination must be able to decode its desired message based on the sequence of symbols sent over the bottleneck link and the set of undesired messages available to it as antidotes.

Index coding can also be seen as ``source coding with side information", or as a broadcast channel with cognitive receivers, i.e., where certain receivers have full knowledge of certain messages a-priori. 

\subsection{Single Bottleneck Wireless Networks -- Wireless Index Coding}
\begin{figure}[!h]\centering
\includegraphics[width=5in]{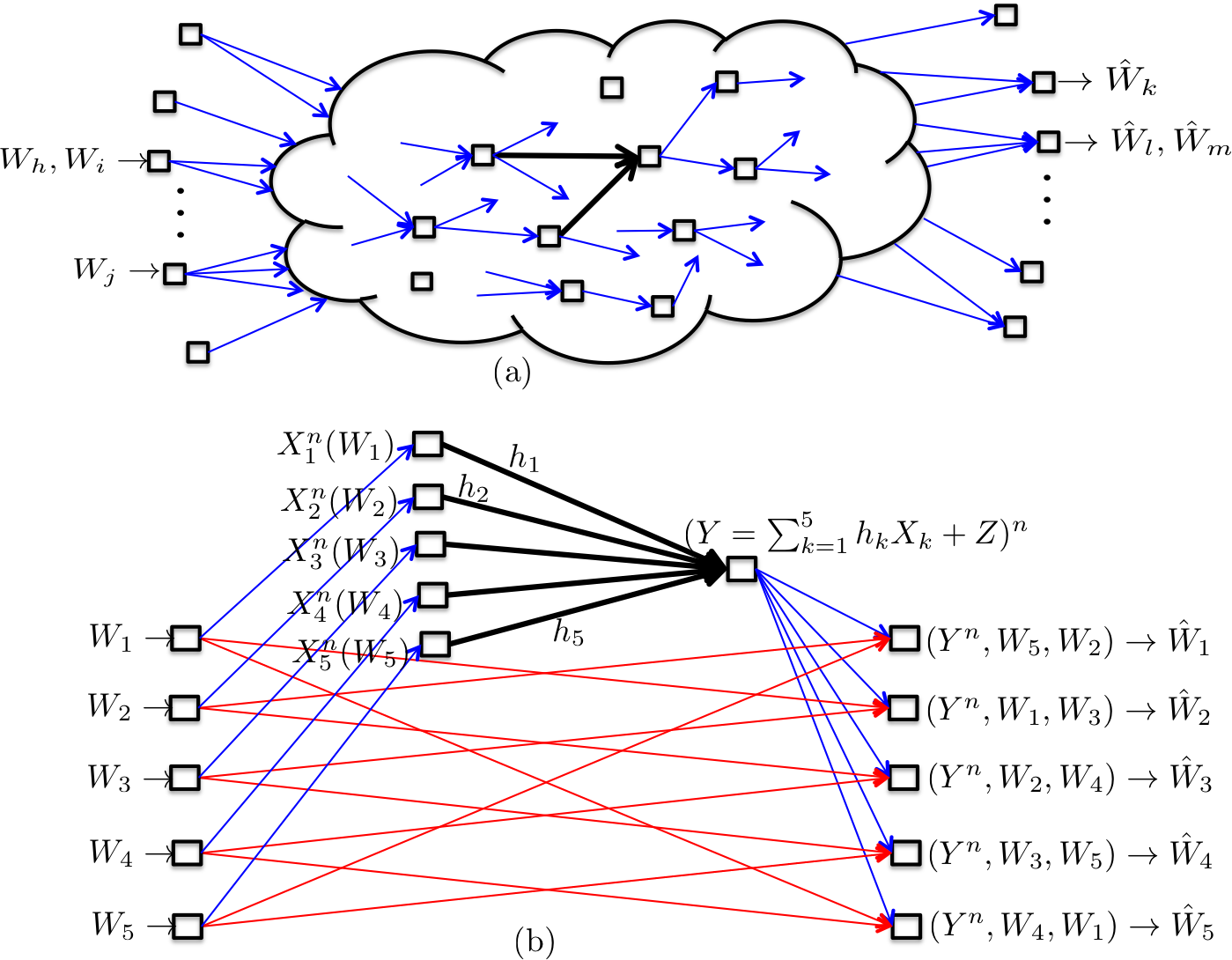}
\caption{\small (a) Wireless network: If only one receiver (shown with incoming signals in black) in the intermediate network has non-zero (unit) AWGN variance, and all the other receivers have zero noise (infinite capacity), then the remaining problem is the wireless index coding problem. (b) Example of a wireless index coding setting.}\label{fig:wirelessindexcoding}
\end{figure}
Consider a wireless network shown in Figure \ref{fig:wirelessindexcoding}(a)  comprised of the source nodes shown on the left, which communicate with destination nodes shown on the right, through an intermediate network of relay nodes. Depending on propagation path loss different pairs of nodes may be connected or disconnected. Because this is a wireless setting, signals emerging from the same transmitter are broadcast, and signals arriving at the same receiver interfere. All transmitters are subject to power constraint $P$, and generally the receivers experience additive white Gaussian noise (AWGN) in addition to the superposition of fading signals from connected transmitters. As an analogue to the index coding problem defined by a single bottleneck link, let us assume only one of the receivers in the intermediate network experiences AWGN, e.g., of unit variance, while all other receivers experience no noise, i.e., have infinite resolution of the complex valued signals, essentially providing them infinite capacity links to their respective connected transmitters. Eliminating messages that have infinite capacity paths between their sources and all their desired destinations, what remains is the wireless index coding problem, introduced in \cite{CBIA}.  While, depending on the wireless network topology, the resulting wireless index coding problem can in general be quite  involved, e.g., if the wireless network contains cycles that provide feedback from the output of the bottleneck receiver to the distributed or partially cooperating nodes transmitting to the bottleneck receiver, Fig. \ref{fig:wirelessindexcoding}(b) shows a relatively simple example of the wireless index coding problem that corresponds to the index coding problem of Fig. \ref{fig:generalnet}(b), in the sense that the capacity  of the index coding problem maps directly to the degrees of freedom (DoF)  of the wireless index coding problem. The capacity per message of the index coding problem in Figure \ref{fig:generalnet}(b) is 2/5, as is the DoF value per message for the wireless index coding problem in Fig. \ref{fig:wirelessindexcoding}(b), and in both cases the ``unit" for measurement is the capacity/DoF of the bottleneck link/receiver. The index coding problem normalizes the bottleneck link capacity to unity, so that all rates are measured as multiples of the bottleneck link capacity, and the wireless index coding problem normalizes the number of signal dimensions (DoF) available to the bottleneck receiver to unity, and all DoF are measured as multiples of the bottleneck DoF. As explained in \cite{CBIA}, the relationship between the index coding problem and the wireless index coding problem goes much further, and much more can be said about their similarities and differences. For instance, if full cooperation is allowed between all sources directly transmitting to the bottleneck receiver in a wireless index coding problem, the DoF of the resulting network is the same as the capacity of the corresponding index coding problem (in their respective units). The DoF of the  wireless index coding problem are, in general, bounded above by the capacity of the index coding problem. It also highlights the main difference between the index coding problem and the wireless index coding problem --- all sources are necessarily allowed to fully cooperate in the former because the bottleneck transmitter has full knowledge of all messages, but not necessarily in the latter (depending on the topology of the original network in Figure \ref{fig:wirelessindexcoding}(a)). However, if the index coding problem has a capacity optimal vector linear coding solution that can be translated to the complex field, then the same solution may be applied in the wireless index coding problem as well. This is because vector linear solutions are comprised of a superposition of separately encoded messages, and a superposition over complex field is naturally provided by the wireless medium \cite{CBIA}. Somewhat surprisingly, this is a very common situation, e.g., all the instances of the index coding problems studied in this paper have capacity optimal vector linear coding solutions that translate to the complex field, thereby simultaneously providing the DoF characterization for the corresponding wireless index coding problem.

In the wireless index coding problem discussed above,  the bottleneck is concentrated at one receiver, lending the bottleneck a multiple access character. Another formulation of the wireless index coding problem is also conceivable where the bottleneck may be concentrated at one transmitter, e.g., all receivers experience additive noise and there is only one transmitter with finite power (all other transmitters have infinite power), which would lend the bottleneck a broadcast character, and which could be a similarly interesting and promising research avenue. 

The motivation for studying single bottleneck networks in both wired and wireless settings is evident from an information-theoretical perspective as a stepping stone to a broader understanding of communication networks. What is surprising is that the index coding problem, in spite of its simple formulation, not only already captures much of the complexity of the full-fledged network capacity problem, but also contains a class of problems known as cellular blind interference alignment \cite{CBIA} problems (CBIA) which are of immediate practical interest for cellular wireless networks. Indeed, it is the CBIA setting that  motivates most of the instances of index coding that we solve in this work.

\subsection{Cellular Blind Interference Alignment Problem}
Consider, as an example, the cellular downlink setting shown in Figure \ref{fig:5cellpic}(a) comprised of 5 partially overlapping cells depicted as circles, with each circle containing a transmitter (base station) near its center, shown as a black square, and 3 receivers (users), shown as white squares. Propagation path loss is modeled by the assumption that each base station transmitter (black square) is only heard within the circular region defining its own cell. This gives rise to the connectivity pattern where each transmitter can be heard by three receivers and each receiver can hear three transmitters. The resulting wireless network connectivity is shown in Figure \ref{fig:5cellpic}(b) where the links show the non-zero channel coefficients. The  knowledge of non-zero channel coefficient values, which are assumed to be drawn from identical distributions,  is not available to the (blind) transmitters. Depending upon the message sets, e.g., whether each transmitter sends a message to only one user in its cell, or whether each transmitter sends 3 independent messages to the 3 users in its cell, we have the partially connected interference channel or X channel setting, respectively. Since both settings will benefit significantly from interference alignment and no knowledge of non-zero channel coefficient values is assumed, this is known as the cellular blind interference alignment  problem \cite{CBIA}.

\begin{figure}[!h]\centering
\includegraphics[width=6in]{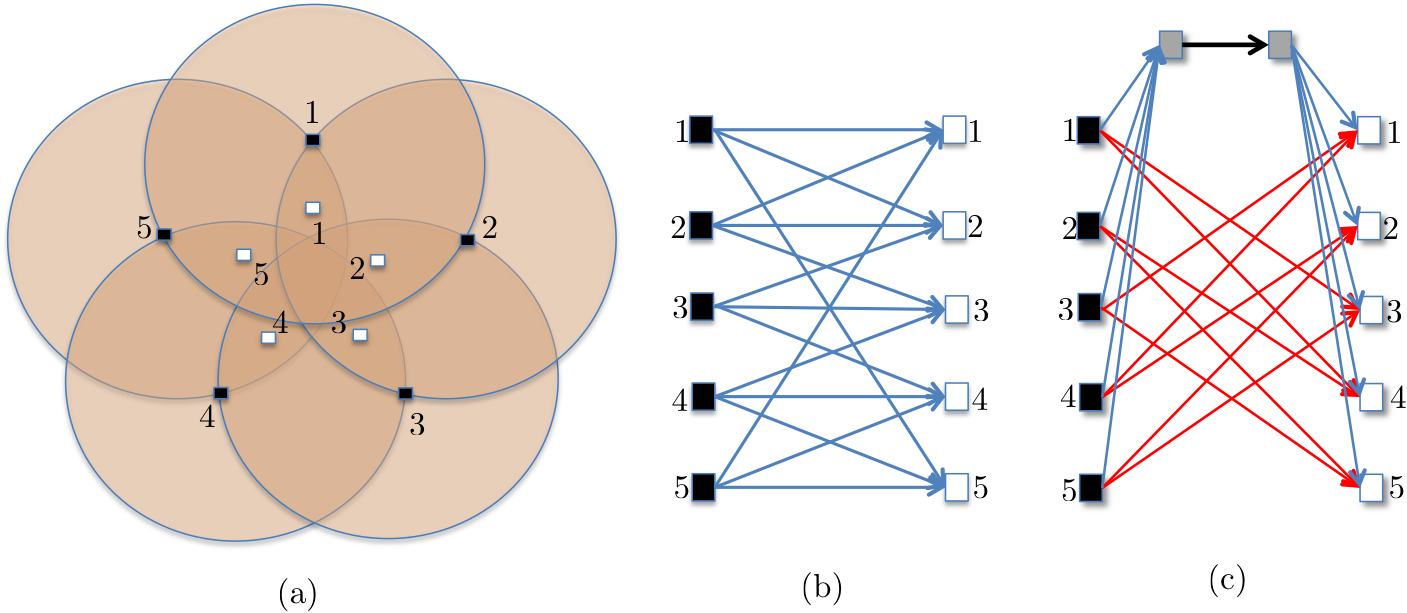}
\caption{\small Cellular blind interference alignment problem: (a) Cellular layout governing the connectivity pattern, (b) Locally connected network representation, (c) Corresponding index coding problem}\label{fig:5cellpic}
\end{figure}

The key to the CBIA problem lies in its close relationship to a corresponding index coding problem. For the CBIA problem of Figure \ref{fig:5cellpic}(a) and Figure \ref{fig:5cellpic}(b), the corresponding index coding problem is shown in Figure \ref{fig:5cellpic}(c). In the corresponding index coding problem, the graph connecting black (transmitter) and white (receiver) nodes in Figure \ref{fig:5cellpic}(c) is complementary to that in Figure \ref{fig:5cellpic}(b), i.e., a black and white node pair is connected in the index coding problem of Figure \ref{fig:5cellpic}(c) by an antidote link if and only if it is not connected in the  locally connected network representation of Figure \ref{fig:5cellpic}(b). Thus, antidote links which allow a receiver to subtract from its received signal the contribution from the corresponding undesired messages, especially  in a superposition based coding scheme as must be the case in the CBIA problem due to the distributed nature of the source nodes and the additive nature of the wireless medium, play the same role as a zero channel coefficient value in the local connectivity pattern. The main difference in the index coding problem is that full cooperation between sources is allowed, which makes the index coding capacity an outer bound on the DoF of the CBIA setting. However, since the optimal solution to the index coding problem is often based on  vector linear coding and robust to the choice of the underlying field as the real or complex field, as we will see, e.g., for the problem in Fig. \ref{fig:5cellpic} as well as for several other index coding problems, the optimal solution of the index coding problem will automatically provide the optimal  solution to the CBIA problem as well. 

Finally, we note that the CBIA problem has a natural counterpart in the network coding setting, which is a \emph{blind linear network coding} problem. Consider, for example, a wired network where the intermediate nodes perform random linear network coding, creating a linear channel matrix comprised of  polynomials in the network coding coefficients of the intermediate nodes. Recent work \cite{Das_Vishwanath_Jafar_Markopoulou_ISIT, Ramakrishnan_Das_Maleki_Markopoulou_Jafar_Vishwanath, Meng_Ramakrishnan_Markopoulou_Jafar, Han_Wang_Shroff_Allerton11} has investigated how to exploit the knowledge of these effective channel matrices at the source nodes to achieve interference alignment by linear precoding, in a manner that mimics the wireless interference channel. However, suppose that the source nodes only know the end-to-end connectivity but do not know the channel coefficient values, e.g., because they do not keep track of all the network coding coefficients. Aside from the significant distinction of working over finite fields, the resulting \emph{blind network coding} problem is virtually identical to the CBIA problem, and is similarly related to a corresponding index coding problem. For instance, if we take the graph of Figure \ref{fig:5cellpic}(b) to represent the resulting connectivity of a blind network coding problem, with no knowledge of the channel realizations at the transmitters, then the corresponding index coding problem is shown in Figure \ref{fig:5cellpic}(c). Solving this index coding problem will solve both the corresponding CBIA problem as well as the corresponding blind network coding problem.

The preceding discussion of closely related problems sheds light on the significance of the index coding problem. In spite of having only one finite capacity link, the richness  of the index coding problem is  evident, and its complexity is further underscored in the result by Rouayheb et al. in \cite{Rouayheb_Sprintson_Georghiades} where an equivalence is established between the general network coding problem restricted to linear codes, and the index coding problem. On the other hand, the relative simplicity of the index coding problem setting does make this setting more tractable. As an example, we note that the equivalence of $\epsilon$-error capacity and zero-error capacity, which remains open in the general network coding problem, has been established for the index coding problem \cite{Langberg_Effros_Allerton11}.

\section{Index Coding -- Problem Formulation} \label{sec:Problem Formulation}
The index coding problem consists of a set of $M$ independent messages $$\mathcal{W}=\{W_1, W_2, \ldots, W_M\},$$ and a set of $K$ destination nodes $$\mathcal{D}=\{D_1, D_2, \cdots, D_K\},$$ with the $k^{th}$ destination node $D_k$ identified as 
$$D_k=(\mathcal{W}_k, \mathcal{A}_k)$$ where $\mathcal{W}_k\subseteq \mathcal{W}$ is the set of messages desired by $D_k$,  the  set $\mathcal{A}_k\subset\mathcal{W}$ is comprised of the messages  available to destination $D_k$ as side information (antidotes), and $\mathcal{W}_k\cap\mathcal{A}_k=\phi$, i.e., a destination node does not desire a message that is already available to it.

An $(\mathcal{S}, n, \mathcal{R})$ index coding scheme corresponds to the choice of a finite alphabet $\mathcal{S}$ of cardinality $|\mathcal{S}|> 1$,  a coding  function, $f$, and a decoding function $g_{k,i}$, for each desired message $W_i$ at each destination $D_k$. The coding function $f$ maps all the messages to the sequence of transmitted symbols
$$ f(W_1, W_2, \cdots, W_M)={ S}^n$$
where  ${S}^n\in{\mathcal{S}^n}$ is the sequence of symbols transmitted over $n$ channel uses. Here, $\forall m\in\{1,2,\cdots, M\}$, message $W_m$ is a random variable uniformly distributed over the set
$$W_m\in\{1, 2, \cdots, |\mathcal{S}|^{nR_m}\},$$
 and  $\mathcal{R}$ is simply the rate vector
$$\mathcal{R}=(R_1, R_2, \cdots, R_M).$$
At each destination, $D_k$, there is a  decoding function for each desired message
$$g_{k,i}({S}^n,\mathcal{A}_k)=\hat{W}_{k,i}, ~~~\forall i \mbox{ such that } W_i\in\mathcal{W}_k.$$
The decoding is said to be in error if any desired message is decoded incorrectly. The probability of error  is 
$$P_e=1-\mbox{Prob}[\hat{W}_{k,i}=W_i, ~~~\forall i,k \mbox{ such that } W_i\in\mathcal{W}_k].$$

A rate tuple $\mathcal{R}=(R_1, R_2, \ldots, R_M)$ is said to be \emph{achievable} if for every $\epsilon,\delta>0$ there exists a $(\mathcal{S}, n, (\overline{R}_1, \overline{R}_2, \cdots, \overline{R}_M))$ coding scheme, for some $\mathcal{S}, n$, such that $\forall m\in\{1,2,\ldots,M\}$, $\overline{R}_m\geq R_m-\delta$, and the probability of error $P_e\leq\epsilon$. The capacity region of the index coding problem is defined as the set of all achievable rate tuples $(R_1, R_2,\ldots, R_M)$ and is denoted by $\mathcal{C}$. 

As an example of the notation, in Figure \ref{fig:generalnet}(b), we have $M=K=5$, $\mathcal{W}=\{W_1, W_2, \cdots, W_5\}$, $D_1=(\{W_1\}, \{W_5, W_2\})$, $D_2=(\{W_2\}, \{W_1, W_4\}), D_3=(\{W_3\}, \{W_2,W_4\})$, $D_4=(\{W_4\},\{W_3,W_5\})$, $D_5=(\{W_5\}, \{W_4,W_1\})$.

 The definition of capacity region presented above is in the classical sense of asymptotically vanishing probability of error, also known as $\epsilon$-error capacity, which can in general be larger than the zero-error capacity where only achievable schemes with $P_e=0$  are allowed. Even for networks comprised of noise-less links, e.g., in the  network coding problem, the equivalence of the two is not known. Remarkably, for index coding, it has been shown by Langberg and Effros in \cite{Langberg_Effros_Allerton11} that the $\epsilon$-error capacity is the same as the zero-error capacity.

It is noteworthy that the  choice  of the alphabet, $\mathcal{S}$, is inconsequential for the capacity region, e.g., one could restrict $\mathcal{S}=\{0,1\}$ without affecting the capacity as defined above, or interpret multiple channel uses as a single channel use over a larger alphabet, again without impacting capacity. Note, in particular, that the rates are measured in base-$\mathcal{S}$ units, and the capacity of the bottleneck link is automatically normalized to one unit.  The choice of alphabet is important to distinguish between linear and non-linear coding schemes, or scalar and vector coding, as will be explained later in this section.

We refer to the general index coding problem statement presented above  as the {\bf multiple groupcast} setting, where each message may be desired by multiple destination nodes. This general term includes within its scope both the multiple unicast setting where each message is desired by exactly one destination node, and the multiple multicast setting where each message is desired by all destination nodes. 

\bigskip

\noindent{\it Notation:} For any subset of messages $\overline{\mathcal{W}}\subset\mathcal{W}$ we define the compact notation,
$$\overline{\mathcal{W}}^c\define\mathcal{W}-\overline{\mathcal{W}}$$ as the set of messages in $\mathcal{W}$ that are not in $\overline{\mathcal{W}}$.
Further, we use the compact notation $W_{i,j}=\{W_i, W_j\}, W_{i,j,k}=\{W_i,W_j,W_k\}$, etc.
Also we define the compact notation, $R_{i_{1:L}}=R_{i_1}+R_{i_2}+\ldots+R_{i_L}$, $\mathcal{K}=\{1,2,\ldots,K\}$ and $\mathcal{M}=\{1,2,\ldots,M\}$

A few important classes of the index coding problem are formalized next.

\begin{definition}{---\bf Multiple Unicast Index Coding}\\
The index coding problem is called a multiple unicast index coding problem if and only if 
\begin{eqnarray}
\forall k_1,k_2\in\mathcal{K}, k_1\neq k_2, &&\mathcal{W}_{k_1}\cap\mathcal{W}_{k_2}=\phi
\end{eqnarray}
\end{definition}
In other words, a multiple unicast index coding problem is one where no message is desired by more than one destination.

\begin{definition}{---\bf Scalar Index Coding Scheme}\\ 
An  $(\mathcal{S}, n, \mathcal{R})$ index coding scheme is called a scalar coding scheme if and only if 
\begin{eqnarray}
\mathcal{R}&=&\left(\frac{1}{n}, \frac{1}{n}, \cdots, \frac{1}{n}\right)
\end{eqnarray}
\end{definition}
In other words, a scalar index coding scheme sends one symbol for each message over $n$ channel uses. 
\begin{definition}{--- \bf Linear Index Coding Scheme}\\
\noindent A linear $(\mathcal{S}, n, \mathcal{R})$   index coding scheme, achieving the rate vector $\mathcal{R}=\left(\frac{L_1}{n}, \frac{L_2}{n}, \cdots, \frac{L_M}{n}\right)$ over $n$ channel uses, corresponds to a choice of 
\begin{enumerate}
\item a finite field $\mathbb{F}=\mathcal{S}$ as the alphabet
\item ${\bf V}_m\in\mathbb{F}^{n\times L_m}, \forall m\in\mathcal{M}$ as precoding matrices
\item   ${\bf U}_{m,k}\in\mathbb{F}^{L_m\times n}, \forall m\in\mathcal{M}, \forall k$ such that $W_m\in\mathcal{W}_k$, as receiver combining matrices
\end{enumerate}
such that the following properties are satisfied
\begin{eqnarray}
\mbox{Property 1: }&& {\bf U}_{m,k}{\bf V}_{i}=0, ~~~~\forall m, i \in \mathcal{M},  k \in \mathcal{K} \mbox{ such that } m\neq i, W_{m}\in\mathcal{W}_k, W_{i}\notin\mathcal{A}_k \nonumber\\
\mbox{Property 2: }&& \mbox{det}\left({\bf U}_{m,k}{\bf V}_{m}\right)\neq 0, ~~~~\forall m \in \mathcal{M},  k \in \mathcal{K} \mbox{ such that } W_m\in\mathcal{W}_k\nonumber
\end{eqnarray}
where all operations are over $\mathbb{F}$.
\end{definition}
The  transmitted symbol sequence $S^n\in\mathbb{F}^{n\times 1}$ in a linear index coding scheme is 
\begin{eqnarray}
S^n = \sum_{m=1}^{M} \mathbf{V}_{m} \mathbf{X}_{m}
\end{eqnarray}
where $\mathbf{X}_i=(x_{m,1}, x_{m,2}, \cdots, x_{m,L_m})^T \in \mathbb{F}^{L_m\times 1}$ is an $L_m \times 1$ vector representing $W_m$. In other words, message  $W_m$ is split into $L_m$ independent scalar streams, each of which carries one symbol from $\mathbb{F}$, and is transmitted along the corresponding column vectors (the ``beamforming" vectors) of the precoding matrix for $\mathbf{V}_m$. The decoding operation for message $W_m\in\mathcal{W}_k$, desired at destination $D_k$, is 
\begin{eqnarray}
\hat{\mathbf{X}}_m&=&\left({\bf U}_{m,k}{\bf V}_{m}\right)^{-1}\mathbf{U}_{m,k}\left(S^n-\sum_{W_i\in\mathcal{A}_k}\mathbf{V}_{i} \mathbf{X}_{i}\right)
\end{eqnarray}
Thus, first the contribution from undesired messages available as antidotes, $W_i\in\mathcal{A}_k$,  is eliminated from $S^n$,  then the remaining undesired symbols are zero-forced by Property 1, and finally the desired symbols $\mathbf{X}_m$ are recovered by the invertibility of ${\bf U}_{m,k}{\bf V}_{m}$, which is guaranteed by Property 2.

 Note that a linear encoding scheme, as explained above, is a zero-error encoding scheme. The linear coding scheme described above is also called \emph{vector} linear coding schemes. This includes the special case where $L_i=1, \forall i\in\{1,2,\cdots, M\}$, which is called  a \emph{scalar} linear encoding scheme. 

\bigskip

\section{Index Coding as an Interference Alignment Problem}

In wireless communications, interference is a natural phenomenon because of the broadcast nature of the medium. In wireline network communications, interference occurs because of multiple data streams contending for a common link. Indeed this scenario is best exemplified by the index coding problem where a single bottleneck link is shared by all data streams. In wireless systems, interference alignment provides surprising gains by exploiting the inherent diversity in distributed linear systems - i.e., the notion that every receiver sees a different alignment of signal dimensions, and therefore signals can be designed to align at one receiver and stay separable at a different receiver. In the index coding problem too, we exploit the inherent diversity that exists among the receivers because the set of antidotes is different for different receivers. This diversity ensures that even if two signals align along the botteleneck link, they can be decoded at the desired receivers if they have the appropriate set of antidotes. This alignment frees up the available dimensions (on the botteleneck link) for other messages and hence makes the system more efficient.

To better understand the role of interference alignment in the index coding problem, let us examine a linear index coding achievable scheme. For simplicity of exposition, let us consider a symmetric rate setting, i.e., $L_m=L, \forall m\in\mathcal{M}$. For any subset of the set of messages $\mathcal{B}\subset\mathcal{W}$, we use the following notation.
$$\mathcal{V}_{\mathcal{B}} = \{\mathbf{v} \in \mathbb{F}^{n}: \mathbf{v} \mbox{ is a column vector of }\mathbf{V}_{i}, i \in \mathcal{B}\},$$
i.e., 
$\mathcal{V}_{\mathcal{B}}$ is the set of all column vectors which belong to at least one matrix $\mathbf{V}_{i}: i \in \mathcal{B}$.

 Receiver $k$ receives a $n$ dimensional vector $S^n$ which is a linear combination of the $ML$ column vectors of $\mathcal{V}_\mathcal{W}.$ If  all the $ML$ column vectors are linearly independent, then, clearly the desired signal is resolvable and a symmetric rate of $1/M$ per message is achievable. Note that this is the rate achieved by routing; the routing solution is in fact one such realization of $\mathcal{V}_\mathcal{W}.$ In general, however, the messages may be resolvable even if the $ML$ column vectors of $\mathcal{V}_\mathcal{W}$ are linearly \emph{dependent}, because of the presence of antidotes. In fact, a rate greater than $1/M$ is possible only by making the column vectors of $\mathcal{V}_\mathcal{W}$ linearly \emph{dependent}. Consider, for instance a receiver, say destination $D_k$ that wants to decode messages $\mathcal{W}_k$ and has antidotes for messages $\mathcal{A}_k$. It receives the linear combination of $ML$ vectors in an $n$ dimensional space, of which it can cancel the impact of $|\mathcal{A}_{k}| L $ vectors, $\mathcal{V}_{\mathcal{A}_k}$, using the antidotes. Therefore, from the perspective of  destination $D_k$, it observes (after cancellation), the linear combination of $|\mathcal{W}_k|L$ desired vectors along $\mathcal{V}_{\mathcal{W}_k}$, and $(M-|\mathcal{A}_{k}|-|\mathcal{W}_k|)L$ interfering vectors along the columns of $\mathcal{V}_{\mathcal{W}-(\mathcal{A}_k\cup\mathcal{W}_k)}$. A necessary condition for the resolvability of messages $\mathcal{W}_k$ at destination $D_k$ can be expressed as  
\begin{equation} \mbox{span}(\mathcal{V}_{\mathcal{W}_k}) \cap \mbox{span}(\mathcal{V}_{\mathcal{W}-(\mathcal{A}_k\cup\mathcal{W}_k)}) = \{0\}
\label{eq:resolvability}
\end{equation}
This means that a necessary condition for resolvability at destination $D_k$, is that the dimension of interference $\mbox{span}(\mathcal{V}_{\mathcal{W}-(\mathcal{A}_k\cup\mathcal{W}_k)})$ should be smaller than $n - |\mathcal{W}_k|L$ (because the vectors are all observed in an $n$ dimensional space). Clearly, if $n - |\mathcal{W}_k|L < |{\mathcal{W}-(\mathcal{A}_k\cup\mathcal{W}_k)}|L,$  then, the interfering vectors need to \emph{align}  in an $n-|\mathcal{W}_k|L $ dimensional space.

Next we illustrate the role of interference alignment in index coding, with a series of examples presented in increasing order of complexity.

\subsection{Example 1: Scalar linear index coding with One-to-One Alignment}
\begin{figure}[!h] \centering
\includegraphics[width=3.2in]{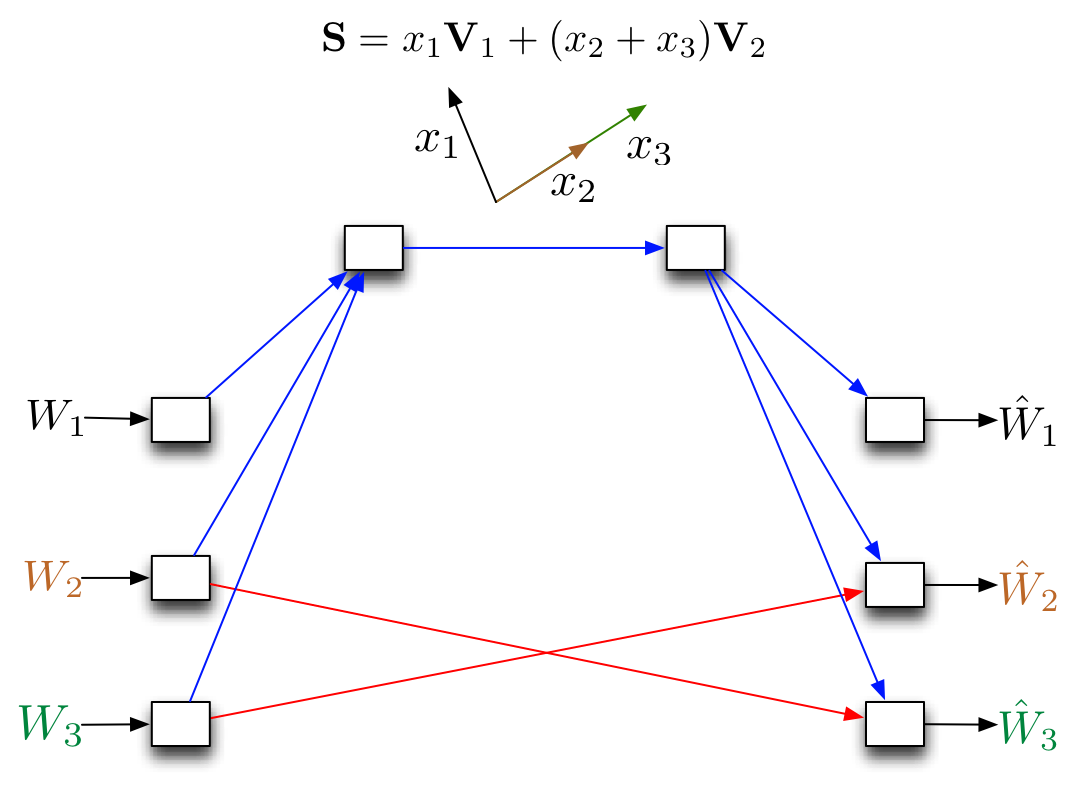}
\caption{\small A simple example where interference alignment is useful in the index coding problem. In the example, the alignment is realized with $\mathbf{V}_{2}=\mathbf{V}_{3}.$ The alignment enables transmission of $3$ scalars to the $3$ corresponding users in a $2$ dimensional vector space, ensuring that a rate of 1/2 is achievable. The field $\mathbb{F}$ can be chosen arbitrarily.}
\label{fig:AlignmentIndex}
\end{figure}
A simple interference alignment solution for an index coding problem  is demonstrated in Figure \ref{fig:AlignmentIndex}, where $K=3$ and each user sends $L=1$ vector. Because of alignment of $\mathbf{V}_{2}$ and $\mathbf{V}_{3},$ user $1$ is able to resolve $x_1.$ For instance one may choose $\mathbf{V}_1=[0,1]^T$, $\mathbf{V}_2=\mathbf{V}_3=[1,0]^T$, so that the two transmitted symbols on the bottleneck link are $S^2=(S_1, S_2)=(x_2+x_3, x_1)$, from which each destination is able to recover its desired message. Since only one symbol is sent per message, this is an example of a scalar linear index coding solution. Furthermore, since the alignment of vectors takes place in a one-to-one fashion, i.e., $\mathbf{V}_2$ aligns with $\mathbf{V}_3$, we refer to this as a one-to-one alignment solution, to be distinguished from the subspace alignment solutions to be presented soon.
 
 \subsection{Example 2: Vector linear index coding with One-to-One Alignment}\label{sec:example2}
 The next index coding example comes from the CBIA setting shown in Figure \ref{fig:5cellpic}(a), (b), (c). Depending upon whether each base station transmitter has a message only for one corresponding receiver, or an independent message for each of the receivers that are within-range, we have the interference channel or the X channel setting, respectively. Here we consider the interference channel setting.
\begin{figure}[h] \centering
\includegraphics[width=5.9in]{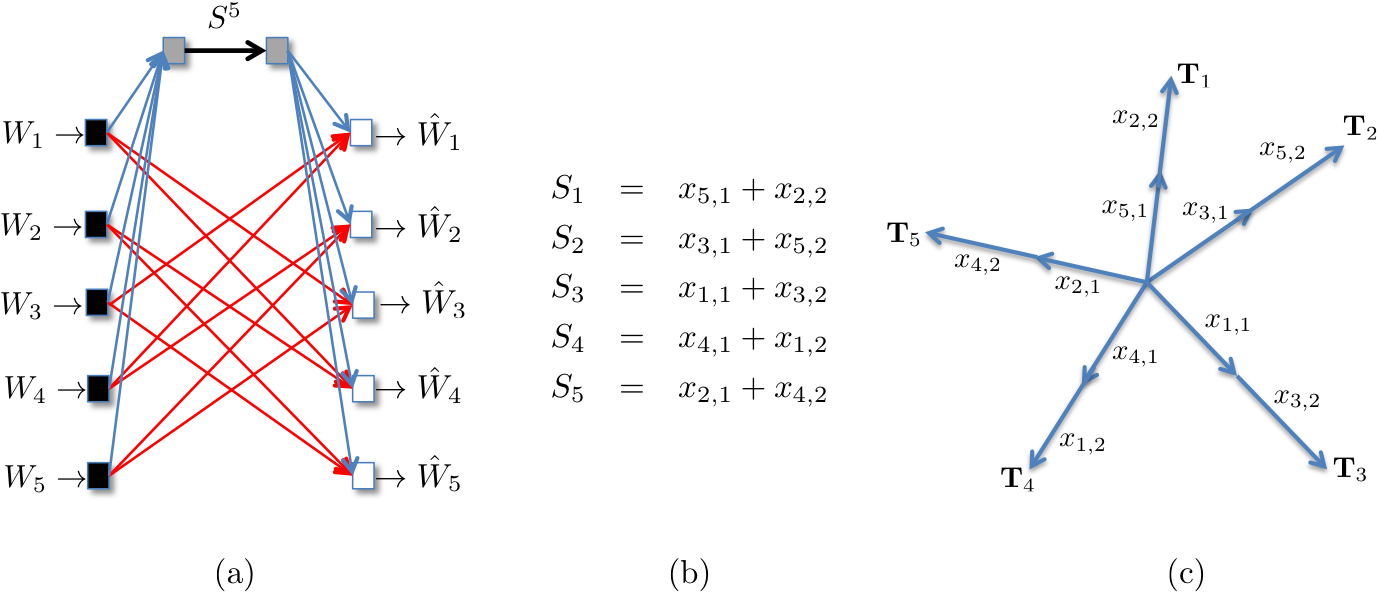}
\caption{\small Index coding problem corresponding to a CBIA interference channel setting, (a) Desired message and antidote sets, (b) Capacity optimal solution shows the 5 dimensional $S^5$ space, along which 2 symbols per message are sent with one-to-one pairwise alignments along the 5 orthogonal basis vectors. The  field $\mathbb{F}$ can be chosen arbitrarily.}
\label{fig:indexIAex2}
\end{figure}

The interference channel setting means that base station (source) $i,~1\leq i \leq 5$, has only one message, $W_i$, for its corresponding receiver (destination) $i$. So there are $M=5$ distinct messages and $K=5$ destinations. The goal is to achieve a symmetric rate of $\frac{2}{5}$ per message, which is also the capacity optimal solution. The problem is essentially identical to the setting considered  in \cite{Alon_Hasidim_Lubetzky_Stav_Weinstein} to show that vector linear index coding outperforms scalar linear index coding. The achievable scheme is a vector linear scheme operating over 5 channel uses.  The precoding vectors $\mathbf{V}_i, 1\leq i\leq 5$ are $5 \times 2$ matrices and $\mathbf{X}_i$ is a $2 \times 1$ vector $\mathbf{X}_i=[x_{i,1},x_{i,2}]^T$ representing $W_i$. Therefore, the transmitted symbol sequence $S^5$ is
\begin{eqnarray}
S^5=\sum_{i=1}^{5}\mathbf{V}_i\mathbf{X}_i
\end{eqnarray}
To see why interference alignment is necessary, note that each destination has access to the $5$-dimensional symbol $S^5$, and it also observes $2$ messages  as cognitive information. After removing the known streams from $S^5$ the receiver is left with $6$ remaining unknown symbols in a $5$-dimensional space. Since each destination is interested in two desired streams, these desired streams span $2$-dimensional space and the remaining $4$ streams that constitute interference, must align in way that they occupy at most a $3$-dimensional space. 

Suppose $\mathbf{T}_1,\mathbf{T}_2,\mathbf{T}_3,\mathbf{T}_4,\mathbf{T}_5$ are $5$ linearly independent vectors over the $5$-dimensional space. These $5$ vectors can be chosen to be the columns of $5 \times 5$ identity matrix. At destination 1, messages $W_2$ and $W_5$,  each composed of two independent scalar streams, should align such that they together occupy a 3 dimensional space. One way to do so is to perfectly align the precoding vector of one of two streams of $W_{2}$ with one of two streams of $W_5$, e.g., $\mathbf{V}_{2,2}=\mathbf{V}_{5,1}=\mathbf{T}_1$. Similarly, in order to satisfy the alignment constraint at all the destinations, the precoding vectors are chosen to be 

\begin{eqnarray}
\mathbf{V}_{2,2}=\mathbf{V}_{5,1}=\mathbf{T}_1,~\mathbf{V}_{5,2}=\mathbf{V}_{3,1}=\mathbf{T}_2,~
\mathbf{V}_{3,2}=\mathbf{V}_{1,1}=\mathbf{T}_3,~\mathbf{V}_{1,2}=\mathbf{V}_{4,1}=\mathbf{T}_4,~
\mathbf{V}_{4,2}=\mathbf{V}_{2,1}=\mathbf{T}_5
\end{eqnarray}

Thus,  the 4 undesired signal vectors are aligned at each destination such that they occupy only 3 dimensions. Now what we need to prove is the resolvability of desired messages at the corresponding destinations. They are resolvable because at each destination $i$ the desired messages are received in a space  $\mathcal{D}_i$ which is linearly independent from the space of interfering messages called $\mathcal{I}_i$.
\begin{eqnarray}
&&\mathcal{D}_1=\left[\begin{array}{ccc}
0&0\\
0&0\\
1&0\\
0&1\\
0&0
\end{array}\right],~~
\mathcal{I}_1=\left[\begin{array}{ccc}
1&0&0\\
0&1&0\\
0&0&0\\
0&0&0\\
0&0&1
\end{array}\right]~~
\mathcal{D}_2=\left[\begin{array}{ccc}
0&1\\
0&0\\
0&0\\
0&0\\
1&0
\end{array}\right],~~
\mathcal{I}_2=\left[\begin{array}{ccc}
0&0&0\\
1&0&0\\
0&1&0\\
0&0&1\\
0&0&0
\end{array}\right]\nonumber\\
&&\mathcal{D}_3=\left[\begin{array}{ccc}
0&0\\
1&0\\
0&1\\
0&0\\
0&0
\end{array}\right],~~
\mathcal{I}_3=\left[\begin{array}{ccc}
1&0&0\\
0&0&0\\
0&0&0\\
0&1&0\\
0&0&1
\end{array}\right]~~
\mathcal{D}_4=\left[\begin{array}{ccc}
0&0\\
0&0\\
0&0\\
1&0\\
0&1
\end{array}\right],
\mathcal{I}_4=\left[\begin{array}{ccc}
1&0&0\\
0&1&0\\
0&0&1\\
0&0&0\\
0&0&0
\end{array}\right] \nonumber\\
&&\mathcal{D}_5=\left[\begin{array}{ccc}
1&0\\
0&1\\
0&0\\
0&0\\
0&0
\end{array}\right],~~
\mathcal{I}_5=\left[\begin{array}{ccc}
0&0&0\\
0&0&0\\
1&0&0\\
0&1&0\\
0&0&1
\end{array}\right]
\end{eqnarray}
Finally, if we choose the receiver combining matrices as follows, we can easily verify that both property 1 and property 2 are satisfied
\begin{eqnarray}
&&\text{At destination 1:}~~~~\mathbf{U}_{1,1}=\left[\begin{array}{ccccc}
0&0&1&0&0\\
0&0&0&1&0
\end{array}\right],~~
\text{At destination 2:}~~~~\mathbf{U}_{2,2}=\left[\begin{array}{ccccc}
0&0&0&0&1\\
1&0&0&0&0
\end{array}\right] \nonumber\\
&&\text{At destination 3:}~~~~\mathbf{U}_{3,3}=\left[\begin{array}{ccccc}
0&1&0&0&0\\
0&0&1&0&0
\end{array}\right],~~
\text{At destination 4:}~~~~\mathbf{U}_{4,4}=\left[\begin{array}{ccccc}
0&0&0&1&0\\
0&0&0&0&1
\end{array}\right]\nonumber\\
&&\text{At destination 5:}~~~~\mathbf{U}_{5,5}=\left[\begin{array}{ccccc}
1&0&0&0&0\\
0&1&0&0&0\end{array}\right]
\end{eqnarray}
Note that the  field can be chosen arbitrarily, i.e., any choice of $\mathbb{F}$ works for the linear solution presented above. The solution also translates to the real or complex fields, thereby establishing the DoF for the corresponding CBIA problem as $2/5$ per message.

\subsection{Example 3: Scalar linear index coding with Subspace Alignment}\label{sec:Xexample}

Like the previous example, this example also corresponds to the CBIA problem in Fig. \ref{fig:5cellpic}(a),(b),(c). The difference is in the message sets. While in the previous example, each base station served only one receiver, here we assume that each base station has 3 independent messages, one for each of the 3 receivers that are within receiving range of the base station. In the parlance of wireless networks, while the previous setting is an interference network, the current setting is an $X$ network. In an $X$ network, there is an independent message to be communicated between each transmitter-receiver pair that are within range of each other, i.e.,  have a non-zero channel coefficient between them. 

The index coding problem for the $X$ network setting is also well defined. There is an independent message between each source and destination pair that are not connected via an antidote link. The specific index coding problem that we solve in this example is shown in Figure \ref{fig:indexIAex3}. Notice that an $X$ network setting is a mulitple unicast setting, since each message has a unique source and a unique destination. It is also sometimes referred to as the \emph{all unicast} setting, since here all possible non-trivial unicast flows are simultaneously active (the unicast flows between source destination pairs connected by infinite capacity antidote links are ignored because they trivially have infinite rate). Also note that as always it is possible, without  loss of generality, to represent this index coding problem with only one message per source and only one message per destination by increasing the number of sources and destinations, but we prefer the compact representation  shown in Figure \ref{fig:indexIAex3} which directly reflects the $X$ channel setting. $X$ networks often lead to interesting interference alignment problems. Indeed, that is the case with this example as well, where one-to-one alignment does not suffice and the optimal index coding scheme is a scalar linear subspace alignment scheme. Next we proceed to describe the alignment solution.

\begin{figure}[h] \centering
\includegraphics[width=5.9in]{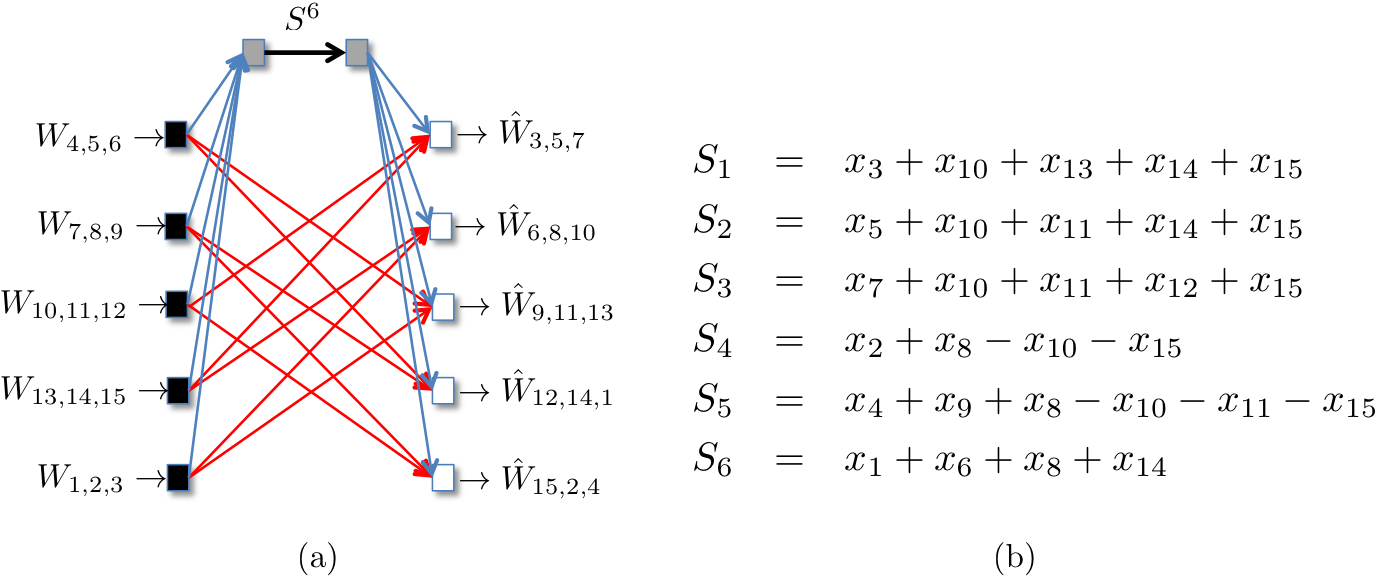}
\caption{\small Index coding problem corresponding to a CBIA X channel setting, (a) Desired message and antidote sets, (b) Capacity optimal solution shows the 6 dimensional $S^6$ space, along which 15 symbols  are sent. The  field $\mathbb{F}$  can be chosen arbitrarily.}
\label{fig:indexIAex3}
\end{figure}

There are $M=15$ distinct messages and $K=5$ destinations. Each black square on the left is a base station (source) sending $3$ distinct messages.  Base station (source) $i, 1 \leq i \leq 4$ sends messages $W_{3i+1:3i+3}$ and the $5$-th  base station  sends $W_{1:3}$. We have 
\begin{eqnarray}
&\mathcal{W}_1=\{W_{3,5,7}\},&\mathcal{A}_1=\{W_{1,2,4,6,8,9}\cup{\mathcal{W}}_1\}^c \nonumber\\
&\mathcal{W}_2=\{W_{6,8,10}\},&\mathcal{A}_2=\{W_{4,5,7,9,11,12}\cup{\mathcal{W}}_2\}^c \nonumber\\
&\mathcal{W}_3=\{W_{9,11,13}\},&\mathcal{A}_3=\{W_{7,8,10,12,14,15}\cup\mathcal{W}_3\}^c\nonumber\\
&\mathcal{W}_4=\{W_{12,14,1}\},&\mathcal{A}_4=\{W_{10,11,13,15,2,3}\cup\mathcal{W}_4\}^c\nonumber\\
&\mathcal{W}_5=\{W_{15,2,4}\},&\mathcal{A}_5=\{W_{13,14,1,3,5,6}\cup\mathcal{W}_5\}^c \nonumber
\end{eqnarray}
Our goal is to achieve the symmetric rate of $\frac{1}{6}$ per message, which is also the capacity of this network (the outer bound follows from a subsequent solution of a broader class of $X$ networks, presented in Section \ref{sec:X}). The achievable scheme is a scalar linear achievable scheme over $6$ channel uses and hence the precoding vectors $\mathbf{V}_i, 1\leq i\leq 15$, are $6 \times 1$ vectors. Therefore, the transmitted symbol sequence $S^6$ is
\begin{eqnarray}
S^6=\sum_{i=1}^{15}\mathbf{V}_ix_i
\end{eqnarray}
 where the scalar symbol $x_i$ represents message $W_i$. To see why interference alignment is necessary, note that each destination has access to the $6$-dimensional symbol $S^6$, and it also observes $6$ messages as cognitive information. After removing the known symbols from $S^6$ the receiver is left with $9$ remaining unknown symbols in a $6$-dimensional space. Since each destination is interested in $3$ desired symbols,  they must occupy a $3$-dimensional space, leaving only $3$ dimensions within which the remaining $6$ undesired symbols that constitute interference, must align.

Suppose $\mathbf{T}_1,\mathbf{T}_2,\mathbf{T}_3,\mathbf{T}_4,\mathbf{T}_5,\mathbf{T}_6$ are $6$ linearly independent vectors over $6$-dimensional space. These $6$ vectors can be chosen to be the columns of $6 \times 6$ identity matrix. Since $W_{3,5,7}$ are desired at destination 1, they should span a $3$-dimensional space and therefore are sent over  $3$ linearly independent vectors $\mathbf{T}_1,\mathbf{T}_2,\mathbf{T}_3$. On the other hand, $W_{2,4}$ are the interfering messages at destination $1$ and both are desired at destination $5$. So these two messages should be sent over vectors that are linearly independent of each other and linearly independent from $\mathbf{T}_1,\mathbf{T}_2,\mathbf{T}_3$. Therefore, we send $W_{2},W_{4}$ over $\mathbf{T}_4,\mathbf{T}_5$, respectively. Also $W_{1}$ is desired at destination $4$ and is considered as an interfering message at both destination $1$ and destination $5$. So $W_{1}$ should be sent over a vector that is linearly independent from $\mathbf{T}_1,\ldots,\mathbf{T}_5$ and is chosen to be  sent along $\mathbf{T}_6$.  

Interfering messages $W_{1},W_{2},W_{4}$ occupy a $3$-dimensional space at destination $1$. The remaining interfering messages at destination $1$, i.e., $W_{6,8,9}$, should be sent along  precoding vectors chosen such that they remain in the same span as $\text{span}(\mathbf{T}_4,\mathbf{T}_5,\mathbf{T}_6)$, i.e.,

\begin{eqnarray} \label{eqn:alignexample0}
\text{span} (\mathbf{V}_6,\mathbf{V}_8,\mathbf{V}_9) \in \text{span}(\mathbf{T}_4,\mathbf{T}_5,\mathbf{T}_6)
\end{eqnarray}
One way to satisfy (\ref{eqn:alignexample0}) is by choosing $\mathbf{V}_6=\mathbf{T}_6$, $\mathbf{V}_9=\mathbf{T}_5$. The remaining precoding vector is designed later to satisfy the following 
\begin{eqnarray} \label{eqn:alignexample}
\text{At destination}~1:\text{span} (\mathbf{V}_8) \in \text{span}(\mathbf{T}_4,\mathbf{T}_5,\mathbf{T}_6) \label{eqn:span1}
\end{eqnarray}
Similarly, to satisfy the requirement of aligning interfering messages in $3$-dimensional space at different destinations, we design the precoding vectors as follows:
\begin{eqnarray}
\text{At destination}~2:~&&\text{span} (\mathbf{V}_{11}) \in \text{span}(\mathbf{T}_2,\mathbf{T}_3,\mathbf{T}_5) \\
\text{At destination}~3:~&&\text{span} (\mathbf{V}_{14}) \in \text{span}(\mathbf{V}_8,\mathbf{T}_3,\mathbf{V}_{10})\\
\text{At destination}~4:~&&\text{span} (\mathbf{V}_2)=\text{span} (\mathbf{T}_4) \in \text{span}(\mathbf{V}_{11},\mathbf{V}_{10},\mathbf{T}_{1})\\
\text{At destination}~5:~&&\text{span} (\mathbf{V}_5)=\text{span} (\mathbf{T}_2) \in \text{span}(\mathbf{V}_{14},\mathbf{T}_1,\mathbf{T}_{6}) \label{eqn:span5}\\ 
&&\mathbf{V}_{12}=\mathbf{T}_{3},~\mathbf{V}_{15}=\mathbf{V}_{10},~\mathbf{V}_{13}=\mathbf{T}_{1}
\end{eqnarray} 

\noindent leading to the formulation
\begin{eqnarray}
&&\mathbf{V}_8 =a_1\mathbf{T}_4+a_2\mathbf{T}_5+a_3\mathbf{T}_6 \label{eqn:lincomb1}\\
&&\mathbf{V}_{11}=b_1\mathbf{T}_2+b_2\mathbf{T}_3+b_3\mathbf{T}_5 \label{eqn:lincomb2}\\
&&\mathbf{V}_{10} =c_1\mathbf{V}_{14}+c_2\mathbf{T}_3+c_3\mathbf{V}_{8}\label{eqn:lincomb3}\\
&&\mathbf{V}_{10}=d_1\mathbf{V}_{11}+d_2\mathbf{T}_{1}+d_3\mathbf{T}_{4}\label{eqn:lincomb4}\\
&&\mathbf{V}_{14}=e_1\mathbf{T}_1+e_2\mathbf{T}_{2}+e_3\mathbf{T}_6\label{eqn:lincomb5}
\end{eqnarray} 
where $a_1,b_1,\ldots,d_3,e_3$ are linear combination coefficients. In order to satisfy (\ref{eqn:lincomb1})-(\ref{eqn:lincomb5}), the only thing that restricts us from choosing the linear combination coefficients to be random is that $\mathbf{V}_{10}$ should satisfy both (\ref{eqn:lincomb3}) and (\ref{eqn:lincomb4}). If we substitute from (\ref{eqn:lincomb1}),(\ref{eqn:lincomb2}) and (\ref{eqn:lincomb5}) into (\ref{eqn:lincomb3}) and (\ref{eqn:lincomb4}), $\mathbf{V}_{10}$ should satisfy the following two equations 
\begin{eqnarray}
&&\mathbf{V}_{10} =c_1e_1\mathbf{T}_{1}+c_1e_2\mathbf{T}_2+c_2\mathbf{T}_{3}+c_3a_1\mathbf{T}_{4}+c_3a_2\mathbf{T}_{5}+(c_3a_3+c_1e_3)\mathbf{T}_{6}\label{eqn:lincomb6}\\
&&\mathbf{V}_{10}=d_2\mathbf{T}_{1}+d_1b_1\mathbf{T}_{2}+d_1b_2\mathbf{T}_{3}+d_3\mathbf{T}_{4}+d_1b_3\mathbf{T}_{5}\label{eqn:lincomb7}
\end{eqnarray} 
To satisfy (\ref{eqn:lincomb6}) and (\ref{eqn:lincomb7}), we have the following
\begin{eqnarray}
&&d_2=c_1e_1\\
&&d_1b_1=c_1e_2\\
&&d_1b_2=c_2\\
&&d_3=c_3a_1\\
&&d_1b_3=c_3a_2\\
&&c_3a_3+c_1e_3=0
\end{eqnarray}
Clearly, there are many solutions. One of the solutions for these system of nonlinear polynomial equations is $a_1=a_2=a_3=b_1=b_2=c_1=c_2=d_1=d_2=e_1=e_2=e_3=1$, $b_3=c_3=d_3=-1$, according to which,  the precoding vectors are chosen as follows 
\begin{eqnarray}
\mathbf{V}_8&=&\mathbf{T}_4+\mathbf{T}_5+\mathbf{T}_6 \nonumber\\
\mathbf{V}_{11}&=&\mathbf{T}_2+\mathbf{T}_3-\mathbf{T}_5 \nonumber\\
\mathbf{V}_{14}&=&\mathbf{T}_1+\mathbf{T}_2+\mathbf{T}_6 \nonumber\\
\mathbf{V}_{10}&=&\mathbf{T}_1+\mathbf{T}_2+\mathbf{T}_3-\mathbf{T}_4-\mathbf{T}_5 \nonumber
\end{eqnarray} 
Note that this is not a one-to-one alignment solution, e.g., $\mathbf{V}_8$ does not align with $\mathbf{T}_4, \mathbf{T}_5, \mathbf{T}_6$ individually. In fact it is pairwise linearly independent of all three. $\mathbf{V}_8$ aligns only within the subspace spanned by  $\mathbf{T}_4, \mathbf{T}_5, \mathbf{T}_6$. This is referred to as subspace alignment. 

After satisfying all the alignment constraints, we need to prove the resolvability of desired messages at the corresponding destinations. They are resolvable because at each destination $i$ the desired messages are received in a space called $\mathcal{D}_i$ which is linearly independent from the space of interfering messages, $\mathcal{I}_i$.
\allowdisplaybreaks
\begin{eqnarray*}
&&\mathcal{D}_1=[\mathbf{V}_3 ~\mathbf{V}_5~\mathbf{V}_7]=\left[\begin{array}{ccc}
1&0&0\\
0&1&0\\
0&0&1\\
0&0&0\\
0&0&0\\
0&0&0
\end{array}\right],~~
\mathcal{I}_1=\left[\begin{array}{ccc}
0&0&0\\
0&0&0\\
0&0&0\\
1&0&0\\
0&1&0\\
0&0&1
\end{array}\right]\\
&&\mathcal{D}_2=[\mathbf{V}_6 ~\mathbf{V}_8~\mathbf{V}_{10}]=\left[\begin{array}{ccc}
0&0&1\\
0&0&1\\
0&0&1\\
0&1&-1\\
0&1&-1\\
1&1&0
\end{array}\right],~~
\mathcal{I}_2=\left[\begin{array}{ccc}
0&0&0\\
1&0&0\\
0&1&0\\
0&0&0\\
0&0&1\\
0&0&0
\end{array}\right]\nonumber\\
&&\mathcal{D}_3=[\mathbf{V}_9 ~\mathbf{V}_{11}~\mathbf{V}_{13}]=\left[\begin{array}{ccc}
0&0&1\\
0&1&0\\
0&1&0\\
0&0&0\\
1&-1&0\\
0&0&0
\end{array}\right],~~
\mathcal{I}_3=\left[\begin{array}{ccc}
0&0&1\\
0&0&1\\
0&1&0\\
1&0&0\\
1&0&0\\
1&0&1
\end{array}\right]\\
&&\mathcal{D}_4=[\mathbf{V}_{12} ~\mathbf{V}_{14}~\mathbf{V}_{1}]=\left[\begin{array}{ccc}
0&1&0\\
0&1&0\\
1&0&0\\
0&0&0\\
0&0&0\\
0&1&1
\end{array}\right],~~
\mathcal{I}_4=\left[\begin{array}{ccc}
1&0&0\\
0&0&1\\
0&0&1\\
0&1&0\\
0&0&-1\\
0&0&0
\end{array}\right] \nonumber\\
&&\mathcal{D}_5=[\mathbf{V}_{15} ~\mathbf{V}_{2}~\mathbf{V}_{4}]=\left[\begin{array}{ccc}
1&0&0\\
1&0&0\\
1&0&0\\
-1&1&0\\
-1&0&1\\
0&0&0
\end{array}\right],~~
\mathcal{I}_5=\left[\begin{array}{ccc}
1&0&0\\
0&1&0\\
0&0&0\\
0&0&0\\
0&0&0\\
0&0&1
\end{array}\right]
\end{eqnarray*}
Finally, the receiver combining matrices are chosen as follows, so that both property 1 and property 2 are satisfied
\begin{eqnarray}
&&\text{At destination 1:}~~~~\mathbf{U}_{3,1}=[1~0~0~0~0~0],~~\mathbf{U}_{5,1}=[0~1~0~0~0~0],~~\mathbf{U}_{7,1}=[0~0~1~0~0~0] \nonumber \\
&&\text{At destination 2:}~~~~\mathbf{U}_{6,2}=[-1~0~0~-1~0~1] ,~~\mathbf{U}_{8,2}=[1~0~0~1~0~0],~~\mathbf{U}_{10,2}=[1~0~0~0~0~0] \nonumber \\
&&\text{At destination 3:}~~~~\mathbf{U}_{9,3}=[0~1~0~0~1~-1],~~\mathbf{U}_{11,3}=[0~1~0~1~0~-1],~~\mathbf{U}_{13,3}=[1~0~0~1~0~-1]\nonumber \\
&&\text{At destination 4:}~~~~\mathbf{U}_{12,4}=[0~0~1~0~1~0],~~\mathbf{U}_{14,4}=[0~1~0~0~1~0],~~\mathbf{U}_{1,4}=[0~-1~0~0~-1~1]\nonumber \\
&&\text{At destination 5:}~~~~\mathbf{U}_{15,5}=[0~0~1~0~0~0],~~\mathbf{U}_{2,5}=[0~0~1~1~0~0],~~\mathbf{U}_{4,5}=[0~0~1~0~1~0]\nonumber
\end{eqnarray}
Note that the  field can be chosen arbitrarily, i.e., any choice of $\mathbb{F}$ works for the linear solution presented above. The solution also translates to the real or complex fields, thereby establishing the DoF for the corresponding CBIA problem as $1/6$ per message.

\section{Background and  Results}
In this  section we present the results of this work along with the relevant background comprised of related prior work. Our first result is the insufficiency of linear codes for the multiple unicast index coding problem.
\subsection{Insufficiency of Linear Codes for Multiple Unicast Index Coding}
The index coding problem was introduced in the multiple \emph{unicast} setting by Birk and Kol in \cite{Birk_Kol_Trans} and subsequently studied by Bar-Yossef et al. in \cite{Yossef_Birk_Jayram_Kol}, where it was shown that the minimum number of channel uses required to send 1 bit per message with the optimal binary scalar linear index coding scheme is equal to the $\mathrm{minrank}$ function. Further, as noted in \cite{Yossef_Birk_Jayram_Kol_Trans}, this result can be generalized by allowing the coding to be performed over an arbitrary finite field. Algorithms for finding or approximating the minrank function are proposed in \cite{Chaudhry_Sprintson} and \cite{Chlamtac_Haviv}. Haviv and Langberg in \cite{Haviv_Langberg} investigated the min-ranks of random digraphs. Within the class of scalar index coding schemes, the insufficiency of linear codes was established by Lubetzky and Stav in \cite{Lubetzky_Stav}.  Alon et al.  further established the sub-optimality of scalar linear solutions in \cite{Alon_Hasidim_Lubetzky_Stav_Weinstein}  by highlighting  the benefits of vector linear coding. Since vector linear coding is a much more powerful form of linear coding than scalar linear coding, its optimality relative to non-linear coding schemes has been a topic of great interest, and has produced a series of results in  general contexts that have progressively closed in on the index coding problem. The series of breakthroughs started with the work of Dougherty et al. in \cite{Dougherty_Freiling_Zeger} who used the connections between the network coding problem and representability of matroids to establish a gap between vector linear coding and non-linear coding for the multiple unicast \emph{network coding} problem. However, the gap result of Dougherty et al. did not apply to the index coding problem which is a special case of the  network coding problem. Building upon the work of \cite{Dougherty_Freiling_Zeger},  Rouayheb et al. established connections between the representability of matroids and the multiple \emph{groupcast} index coding problem.  Most recently, Blasiak et al. further developed this relationship in \cite{Blasiak_Kleinberg_Lubetzky_2011}, and used the construction of lexicographic products of graphs to show that for the multiple groupcast index coding problem,  vector linear codes are strictly out-performed by non-linear codes.   However, the gap result for the multiple \emph{groupcast} index coding problem does not apply to the multiple \emph{unicast} index coding problem. Hence, in the original setting of multiple \emph{unicast} index coding, which is also the most commonly studied form of index coding, the optimality of  linear codes remains open. 

Our first result, presented in Theorem \ref{thm:insuff},  settles this issue.
\begin{theorem}\label{thm:insuff}
Linear coding is insufficient to achieve the capacity region of the multiple {\bf unicast} index coding
problem.
\end{theorem}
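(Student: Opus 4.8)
The plan is to derive Theorem~\ref{thm:insuff} from two ingredients: first, the known strict gap between vector linear and nonlinear coding for the multiple \emph{groupcast} index coding problem, due to Blasiak et al.~\cite{Blasiak_Kleinberg_Lubetzky_2011}; and second, a reduction that embeds an arbitrary multiple groupcast instance into a multiple \emph{unicast} instance in a way that preserves both the capacity region and the set of rate tuples attainable by linear codes. Granting such a reduction, the theorem is immediate: take a groupcast instance $\mathcal{I}_g$ for which some rate point in $\mathcal{C}(\mathcal{I}_g)$ is not achievable by any linear code, map it to a unicast instance $\mathcal{I}_u$, and note that a linear scheme attaining capacity in $\mathcal{I}_u$ would pull back to a linear scheme attaining that point in $\mathcal{I}_g$, a contradiction.

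The core of the argument is the reduction. For each message $W_m$ of $\mathcal{I}_g$ desired by the destinations in a set $T_m\subseteq\mathcal{K}$, I would introduce $|T_m|$ private copies $W_m^{(k)}$, $k\in T_m$, each of the same rate as $W_m$, and let the $k$-th original destination desire $W_m^{(k)}$ while inheriting its original antidotes, now expressed through the copies. This alone yields a unicast instance but destroys the coupling among the copies, so I would add a family of auxiliary destinations --- a ``consistency gadget'' --- whose desired and antidote sets force, under any scheme operating at a tight rate point, that the bottleneck symbols $S^n$ together with any one copy $W_m^{(k)}$ determine every other copy $W_m^{(k')}$, and conversely. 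The gadget should be designed so that over a linear code this forces $\mathbf{V}_m^{(k)}$ and $\mathbf{V}_m^{(k')}$ to span the same subspace (after the bottleneck map) for all $k,k'\in T_m$, while costing nothing when the copies are set literally equal.

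I would then verify the two directions of the reduction. Achievability (groupcast $\to$ unicast, possibly nonlinear): set each copy $W_m^{(k)}$ equal to $W_m$ and reuse the groupcast encoder; every original destination decodes exactly as before, and the consistency gadget is trivially satisfied since the copies are identical, so the image of the original rate point lies in $\mathcal{C}(\mathcal{I}_u)$. Converse for linear codes (unicast $\to$ groupcast): given the precoding matrices $\mathbf{V}_m^{(k)}$ and combining matrices of a linear scheme for $\mathcal{I}_u$, use the gadget constraints to conclude $\mathrm{span}(\mathbf{V}_m^{(k)})$ is common to all $k\in T_m$, replace them by a single precoder $\mathbf{V}_m$, and check that Property~1 and Property~2 of the linear index coding definition still hold at every original destination; this produces a linear scheme for $\mathcal{I}_g$ with the same rate vector. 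Combining the two directions, $\mathcal{C}(\mathcal{I}_u)$ contains a point outside the linear rate region of $\mathcal{I}_u$, which is what the theorem asserts.

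The main obstacle I anticipate is engineering the consistency gadget so that it is simultaneously innocuous for the nonlinear lift --- it must not consume any rate when the copies coincide --- and yet restrictive enough to pin linear codes down to linearly equivalent precoders for the copies, leaving no slack that a clever vector linear scheme could exploit to beat the groupcast linear bound. Certifying this at the level of Properties~1--2 (rather than merely at the level of rate-region inequalities) is where the real work lies; the remainder is bookkeeping over the enlarged message and destination sets.
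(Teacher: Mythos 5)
Your high-level plan coincides with the paper's: invoke the Blasiak--Kleinberg--Lubetzky groupcast gap, split each groupcast message into one private copy per desiring destination, add an alignment-forcing gadget, and prove the two directions (groupcast $\to$ unicast achievability in general, groupcast $\leftrightarrow$ unicast equivalence for linear codes). The paper realizes the ``consistency gadget'' as a single auxiliary message $\overline{W}_{m,0}$ of rate $1-R_m$ and a single auxiliary destination $\overline{D}_{m,0}$ per original message, with antidote set $\overline{\mathcal{W}}-\overline{\mathcal{W}}_{m,\cdot}$; the rate constraint at $\overline{D}_{m,0}$ forces the $L$ copies into an $nR_m$-dimensional subspace.

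That said, there are three places where your sketch would break as written, and they are not mere bookkeeping. First, you cannot ``set each copy $W_m^{(k)}$ equal to $W_m$'': the copies are independent messages in the unicast instance, so they cannot be forced equal. The paper instead XORs the copies to form an effective message $\overline{W}_m = \overline{W}_{m,1}\oplus\cdots\oplus\overline{W}_{m,L}$, feeds that into the groupcast encoder, and lets each destination recover its own copy using the other copies available as antidotes. Second, the gadget does \emph{not} force $\mathrm{span}(\mathbf{V}_m^{(k)})$ and $\mathrm{span}(\mathbf{V}_m^{(k')})$ to coincide exactly; with $\delta$-slack in the rates there is no way to force exact equality. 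What the paper proves (Lemma~\ref{lem:UsefulForInsuff}, applied iteratively) is a lower bound on $\dim\bigl(\bigcap_{j} \mathrm{colspan}(\overline{\mathbf{V}}_{m,j})\bigr)$ that approaches $nR_m$ as $\delta\to 0$; the groupcast precoder $\mathbf{V}_m$ is then a basis for this near-full-rank intersection. Third, the gadget is not ``trivially satisfied'' in the forward (nonlinear) direction: decoding $\overline{W}_{m,0}$ at rate close to $1-R_m$ against the superposed ``noise'' from the XORed copies requires a genuine random-coding argument, which the paper carries out via the mutual-information computation in (\ref{eq:part1proof1})--(\ref{eq:part1proof2}). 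You correctly identified the gadget design as the crux, but the specific mechanisms you suggest (literal equality of copies, exact subspace coincidence, zero-cost gadget) would need to be replaced by the XOR trick, the submodularity-based intersection bound, and the $1-R_m$-rate auxiliary message, respectively.
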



The proof of Theorem \ref{thm:insuff} is presented in Section \ref{sec:insuff}. 
The key  to this result is the construction of an equivalent multiple unicast problem for an arbitrary groupcast setting. Loosely speaking, this equivalent multiple unicast setting has the following two properties:
\begin{itemize}
\item Any rate that is achievable with linear coding in the groupcast setting is also achievable with linear coding in the multiple-unicast setting and \emph{vice-versa}. 
\item Any rate achievable in the groupcast setting (using any achievable scheme including possibly non-linear strategies) is achievable in the multiple unicast setting.
\end{itemize}
Since \cite{Blasiak_Kleinberg_Lubetzky_2011} has a construction of a groupcast index coding problem where linear coding schemes are insufficient, the above conditions imply that linear coding is insufficient in the equivalent multiple unicast setting as well. 

The construction of the equivalent multiple unicast index coding problem is described next. The basic idea of the construction is to replace each message that is desired by multiple destinations, with a new set of independent messages, one for each original desired destination. Thus, each destination desires only one of these independent messages (and has the rest as antidotes), giving us a unicast setting.  The equivalence to the original groupcast setting is enforced through a requirement that these independent messages are to align into the same space as the original message that they replace.  This is accomplished by introducing auxiliary messages and destinations (that contain the subscript $0$ in the description below), one for each original message. Each auxiliary destination sees  only one set of messages as interference (has all other messages as antidotes) and its desired auxiliary message leaves only as many signal dimensions for interference as each of the members of the set, thus forcing them into alignment. It is this alignment that allows the optimal solution in the multiple groupcast setting to be used in the multiple unicast setting and vice-versa.

\begin{construction}
Consider an arbitrary groupcast index coding problem where there are $M$ messages and $K$ destination nodes. Without loss of generality\footnote{There is no loss of generality in the first assumption because if there is a message desired by $L'<L$  destinations, then we can add $L-L'$ virtual destinations with antidote sets identical to any of the $L'$ (original) destinations, and which desire the appropriate subset of the messages desired by the original destinations. Similarly, there is no loss of generality in the second assumption because if a destination desires multiple messages, it can be equivalently replaced by multiple copies of itself, each interested in only one of the originally desired messages.}, we assume that \begin{itemize}
\item each message is desired by $L$ destinations so that $K=LM$, and
\item each destination desires exactly one message.
\end{itemize} In this groupcast index coding problem, the set of messages, the set of destinations, and the set of antidotes are respectively denoted by $\mathcal{W}, \mathcal{D}, \{\mathcal{A}_{k}:k=1,2,\ldots,K\},$ where $$D_{k} = (\{W_{m}\},\mathcal{A}_{k}), m=\lceil k/L\rceil$$

We now construct an equivalent multiple unicast index coding problem as follows. The multiple unicast setting has $LM+M$ messages and $K+M$ destinations. In this multiple unicast index coding problem, denote the messages as 

$$\overline{\mathcal{W}} = \{\overline{W}_{1,0}, \overline{W}_{1,1}, \overline{W}_{1,2},\ldots, \overline{W}_{1,L}, \overline{W}_{2,0}, \overline{W}_{2,1},\ldots,\overline{W}_{2,L},\ldots,\overline{W}_{M,0},\overline{W}_{M,1}, \overline{W}_{M,2},\ldots,\overline{W}_{M,L}\}. $$ 

\noindent The  destinations  in the multiple unicast system are denoted as $$\overline{D}_{i,j}:i=1,2,\ldots, M, j=0, 1,2,\ldots,L$$ and the antidotes are denoted as $$\overline{\mathcal{A}}_{i,j}, i=1,2,\ldots, M, j=0,1,2,\ldots,L$$ so that destination $\overline{D}_{i,j} = (\overline{W}_{i,j}, \overline{\mathcal{A}}_{i,j}).$ The set of antidotes is 
\begin{equation} \label{eq:equivalent_antidotes}\overline{\mathcal{A}}_{i,j} = \left\{ \begin{array}{ll} \mathcal{A}_{(i-1)L+j,.} \cup \overline{\mathcal{W}}_{.,0} \cup \{\overline{W}_{i,l}: l \neq j\}, & j \neq 0 \\ \overline{\mathcal{W}}-\overline{\mathcal{W}}_{i,.}, & j=0 \end{array}\right\} , \end{equation}
where
\begin{eqnarray}
\mathcal{A}_{k,.} &=& \{\overline{W}_{m,l}:  W_{m} \in \mathcal{A}_{k}, l=0,1,2,\ldots,L \}\label{eq:antidotes_first}\\
 \overline{\mathcal{W}}_{.,0} &=& \{\overline{W}_{1,0}, \overline{W}_{2,0},\cdots, \overline{W}_{M,0}\}\\
\overline{\mathcal{W}}_{k,.} &= &\{\overline{W}_{k,0},\overline{W}_{k,1},\cdots,\overline{W}_{k,L}\}
\label{eq:antidotes_last}
\end{eqnarray}
\label{construction1}
\end{construction}

As an example, a groupcast index coding problem and its equivalent multiple unicast index coding problem are shown in Figure {\ref{fig:equivalentsetting}}.
\begin{figure}[!t] \centering
\includegraphics[width=4in]{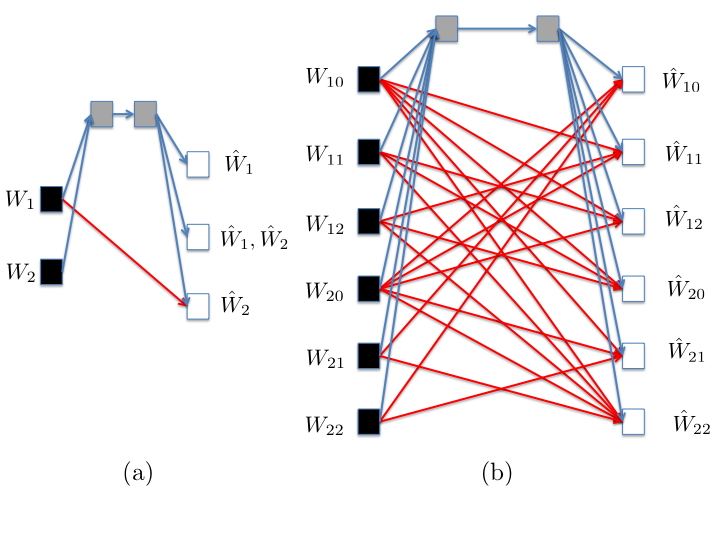}
\caption{\small (a) Groupcast index coding where $M=2$, $K=3$  and each message is desired by two destinations (b) Equivalent multiple unicast index coding problem}
\label{fig:equivalentsetting}
\end{figure}

\begin{theorem}
\label{thm:MUConstruction}
For every  groupcast index coding problem, the equivalent multiple unicast index coding problem specified by Construction \ref{construction1}, satisfies the following properties:
\begin{itemize}
\item[]{\bf Property 1:} Rate tuple $\mathcal{R}=(R_{1}, R_{2}, \ldots R_{M})$ is achievable in the groupcast setting only if the rate tuple 
\begin{equation}R_{i,j} = \left\{\begin{array}{cc} R_{i}, & j \neq 0 \\ 1-R_{i}, & j =0\end{array}\right\} \label{eq:rates}\end{equation}
is achievable in the equivalent multiple unicast problem. 
\item[]{\bf Property 2:} Rate tuple $\mathcal{R}=(R_{1}, R_{2}, \ldots R_{M})$ is achievable via {\bf linear} coding in the groupcast setting {\bf if and only if} the rate tuple 
\begin{equation}\label{eq:rates2}R_{i,j} = \left\{\begin{array}{cc} R_{i}, & j \neq 0 \\ 1-R_{i}, & j =0\end{array}\right\} \end{equation}
is achievable via  linear coding in the multiple unicast problem.
\end{itemize}
\end{theorem}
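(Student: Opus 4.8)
The plan is to treat the two properties by quite different means: Property~2 is an exact, zero-error, finite-blocklength \emph{linear-algebra} equivalence, whereas Property~1 is a one-directional asymptotic achievability statement that must survive arbitrary, possibly non-linear, codes. Throughout I write $k=(i-1)L+j$, so that the unicast destination $\overline D_{i,j}$ with $j\neq 0$ is the analogue of the groupcast destination $D_k$, which desires $W_i=W_{\lceil k/L\rceil}$ with side information $\mathcal A_k$. By construction $\overline D_{i,j}$ holds a copy of every message in $\mathcal A_k$, all the auxiliary messages $\overline W_{1,0},\dots,\overline W_{M,0}$, and all the sibling copies $\{\overline W_{i,l}:l\neq j\}$; the auxiliary destination $\overline D_{i,0}$ holds everything except $\overline W_{i,0},\dots,\overline W_{i,L}$. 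I also assume without loss of generality that $0<R_i<1$ for all $i$ (rate-zero messages are deleted, and $R_i=1$ forces every other rate to zero and is immediate).

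\emph{Property~2.} For ``groupcast linear $\Rightarrow$ unicast linear'', take a linear groupcast code over $n$ uses with precoders $\mathbf V_m\in\mathbb F^{n\times L_m}$, $L_m=nR_m$. Set $\overline{\mathbf V}_{i,j}=\mathbf V_i$ for every $j\neq 0$ (all $L$ copies of $W_i$ ride the same $L_i$-dimensional subspace) and pick $\overline{\mathbf V}_{i,0}$ so that $[\,\mathbf V_i \mid \overline{\mathbf V}_{i,0}\,]$ is $n\times n$ invertible. A $j\neq 0$ destination reuses the combiner $\mathbf U_{i,k}$: its zero-forcing and invertibility conditions carry over verbatim because every interfering precoder $\overline{\mathbf V}_{m,l}$ ($m\neq i$, $l\neq 0$, $W_m\notin\mathcal A_k$) equals $\mathbf V_m$. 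At $\overline D_{i,0}$ the only non-antidote interference is $\overline W_{i,1},\dots,\overline W_{i,L}$, all lying in $\mathrm{span}(\mathbf V_i)$, so the rows of $[\,\mathbf V_i \mid \overline{\mathbf V}_{i,0}\,]^{-1}$ dual to $\overline{\mathbf V}_{i,0}$ simultaneously zero-force that span and invertibly recover $\overline W_{i,0}$; the rates are exactly $R_{i,j}$. For the converse the crux is a \textbf{forced-alignment lemma}: in any linear unicast code meeting the target rates, $\mathrm{span}(\overline{\mathbf V}_{i,1})=\cdots=\mathrm{span}(\overline{\mathbf V}_{i,L})$ for every $i$. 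Each such span has dimension $\bar L_i=\bar n R_i$ by invertibility at $\overline D_{i,l}$; and at $\overline D_{i,0}$ the combiner $\overline{\mathbf U}_{i,0}$ has rank exactly $\bar n-\bar L_i$ (its product with the $(\bar n-\bar L_i)$-column matrix $\overline{\mathbf V}_{i,0}$ is invertible) yet zero-forces every $\overline{\mathbf V}_{i,l}$, $l\neq 0$, so $\dim\mathrm{span}(\overline{\mathbf V}_{i,1},\dots,\overline{\mathbf V}_{i,L})\le \bar L_i$; the two bounds pin all of them to one common subspace $\mathcal V_i$. Then define the groupcast code by $\mathbf V_i:=\overline{\mathbf V}_{i,1}$ and $\mathbf U_{i,k}:=\overline{\mathbf U}_{i,j}$; zero-forcing transfers since $\overline{\mathbf U}_{i,j}\overline{\mathbf V}_{m,1}=0$ for the relevant $m$, and invertibility transfers because $\overline{\mathbf V}_{i,1}=\overline{\mathbf V}_{i,j}\mathbf A$ for an invertible change of basis $\mathbf A$ of $\mathcal V_i$, so $\overline{\mathbf U}_{i,j}\overline{\mathbf V}_{i,1}=(\overline{\mathbf U}_{i,j}\overline{\mathbf V}_{i,j})\mathbf A$ is invertible.

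\emph{Property~1.} Fix an achievable groupcast scheme; by the Langberg--Effros equivalence of $\epsilon$-error and zero-error capacity for index coding I may take it zero-error, over $n$ uses, with alphabet $\mathbb F_q$ and $W_i$ carrying $\ell_i\approx nR_i$ symbols, and write $\tilde S^n:=f(Y_1,\dots,Y_M)$. Construct a unicast code, also over $n$ uses: let $\overline W_{i,j}$ ($j\neq 0$) carry $\ell_i$ symbols and $\overline W_{i,0}$ carry $m_i:=\lfloor n(1-R_i)-\sqrt n\,\rfloor$ symbols (an asymptotically negligible back-off), put $Y_i:=\overline W_{i,1}+\cdots+\overline W_{i,L}\in\mathbb F_q^{\ell_i}$ (uniform, independent across $i$, and independent of all $\overline W_{m,0}$, hence a legitimate groupcast message tuple), and transmit $\overline S^n=f(Y_1,\dots,Y_M)+\sum_m H_m\overline W_{m,0}$ with $H_m\in\mathbb F_q^{n\times m_m}$ to be chosen. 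A $j\neq 0$ destination cancels $\sum_m H_m\overline W_{m,0}$ (it knows every $\overline W_{m,0}$), recovers $\tilde S^n$ exactly, runs the groupcast decoder of $D_k$ (it holds $\{Y_m:W_m\in\mathcal A_k\}$) to obtain $Y_i$, and subtracts the sibling copies to get $\overline W_{i,j}$ --- all errorless. The auxiliary destination $\overline D_{i,0}$ knows every $\overline W_{m,0}$ and every $Y_m$ with $m\neq i$, so after cancellation it observes $\psi_i(Y_i)+H_i\overline W_{i,0}$, where $\psi_i(\cdot)=f(Y_1,\dots,\cdot,\dots,Y_M)$ is \emph{injective} --- precisely because $D_k$ zero-error decodes $W_i$ from $\tilde S^n$ using side information contained in $\{Y_m:m\neq i\}$. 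It therefore fails to recover $\overline W_{i,0}$ only if some $y'\neq Y_i$ satisfies $\psi_i(Y_i)-\psi_i(y')\in\mathrm{col}(H_i)\setminus\{0\}$; choosing the $H_m$ uniformly at random (they have full column rank with probability $\to 1$), a union bound over the $q^{\ell_i}-1$ nonzero differences bounds this by $O(q^{\ell_i+m_i-n})=O(q^{-\sqrt n})\to 0$. Hence some fixed $\{H_m\}$ makes the total error vanish, and since the rates approach $(R_{i,j})$ the tuple is achievable.

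\emph{Main obstacle.} The delicate point is Property~1. Unlike the linear case, no \emph{deterministic} superposition can work in general: the set $\{f(\mathbf a)-f(\mathbf b)\}$ over inputs $\mathbf a,\mathbf b$ agreeing off coordinate $i$ can fill $\mathbb F_q^n$, so no fixed subspace $\mathrm{col}(H_i)$ avoids it, and the auxiliary destination $\overline D_{i,0}$ --- which cannot cancel the dependence of $\tilde S^n$ on its own $Y_i$ --- would not decode. Escaping this requires three ingredients to mesh: the zero-error reduction, which makes each $\psi_i$ injective so that the relevant differences are nonzero and number only $q^{\ell_i}$; randomizing the masks $H_m$; and an arbitrarily small rate back-off on the auxiliary messages, which creates exactly the slack the union bound needs while leaving the limiting rate tuple unchanged. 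Verifying that these fit together --- in particular that the back-off and the $\epsilon$-error tolerance remain compatible with the claimed rate tuple in the limit --- is the heart of the argument; Property~2, by contrast, reduces to the forced-alignment lemma together with the observation that the change of basis $\mathbf A$ does not disturb invertibility.
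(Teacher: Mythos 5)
Your Property~1 argument and the forward half of Property~2 are correct and essentially the same route as the paper: superpose a re-used groupcast code (encoding the sum $\overline W_{i,1}\oplus\cdots\oplus\overline W_{i,L}$) with a random code for the auxiliary messages, and for the linear forward direction reuse the groupcast precoders and carve out complementary subspaces for the $\overline{\mathbf V}_{i,0}$. The paper implements the random code for $\overline W_{i,0}$ via an i.i.d.\ codebook over $n_0$ blocks and a mutual-information bound $(I=1-R_i)$, while you use a random linear mask $H_i$, an injectivity argument for $\psi_i$, and a union bound with a $\sqrt{n}$ back-off --- these are interchangeable implementations of the same idea, and your injectivity observation (a consequence of the zero-error reduction) is the right enabler.

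There is, however, a genuine gap in the converse half of Property~2. Your ``forced-alignment lemma'' asserts $\mathrm{span}(\overline{\mathbf V}_{i,1})=\cdots=\mathrm{span}(\overline{\mathbf V}_{i,L})$, but the argument that pins them all to $\ker(\overline{\mathbf U}_{i,0})$ relies on the exact dimension count $\bar L_i=\bar n R_i$ for $j\neq 0$ and $\bar n-\bar L_i$ for $j=0$. Achievability in the paper's sense only guarantees, for every $\delta>0$, a linear scheme with ranks $\bar n(R_i-\delta_{i,j})$ and $\bar n(1-R_i-\delta_{i,0})$ for some positive slacks $\delta_{i,j}\le\delta$. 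Then $\dim\ker(\overline{\mathbf U}_{i,0})=\bar n(R_i+\delta_{i,0})$ strictly exceeds each $\mathrm{rank}(\overline{\mathbf V}_{i,l})=\bar n(R_i-\delta_{i,l})$, so the $L$ subspaces need not coincide, and setting $\mathbf V_i:=\overline{\mathbf V}_{i,1}$ no longer gives a precoder that is zero-forced by all the combiners $\overline{\mathbf U}_{i,j}$, $j=2,\dots,L$ (you only know $\overline{\mathbf U}_{i,j}\overline{\mathbf V}_{i,j}$ is invertible, not $\overline{\mathbf U}_{i,j}\overline{\mathbf V}_{i,1}$). The paper avoids this by taking $\mathbf V_i:=\bigl\langle\bigcap_{l=1}^{L}\mathrm{colspan}(\overline{\mathbf V}_{i,l})\bigr\rangle$ and proving a quantitative lower bound $\mathrm{rank}(\mathbf V_i)\ge n\bigl(R_i-\sum_l\delta_{i,l}-(L-1)\delta_{i,0}\bigr)$ via a dimension-intersection inequality (Lemma~\ref{lem:UsefulForInsuff}); the zero-forcing and invertibility conditions then transfer because every $\mathrm{colspan}(\mathbf V_m)\subseteq\mathrm{colspan}(\overline{\mathbf V}_{m,l})$. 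You need this intersection construction, or an equivalent way to absorb the slack, to close the argument.
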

The proof of Theorem \ref{thm:MUConstruction} is presented in Section \ref{sec:insuff}, where we also argue that  it implies Theorem \ref{thm:insuff}. 
 Here, we summarize the intuition behind the proof. In the groupcast setting, a message $W_{m}$ is desired to be decoded by $L$ destinations, $D_{(m-1)L+1}, D_{(m-1)L+2},\ldots,D_{mL}$. The equivalent multiple unicast construction described formally above can be intuitively viewed as one obtained by expanding the message $W_{m}$ into $L$ independent messages $\overline{W}_{m,1}, \overline{W}_{m,2},\ldots, \overline{W}_{m,L}$ --- each of these $L$ messages desired uniquely by one destination as required for a multiple unicast index coding problem. Because of this construction, there is a correspondance between destination $D_{(m-1)L+j}$ in the groupcast setting, and destination $\overline{D}_{ m,j}$ in the equivalent multiple unicast setting. This correspondance is maintained in the antidote structure which is carried over from the groupcast setting to the multiple unicast setting. In particular, if a message $W_{l}$ is present as an antidote at destination $D_{(m-1)L+j}$ in the groupcast setting, then, destination $\overline{D}_{m,j}$ has $\overline{W}_{l,1}, \overline{W}_{l,2},\ldots,\overline{W}_{l,L}$ as antidotes in the multiple unicast setting (and vice-versa). In addition, destination $\overline{D}_{m,l}$ has as antidotes, all the messages $\overline{W}_{m,l^{'}}, l^{'} \neq l.$ This antidote structure allows all messages in the set $\{\overline{W}_{m,1}, \overline{W}_{m,2},\ldots,\overline{W}_{m,L}\}$ to occupy the same ``space'' in a linear coding scheme, since a destination that desires any one of these messages,  has all the other messages in this set as antidotes. In addition, because the antidote structure from the groupcast setting is carried over to the multiple unicast setting, for any achievable scheme in the groupcast setting, the messages $\overline{W}_{m,1}, \overline{W}_{m,2},\ldots, \overline{W}_{m,L}$ can occupy (i.e. align in) the same ``space'' occupied by $W_{m}$ in the groupcast setting. This means that any achievable scheme in the groupcast setting naturally translates to the multiple unicast setting. For the converse establishing equivalence (in Property 2), we also need to show that a scheme achieving rate $R_{m}$ for messages $\overline{W}_{m,1},\ldots,\overline{W}_{m,L}$ in the multiple unicast setting can be translated to a scheme achieving rate $R_{m}$ for message $W_{m}$ in the groupcast setting. To ensure this, we use the \emph{auxilliary} destination $\overline{D}_{m,0}$ in the multiple unicast construction. The message $\overline{W}_{m,0}$ desired by this auxiliary destination is provided as an antidote to all the other destinations and therefore does not affect achievability of rate $R_{m}$ for any other destination. Destination $\overline{D}_{m,0}$ has as antidotes, all the messages except $\overline{W}_{m,0}, \overline{W}_{m,1},\ldots, \overline{W}_{m,L}.$ This means that the space occupied by $\overline{W}_{m,0}$ has to be linearly independent of $\overline{W}_{m,1}, \overline{W}_{m,2},\ldots \overline{W}_{m,L}$. Now, if this message $\overline{W}_{m,0}$ has a rate of $1-R_{m},$ then the interfering messages faced by this destination -- $\overline{W}_{m,1}, \overline{W}_{m,2}, \ldots, \overline{W}_{m,L}$ -- have to \emph{together} occupy a space of dimension $nR_{m}$. This implies that if each of these interfering messages have to achieve a rate $R_{m},$ they have to align (nearly) perfectly. This aligned space can be used in the groupcast setting to encode $W_{m}$ at rate $R_{m}$ enabling translation of achievable scheme from the multiple unicast setting to the groupcast setting.  Equivalence for non-linear schemes is based on random coding arguments and superposition coding for auxiliary messages, according to the detailed proof presented in Section \ref{sec:insuff}.



\subsection{Feasibility of Symmetric Rate $\frac{1}{L+1}$ when $\forall k,|\mathcal{W}_k|\geq L$}


Given that  $\forall k,|\mathcal{W}_k|\geq L$, without loss of generality we can assume that  $|\mathcal{W}_k|=L$, $\forall~k\in\mathcal{K}$, i.e., each destination is interested in decoding exactly $L$ distinct messages, i.e.,  $ \mathcal{W}_k=\{W_{k_1},W_{k_2},\ldots,W_{k_L}\},~\forall k_i\in \mathcal{M}$. This is because destinations that wish to decode more than $L$ messages can be split into multiple destinations with the same set of antidotes, that each wish to decode a subset of size $L$ of the original messages, such that the union of these subsets is the original set of desired messages.


If $M=L$ or $M=L+1$, it is easy to achieve rate $\frac{1}{L+1}$ per message by sending each message separately at each time and each destination can achieve rate $\frac{1}{L}$ or $\frac{1}{L+1}$, respectively, which are both greater than or equal to $\frac{1}{L+1}$ and hence rate $\frac{1}{L+1}$ per message is feasible. 
If $M >L+1$, intuitively, since all desired messages must pass through the bottleneck link of capacity 1 and all messages must
 simultaneously achieve rate $\frac{1}{L+1}$ each, then some overlap of signal dimensions within the bottleneck
symbol $S^n$ is unavoidable. The interfering messages that are available through antidote links can be subtracted. The desired signals consume a fraction $\frac{L}{L+1}$ of the capacity of the bottleneck link, which must be free from interference. This leaves only the remaining $\frac{1}{L+1}$ of the signal space for interference within which all interfering messages, each of which carries rate $\frac{1}{L+1}$, should overlap nearly perfectly. The intuitive explanation is  formalized  for all possible coding schemes through a Shannon theoretic framework.

The following terminology is introduced specifically for the setting where each message wants to
achieve rate $\frac{1}{L+1}$.
\begin{itemize}
\item \textbf{Alignment Relation:} We define a relation $W_i\overset{k}\leftrightarrow W_j$ as follows. $W_i\overset{k}\leftrightarrow W_j$ iff $W_i \notin \mathcal{A}_k$, $W_i \notin \mathcal{W}_k$, $W_j \notin \mathcal{A}_k$ and $W_j \notin \mathcal{W}_k$ for $k\in\mathcal{K}$ and distinct indices $i$, $j\in \mathcal{M}$.  In the $\frac{1}{L+1}$ rate feasibility problem, the relation $W_i\overset{k}\leftrightarrow W_j$ represents the understanding that $W_i$ and $W_j$ must align (into $\frac{1}{L+1}$ of the signal space within the bottleneck link in order to 
leave the remaining $\frac{L}{L+1}$ of the signal space for $\mathcal{W}_k$). We may occasionally use the notation $W_i\leftrightarrow W_j$ when the identity of the destination is not important.
\item \textbf{Alignment Subsets:} The set of messages $\mathcal{W}$ is partitioned into alignment subsets, created as follows. If $W_i\leftrightarrow W_j$ , then both $W_i$, $W_j$ belong to the same alignment
subset. Further, if $W_i\leftrightarrow W_j$ and $W_j\leftrightarrow W_m$ then $W_i$, $W_j$, $W_m$ all belong to the same alignment subset. For the $\frac{1}{L+1}$ rate feasibility problem, we expect that the messages within an alignment subset will need to align almost perfectly within the
bottleneck link signal space.
\end{itemize}
As an example, consider an index coding problem shown in Fig. \ref{fig:feasible}, where $\mathcal{W}_1=\{W_1,W_2\}, \mathcal{W}_2=\{W_1,W_3\}, \mathcal{W}_3=\{W_2,W_4\}, \mathcal{A}_1=\O, \mathcal{A}_2=\{W_4\}, \mathcal{A}_3=\{W_3\}$. In this example, the alignment relations are $W_3\overset{1}\leftrightarrow W_4$ and the alignment subsets are $\{W_1\},\{W_2\},\{W_3,W_4\}$ and the rate $\frac{1}{3}$ per message is feasible.

\begin{figure}[!h] \centering
\includegraphics[width=3in]{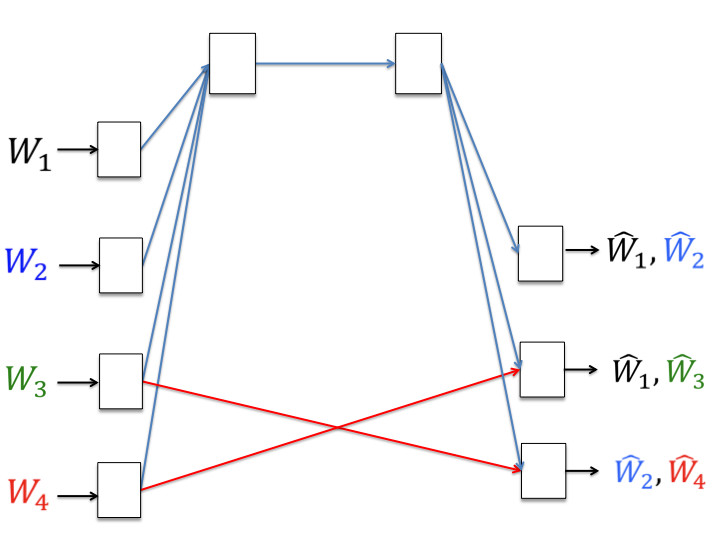}
\caption{Groupcast index coding problem with $M=4, K=3$ and $L=2$ where the rate $\frac{1}{3}$ per message is feasible}
\label{fig:feasible}
\end{figure}

Based on this definition for the alignment subset, we have the following theorem for the $\frac{1}{L+1}$ rate feasibility.
\begin{theorem}
\label{thm:HRF}
The rate tuple $\mathcal{R}$ with $R_1 = R_2 = ... = R_M= \frac{1}{L+1}$ is not achievable in the multiple groupcast index coding problem where  $|\mathcal{W}_k|=L$, $\forall~k \in \mathcal{K}$, if and only if there exist distinct indices $i, j \in \mathcal{M}$ such that $W_i$, $W_j$ belong to the
same alignment subset and $W_j\in \mathcal{W}_k$ and $W_i\notin \mathcal{A}_k$ for $k\in \mathcal{K}$
\end{theorem}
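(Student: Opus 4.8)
The plan is to prove both directions of the ``if and only if.'' I will handle the easier converse (the ``if'' direction, i.e., infeasibility) first, then the harder direction. For the converse, suppose there exist distinct $i,j\in\mathcal{M}$ with $W_i,W_j$ in the same alignment subset, $W_j\in\mathcal{W}_k$ and $W_i\notin\mathcal{A}_k$ for some $k$. Since $W_i$ and $W_j$ are in the same alignment subset, there is a chain $W_i=W_{m_0}\leftrightarrow W_{m_1}\leftrightarrow\cdots\leftrightarrow W_{m_t}=W_j$ of alignment relations. The intuition is that each link $W_{m_r}\overset{k_r}\leftrightarrow W_{m_{r+1}}$ forces $W_{m_r}$ and $W_{m_{r+1}}$ to align (occupy essentially the same $\frac{n}{L+1}$-dimensional portion of $S^n$), so by transitivity the whole chain — in particular $W_i$ and $W_j$ — must align. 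But at destination $D_k$, $W_j$ is desired, so its $\frac{n}{L+1}$ signal dimensions must be interference-free; yet $W_i$ is not an antidote at $D_k$ and (being in the alignment subset, hence forced to align with $W_j$) its signal overlaps the desired space of $W_j$ at $D_k$, a contradiction. I would make this rigorous with an entropy/dimension argument in the Shannon-theoretic framework: combine the decoding (Fano) constraints at the destinations $k_0,k_1,\ldots,k_{t-1}$ along the chain with the decoding constraint at $D_k$, and show the resulting set of inequalities on $H(S^n)$, the message entropies, and the various conditional entropies forces $H(S^n) > n$, violating the unit capacity of the bottleneck link. Here I must be careful to set up the right submodularity/chain of inequalities; this is the technical heart of the converse.

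For the forward direction (feasibility when no such $i,j$ exist), I would give an explicit vector linear construction over a sufficiently large field, achieving rate exactly $\frac{1}{L+1}$ per message over $n=L+1$ channel uses (so each $\mathbf{V}_m$ is an $(L+1)\times 1$ vector). Partition $\mathcal{W}$ into its alignment subsets $\mathcal{P}_1,\ldots,\mathcal{P}_q$. The key structural consequence of the hypothesis is: for every destination $D_k$ and every alignment subset $\mathcal{P}_s$, if any message of $\mathcal{P}_s$ is desired at $D_k$ then no message of $\mathcal{P}_s$ is ``interfering'' at $D_k$ (i.e. outside $\mathcal{A}_k\cup\mathcal{W}_k$) — because such an interfering message would be related to the desired one via $\overset{k}\leftrightarrow$, hence in the same subset, contradicting the hypothesis. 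Thus at each $D_k$ the interfering messages (those in $\mathcal{W}-(\mathcal{A}_k\cup\mathcal{W}_k)$) come only from alignment subsets none of whose members are desired at $D_k$. The construction: assign to each alignment subset $\mathcal{P}_s$ a single direction, namely one of the $L+1$ basis vectors $\mathbf{T}_1,\ldots,\mathbf{T}_{L+1}$, in such a way that all messages in $\mathcal{P}_s$ share that direction (``one-to-one'' subspace alignment collapsed to a single line) — actually more care is needed since two distinct subsets desired at the same destination must get distinct directions, and there is the constraint coming from $|\mathcal{W}_k|=L$ leaving exactly one dimension for all interference.

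The cleaner route for the forward direction is: within each alignment subset pick generic (random) precoding vectors in $\mathbb{F}^{L+1}$, but enforce that all vectors of one alignment subset lie in a common $1$-dimensional subspace is too restrictive — instead I would enforce alignment subset by subset only as needed at each destination. Concretely, I would argue by the Combinatorial Nullstellensatz / generic-position argument: choose all $\mathbf{V}_m$ with entries that are algebraically independent transcendentals subject only to the alignment constraints $\mathbf{V}_i\parallel\mathbf{V}_j$ whenever $W_i\leftrightarrow W_j$ directly (equivalently, one common vector per alignment subset), then check that at each destination $D_k$ the $L$ desired vectors together with the interference (which spans at most $1$ dimension, since by the hypothesis all interfering messages at $D_k$ lie in alignment subsets disjoint from the desired ones, but I still need them to collapse to a single shared line) are linearly independent, so Property 1 and Property 2 of the linear scheme definition hold. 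The main obstacle, and the step I expect to be delicate, is showing the interference at each destination can be squeezed into a single dimension: this requires that all interfering messages at $D_k$ across possibly several alignment subsets can be made to align together, which is where one must verify there is enough freedom (enough distinct basis directions, namely $L$ for the desired plus $1$ for all interference $=L+1$ total) and that the hypothesis rules out the obstructions — I would prove a lemma that the alignment subsets can be legally assigned directions so that for every $k$, the subsets with a message desired at $k$ receive $L$ distinct directions among $\mathbf{T}_1,\ldots,\mathbf{T}_L$ and every subset with an interfering message at $k$ receives direction $\mathbf{T}_{L+1}$, using a graph-coloring argument on the ``conflict graph'' of alignment subsets whose consistency is exactly guaranteed by the theorem's hypothesis. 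Verifying that this coloring always exists (no odd cycles / no over-constrained destination) is the crux.
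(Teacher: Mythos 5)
Your converse plan is essentially the right one and mirrors the paper's: one uses the alignment chain together with decoding constraints at the intermediate destinations and a submodularity inequality on conditional entropies of $S^n$, yielding (Theorem~\ref{thm:outerbound1}) a bound of the form $R_{i_0}+R_{j^1_{1:L}}+\cdots+R_{j^N_{1:L}}+R_{i_N}\le N$, which is violated by the all-$\frac{1}{L+1}$ rate tuple. So no complaints there beyond the obvious need to actually carry out the chain of inequalities.

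Your achievability plan, however, has a genuine gap, and it is one that also causes you to overcomplicate the argument. You correctly observe that the hypothesis forbids a desired message and an interfering message at the same destination from sharing an alignment subset, and you then worry at length about how to align interfering messages that ``lie in alignment subsets disjoint from the desired ones'' into a single dimension, proposing a graph-coloring lemma whose correctness you flag as ``the crux.'' But there is nothing to prove here: by the very \emph{definition} of the alignment relation, if $W_i$ and $W_j$ are both interfering at destination $D_k$ (i.e., both lie outside $\mathcal{A}_k\cup\mathcal{W}_k$), then $W_i\overset{k}\leftrightarrow W_j$, so they belong to the \emph{same} alignment subset. Hence at every destination $D_k$ the entire set of interfering messages sits in a single alignment subset $\mathcal{P}_t$, automatically. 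Once you see this, the construction is trivial: over $n=L+1$ channel uses pick one vector $\mathbf{V}_s\in\mathbb{F}_q^{L+1}$ per alignment subset $\mathcal{P}_s$, with $q$ large enough that every $L+1$ of the chosen vectors are linearly independent (general position; there is no need to restrict to $L+1$ fixed basis directions $\mathbf{T}_1,\ldots,\mathbf{T}_{L+1}$, which would indeed be too few if there are many alignment subsets). At destination $D_k$, the $L$ desired messages come from $L$ distinct subsets (by the feasibility hypothesis no two desired messages can share a subset with each other or with any interferer there), and the interference collapses to the single vector $\mathbf{V}_{\mathcal{P}_t}$. These $L+1$ vectors are distinct, hence linearly independent by the general-position choice, and resolvability follows. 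No coloring argument is required, and the concern about ``odd cycles / over-constrained destinations'' evaporates.
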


The statement of Theorem \ref{thm:HRF} is intuitively interpreted as follows. If messages $W_i$, $W_j$ belong to the same alignment set then they should overlap almost perfectly. Because of their overlap, it is not possible to recover one of them unless the other message is available as an antidote. If such an antidote is not available then the achievability of $\frac{1}{L+1}$ rate for every message becomes infeasible. Note that the feasibility condition refers to all possible coding schemes and not just linear coding schemes.
The proof of Theorem \ref{thm:HRF} is presented in Section \ref{sec:HRF}. 

As a special case of Theorem \ref{thm:HRF}, for $L=1$, we recover the feasibility condition for achievability of rate half per message, previously obtained by Blasiak et al. in \cite{Blasiak_Kleinberg_Lubetzky}. 
\begin{corollary}\label{cor:HRF}
The rate tuple $\mathcal{R}$ with $R_1 = R_2 = ... = R_M= \frac{1}{2}$ is not achievable in the multiple groupcast index coding problem where  each destination desires only one message, if and only if there exist distinct indices $i, j \in \mathcal{K}$ such that $W_i$, $W_j$ belong to the same alignment subset and $\{W_j\}= \mathcal{W}_k$ and $W_i\notin \mathcal{A}_k$ for $k\in \mathcal{K}$\end{corollary}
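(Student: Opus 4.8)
The plan is to obtain Corollary \ref{cor:HRF} directly as the $L=1$ instance of Theorem \ref{thm:HRF}, so that no argument beyond a term-by-term comparison is required. First I would observe that the standing hypothesis of Theorem \ref{thm:HRF}, namely $|\mathcal{W}_k|=L$ for all $k\in\mathcal{K}$, becomes $|\mathcal{W}_k|=1$ for all $k$ --- exactly the hypothesis of the corollary that each destination desires only one message --- and that the target symmetric rate $\frac{1}{L+1}$ becomes $\frac{1}{2}$. Next, the alignment relation $W_i\overset{k}{\leftrightarrow}W_j$ and the partition of $\mathcal{W}$ into alignment subsets it induces are defined with no reference to $L$, so the clause ``$W_i,W_j$ belong to the same alignment subset'' is unchanged. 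Finally, when $|\mathcal{W}_k|=1$ the condition ``$W_j\in\mathcal{W}_k$'' is literally ``$\{W_j\}=\mathcal{W}_k$''. Assembling these, the necessary-and-sufficient obstruction of Theorem \ref{thm:HRF} specializes verbatim to the one asserted in Corollary \ref{cor:HRF} (with the understanding that the indices $i,j$ range over $\mathcal{M}$), which is therefore immediate.

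As a consistency check I would also reconcile the resulting criterion with the rate-half feasibility condition of Blasiak et al.\ \cite{Blasiak_Kleinberg_Lubetzky}. Their characterization is stated via an ``alignment graph'' on the message set, whose connected components play the role of our alignment subsets, and the obstruction they exhibit --- a message $W_j$ whose unique requesting destination is missing, as an antidote, some other message $W_i$ lying in the same component --- coincides with our condition ``$\{W_j\}=\mathcal{W}_k$, $W_i\notin\mathcal{A}_k$, and $W_i,W_j$ in the same alignment subset''. The only point requiring a little care is translating between their graph formalism and our alignment relation, since the edge sets are phrased slightly differently in the two works; once that dictionary is fixed the equivalence is transparent.

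There is no substantive obstacle internal to the corollary: both directions of the ``if and only if'' --- infeasibility via the Shannon-theoretic argument, and achievability of rate $\frac{1}{2}$ via an explicit alignment construction --- are supplied by the proof of the more general Theorem \ref{thm:HRF} in Section \ref{sec:HRF}. The ``hard part'' thus lies entirely in that theorem; the corollary is a one-line consequence, plus optionally the reconciliation remark above.
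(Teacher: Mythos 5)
Your proposal is correct and matches the paper's treatment: the paper likewise obtains Corollary~\ref{cor:HRF} directly by setting $L=1$ in Theorem~\ref{thm:HRF} (and, as you do, points out the agreement with the Blasiak et al.\ rate-half characterization). Your observation that the corollary's ``$i,j\in\mathcal{K}$'' should read $\mathcal{M}$ is a reasonable reading of a minor typo in the statement.
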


The rate-half feasibility condition of Corollary \ref{cor:HRF} has previously been presented by Blasiak et al. in \cite{Blasiak_Kleinberg_Lubetzky} using graph theoretic terminology which involves notions such as  almost-alternating cycles and graph-compatible functions. While the two results are essentially identical, viewing the problem through the lens of interference alignment allows  a much more intuitively transparent statement in terms of alignment subsets. 

While the feasibility conditions are not limited to linear schemes, remarkably, linear coding is sufficient for achievability for feasible settings. The following theorem further elaborates on the  linear coding scheme and the required field size.

\begin{theorem}\label{thm:fieldsize}
Whenever rate half is feasible according to Corollary \ref{cor:HRF}, it can be achieved through:
\begin{enumerate}
\item scalar linear coding over 2 channel uses if the finite field is large enough. Specifically, $|\mathbb{F}| \geq Z$.
\item vector linear coding over a given finite field $\mathbb{F}$ if the number of channel uses $n$ is large enough. Specifically, $n/2\geq\log_{|\mathbb{F}|}Z$ and $n$ is even.
\end{enumerate}
where $Z$ is the number of alignment subsets.
\end{theorem}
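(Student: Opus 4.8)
The plan is to write down an explicit vector linear precoding scheme --- one precoder per alignment subset --- and verify that it satisfies Property~1 and Property~2 of the linear index coding definition. First I would translate the feasibility hypothesis of Corollary~\ref{cor:HRF} into a statement about what each receiver sees after antidote cancellation. Fix a destination $D_k$ with $\mathcal{W}_k=\{W_m\}$ and let $z_m$ be the alignment subset containing $W_m$. After $D_k$ cancels its antidotes, the only messages it must contend with are the $W_i$ with $i\neq m$, $W_i\notin\mathcal{A}_k$ (automatically $W_i\notin\mathcal{W}_k$ as well, since $\mathcal{W}_k$ is a singleton); call these the interfering messages at $D_k$. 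Any two such messages $W_i,W_{i'}$ satisfy $W_i\overset{k}{\leftrightarrow}W_{i'}$ by the definition of the alignment relation, so the entire interfering set at $D_k$ lies inside a single alignment subset, call it $z(k)$. And feasibility (Corollary~\ref{cor:HRF}) precisely forbids an index $i\neq m$ lying in the same alignment subset as $W_m$ with $W_i\notin\mathcal{A}_k$ and $\{W_m\}=\mathcal{W}_k$; hence $z(k)\neq z_m$. Thus, after antidote cancellation, at every destination the desired message lives in one alignment subset and all residual interference in a single distinct one.

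Next I would construct the code, which must work even when $|\mathbb{F}|$ is far smaller than $Z$. Put $q=|\mathbb{F}|$ and $p=n/2$ (an integer since $n$ is even); the hypothesis $n/2\ge\log_{|\mathbb{F}|}Z$ is exactly $q^{p}\ge Z$, so I can pick $Z$ distinct elements $\alpha_1,\dots,\alpha_Z$ of the extension field $\mathbb{F}_{q^{p}}$. Identify $\mathbb{F}_{q^{p}}$ with $\mathbb{F}^{p}$ as an $\mathbb{F}$-vector space and let $M_{\alpha}\in\mathbb{F}^{p\times p}$ be the matrix of ``multiplication by $\alpha$'' (the regular representation), so that $\alpha\mapsto M_{\alpha}$ is $\mathbb{F}$-linear and $M_{\alpha}$ is invertible whenever $\alpha\neq 0$. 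Assign to alignment subset $z$ the $n\times p$ precoder $\mathbf{V}^{(z)}$ whose top $p$ rows are $I_{p}$ and whose bottom $p$ rows are $M_{\alpha_{z}}$, and let every message in subset $z$ use $\mathbf{V}^{(z)}$. Each $\mathbf{V}^{(z)}$ has full column rank $p$, and for $z\neq z'$ a block elimination gives $\det[\mathbf{V}^{(z)}\ \mathbf{V}^{(z')}]=\pm\det(M_{\alpha_{z'}}-M_{\alpha_{z}})=\pm\det(M_{\alpha_{z'}-\alpha_{z}})\neq 0$, so the column spaces of any two distinct subset precoders are complementary in $\mathbb{F}^{n}$.

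Finally I would supply the combining matrices and confirm the two properties. For $D_k$ with $\mathcal{W}_k=\{W_m\}$: if $D_k$ faces interference (all of it along $\mathbf{V}^{(z(k))}$ with $z(k)\neq z_m$), let $\mathbf{U}_{m,k}$ be the top $p\times n$ block of $[\mathbf{V}^{(z_m)}\ \mathbf{V}^{(z(k))}]^{-1}$; then $\mathbf{U}_{m,k}\mathbf{V}^{(z_m)}=I_{p}$ and $\mathbf{U}_{m,k}\mathbf{V}^{(z(k))}=0$. If $D_k$ faces no interference, let $\mathbf{U}_{m,k}=[\,I_{p}\ \ 0\,]$. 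Since every interfering precoder $\mathbf{V}_i$ equals $\mathbf{V}^{(z(k))}$, Property~1 holds, and $\det(\mathbf{U}_{m,k}\mathbf{V}_m)=\det I_{p}\neq 0$ gives Property~2. This is a valid linear index coding scheme with $L_m=p=n/2$ for every $m$, i.e.\ rate $\frac{1}{2}$ per message over $n$ channel uses, which establishes part~2. Part~1 is the special case $n=2$, $p=1$: then $\mathbb{F}_{q^{p}}=\mathbb{F}$, the $M_{\alpha_z}$ are scalars, the requirement collapses to $|\mathbb{F}|\ge Z$, and the resulting scheme is scalar linear over $2$ channel uses.

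I expect the only genuinely non-routine ingredient to be the construction step: producing $Z$ subspaces of $\mathbb{F}^{n}$ that are pairwise complementary over a potentially tiny field. A random choice of precoders fails for small $|\mathbb{F}|$, so the regular-representation construction (a Vandermonde-type choice when $p=1$) is what makes it work, and it is exactly this construction that forces the threshold $|\mathbb{F}|^{n/2}\ge Z$. The remaining pieces --- reducing the feasibility hypothesis to the ``two alignment subsets per receiver'' picture, and checking Properties~1 and~2 --- are short consequences of the definitions.
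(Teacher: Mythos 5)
Your proof is correct and follows the same high-level strategy as the paper: rate half reduces to assigning each alignment subset an $(n/2)$-dimensional subspace of $\mathbb{F}^n$ so that distinct subsets receive pairwise non-intersecting subspaces. Where you differ is in how that assignment is produced and verified. The paper disposes of the existence question by citing \cite{Oggier_Sloane_Diggavi} for the count $\frac{|\mathbb{F}|^n-1}{|\mathbb{F}|^{n/2}-1}=|\mathbb{F}|^{n/2}+1$ of pairwise non-intersecting $(n/2)$-dimensional subspaces (a spread of $\mathbb{F}^n$), and it leaves the verification of Properties~1 and~2 implicit by appealing to the scalar proof of Theorem~\ref{thm:HRF}. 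You instead construct the subspaces explicitly via the regular representation of $\mathbb{F}_{|\mathbb{F}|^{n/2}}$ over $\mathbb{F}$ --- giving $|\mathbb{F}|^{n/2}$ such subspaces, one fewer than the maximum but still enough for the stated threshold $|\mathbb{F}|^{n/2}\ge Z$ --- and you also spell out the ``one desired alignment subset, one interference alignment subset per receiver'' picture extracted from Corollary~\ref{cor:HRF} together with explicit combining matrices, so Properties~1 and~2 are checked directly for the vector-linear scheme. Your version is self-contained; the paper's is shorter but defers the nontrivial combinatorial fact to a reference. Both are, at bottom, the same partial-spread idea.
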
 
\proof From the detailed proof of achievability of Theorem \ref{thm:HRF} presented in Section \ref{sec:achievable}, it is evident that what is needed for half-rate achievability is simply a one-to-one mapping, from each alignment subset, to an $n/2$-dimensional  sub-space of an $n$-dimensional vector space, such that the subspaces assigned to any two alignment subsets are non-intersecting. The number of pairwise non-intersecting $M_t$-dimensional subspaces of an $m$-dimensional vector space over a field $\mathbb{F}$ is shown by \cite{Oggier_Sloane_Diggavi} to be $\frac{|\mathbb{F}|^m-1}{|\mathbb{F}|^{M_t}-1}$ whenever $M_t$ divides $m$. For rate-half achievability, we have $m=n$, $M_t=n/2$, and the result of Theorem \ref{thm:fieldsize} follows.

Note that while it clearly takes only polynomial complexity to identify the alignment subsets (also pointed out by Blasiak et al. \cite{Blasiak_Kleinberg_Lubetzky}), finding the \emph{minimum} possible number of alignment subsets is much more challenging. This is because it is generally possible to further consolidate alignment subsets as long as alignment constraints are not violated. In other words, two alignment subsets may be combined if there does not exist a message in either subset that cannot be aligned with a message in the other subset. Recall that two messages cannot be aligned if one of them is desired at any destination that does not have the other message as an antidote. Finding the minimum number of alignment subsets for a feasible problem is, however, NP-hard. As a consequence, deciding the feasibility of e.g., rate half, when both the field  $\mathbb{F}$ and the number of channel uses $n$ is fixed, is NP-complete. This is noted specifically in the context of rate-half feasibility over a binary field and 2 channel uses, by Dau et al. in \cite{Dau_Skachek_Chee}. On the other hand, as noted above, if either the field size or the number of channel uses can be chosen to be large enough, then there is no need to consolidate the number of alignment sets, and determining the feasibility of rate half involves only polynomial complexity.

\subsection{Symmetric Instances of the Multiple Unicast Index Coding Problem}

As with most multiuser capacity problems, part of the difficulty of the index coding problem lies in the potentially unlimited number of parameters in the number of users and the desired and antidote message sets for each user, which makes a systematic study difficult. In order to limit the number of parameters while still covering broad classes of index coding problems, in this section we study symmetric instances of the multiple unicast index coding problem, e.g., where relative to its own position, each destination is associated with the same set of desired and antidote messages.

Solutions to the multiple unicast index coding problem have been found for a variety of symmetric settings. Since here $\mathcal{M}=\mathcal{K}$ and  destination $k$ desires only message $W_k$, the problem can be represented by a directed graph $G$ on the vertex set $\mathcal{M}$, in which a vertex $i$ is connected to a vertex $j$ if and only if the destination $D_i$ knows $W_j$. Bar-Yossef et al. in \cite{Yossef_Birk_Jayram_Kol_Trans} found the optimal symmetric rate for directed acyclic graphs, perfect graphs, odd holes (undirected odd-length cycles of length at least 5) and odd anti-holes (complements of odd holes).
Also, Blasiak et al. in \cite{Blasiak_Kleinberg_Lubetzky} found the capacity per message of the following symmetric index coding instances (the parameters will become clear later on in this section).
\begin{itemize}
\item Neighboring antidotes where $D=U$, showing that the sum capacity is $ \frac{U+1}{K}$ per message. 
\item  $\mathcal{A}_k=\{W_{k+1},W_{k+K/2}\}$ for even $K$, showing that the sum capacity is $\frac{2}{K}$ per message.
\item Neighboring interference for arbitrary $K$ where $U=D=1$ and showing that the sum capacity is $\frac{\lfloor K/2\rfloor}{K}$ per message.
\end{itemize}

Berliner and Langberg in \cite{Berliner_Langberg} characterized the solution of index coding problems with outerplanar side information graphs in terms of the clique cover size of the information graph where the encoding functions are (scalar) linear. Ong et.al in \cite{Ong_Ho} defined {\em uniprior} index coding problems as the case where  $\mathcal{A}_i \cap \mathcal{A}_j=\O~\text{for}~i\neq j$ and { \em single uniprior} as the case where $\mid \mathcal{A}_i \mid=1,~ \forall i \in \mathcal{K}$. They derived the optimal symmetric rate for all single uniprior index coding problems.

In the following subsections,  we present our capacity results for various symmetric classes of the  index coding problem, mostly inspired by the natural settings for the CBIA problem.

%

\subsubsection{\bf Neighboring antidotes} \label{sec:neighboringantidotes}
Consider a symmetric multiple unicast index coding problem where each destination has a total of $U+D=A<K$ antidotes, corresponding to the $U$ messages before (``up" from) and $D$ messages after (``down" from) its desired message. For this setting, we state the index coding capacity in the following theorem. 
\begin{theorem}\label{thm:sym1}
The capacity of the index coding problem with $M=K<\infty$,  (all subscripts modulo $K$)
\begin{eqnarray}
D_k&=& \left(\{W_k\}, \{W_{k-U}, W_{k-U+1}, \cdots, W_{k-1}\}\cup\{W_{k+1}, W_{k+2}, \cdots, W_{k+D}\}\right)
\end{eqnarray}
and 
\begin{eqnarray}
&&U, D\in\mathbb{Z}\\
&&0\leq U\leq D\\
&&U+D=A<K
\end{eqnarray}
 is
\begin{eqnarray}
C = \left\{\begin{array}{lr}
1,&A=K-1\\
\frac{U+1}{K-A+2U},&A\leq K-2
\end{array}
\right.
\end{eqnarray}
per message.
\end{theorem}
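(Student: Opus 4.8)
The plan is to establish matching upper and lower bounds on the symmetric rate $C$. For the achievability side, the natural approach is to exhibit a vector linear interference alignment scheme; given the emphasis in the paper on subspace alignment, I expect the optimal code to be vector linear over a sufficiently large number of channel uses (or a large enough field), and robust to the choice of field so that it simultaneously gives the DoF of the corresponding CBIA problem. Concretely, I would look for a choice of block length $n$ and precoding matrices $\mathbf{V}_k \in \mathbb{F}^{n \times L}$ with $L/n = C$, such that at each destination $D_k$ the $L$ desired columns of $\mathbf{V}_k$ are linearly independent of the span of the columns of all non-antidote, non-desired $\mathbf{V}_i$. By symmetry (all subscripts mod $K$, translation-invariant antidote structure) it suffices to impose a cyclic structure: set $\mathbf{V}_k$ to be a cyclically shifted version of a single template, so that verifying resolvability at $D_0$ suffices. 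The interference seen at $D_0$ after cancelling antidotes consists of messages $W_i$ with $i \notin \{-U,\dots,-1\}\cup\{0\}\cup\{1,\dots,D\}$, i.e. the $K-A-1$ messages ``opposite'' the desired one on the cycle; these must be compressed into an $n-L$ dimensional complement. Counting dimensions, feasibility of $\frac{L}{n}=\frac{U+1}{K-A+2U}$ amounts to packing $K$ cyclically-structured $L$-subspaces so that each window of consecutive ones stays generic while the far group aligns — I would construct this explicitly using, say, a Vandermonde/companion-matrix template or the alignment-set dimension count from Theorem \ref{thm:fieldsize}'s proof technique, choosing $n$ to be a suitable multiple of the relevant period so all the integer constraints ($U \le D$, $A \le K-2$) are met.

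For the converse, I would use the Shannon-theoretic ``acyclic subset'' / polymatroidal outer bound that underlies index coding capacity results: for any acyclic induced subgraph on a message subset, the sum of those rates is at most $1$ (the bottleneck capacity). More precisely, I would pick a carefully chosen family of message subsets $\mathcal{B}$ such that the side-information graph restricted to $\mathcal{B}$ is acyclic, yielding $\sum_{m \in \mathcal{B}} R_m \le 1$; combined with symmetry $R_m = C$ for all $m$, this gives $C \le 1/|\mathcal{B}|$, and I would choose $\mathcal{B}$ of size $K - A + 2U$ divided appropriately — actually, to get the factor $U+1$ in the numerator, I would instead invoke the stronger bound that for a subset whose restricted graph has independence-like structure, $\sum_{m\in\mathcal{B}} R_m \le$ (size of a maximum acyclic subset), i.e. use the generalized independence number / MAIS bound of Bar-Yossef et al. Specifically, for a window of $K-A+2U$ consecutive messages, a maximal acyclic induced subgraph has size $U+1$, giving $C(K-A+2U) \le U+1$. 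The boundary case $A = K-1$ is immediate since then every destination has all other messages as antidotes and $C=1$ trivially.

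The main obstacle I anticipate is the explicit achievability construction: getting a single cyclic template whose consecutive windows are simultaneously generic (for Property 2, invertibility of $\mathbf{U}_{m,k}\mathbf{V}_m$) while the complementary far block aligns into exactly the right codimension (Property 1), for all values of $U \le D$ with $U+D = A$, and doing so over an arbitrary field. I would expect to handle this by taking $n = K - A + 2U$ times a small factor and building $\mathbf{V}_k$ from blocks of identity columns shared cyclically among the $U+1$ ``representatives'' in each window (a direct generalization of the one-to-one alignment in Examples 2–3), then verifying the $5\times 5$-style determinant conditions reduce, by the circulant symmetry, to a single nonvanishing polynomial identity that holds over any field. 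A secondary subtlety is matching the two regimes of the capacity formula at the boundary $A = K-1$ versus $A \le K-2$ and checking the dimension count degenerates correctly; and on the converse side, making sure the chosen acyclic subset genuinely has the claimed maximum size $U+1$ — this is a small combinatorial lemma about intervals on a cycle with ``up $U$, down $D$'' chords that I would prove separately.
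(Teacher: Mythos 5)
Your achievability sketch points in the right direction: the paper does use a vector linear scheme with $L = U+1$ streams per message over $n = K-A+2U$ uses (exactly $n$, not a multiple), built from $K$ vectors $\mathbf{z}_1,\ldots,\mathbf{z}_K$ in $\mathbb{F}_q^n$ with every $n$ of them independent (e.g.\ Vandermonde), and a cyclic sliding-window precoder $\mathbf{V}_i = [\mathbf{z}_i\ \mathbf{z}_{i+1}\cdots \mathbf{z}_{i+U}]$ giving one-to-one alignment between adjacent messages. You would still need to actually verify resolvability (adjacent interferers overlap in $U$ columns so the $K-A-1$ interferers span only $K-D-1$ dimensions), and you do not separately address $A=K-2$, which the paper handles via the rate-$\tfrac12$ feasibility theorem rather than the generic construction ($A\le K-3$ is needed to have $n\le K$ with strict slack). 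These are fixable details.

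The converse is where the argument genuinely breaks. The standard acyclic-subset bound is $\sum_{m\in\mathcal{B}}R_m\le 1$ for any $\mathcal{B}$ inducing an acyclic subgraph; applied with symmetry it yields only $C\le 1/|\mathcal{B}|$ for an integer $|\mathcal{B}|$, which cannot reach non-reciprocal-integer targets such as $\tfrac{2}{5}$ ($K=5$, $U=D=1$, $A=2$). The ``stronger'' bound you then invoke --- $\sum_{m\in\mathcal{B}}R_m \le \mathrm{MAIS}$ of the induced subgraph on $\mathcal{B}$ --- is not a theorem; it conflates the Bar-Yossef MAIS bound (which lower-bounds the optimal \emph{broadcast length}, giving $C\le 1/\mathrm{MAIS}(G)$ for the whole graph) with a per-subset sum-rate bound that does not follow from it, and in the $K=5$, $U=D=1$ case $\mathrm{MAIS}(G)=2$ gives only $C\le 1/2$, which is not tight. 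The paper's actual converse is a substantially harder entropy argument: it defines $\alpha_j = \sum_{i=1}^K H(S^n \mid W^c_{i,\ldots,i+j-1})$, proves a recursion $\alpha_j \ge \alpha_{j-1} + \tfrac{1}{U+1}\alpha_1 + o(n)$ (Lemma \ref{lemma4}, via a chained submodularity inequality, Lemma \ref{lemma5}) and a split $\alpha_{K-A-1}\ge m\alpha_1 + \alpha_j + o(n)$ (Lemma \ref{lemma3}), then combines these with the decodability chain at destinations $D_{i-D-1}$ to obtain $\sum_i R_i \le \tfrac{(U+1)K}{K-A+2U}$. You would need to replace your converse paragraph entirely with an argument of this type (or an equivalent polymatroid/LP bound that genuinely produces the $\tfrac{U+1}{K-A+2U}$ coefficient); a single acyclic subset, or an unjustified per-subset MAIS inequality, will not do it.
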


As an example, consider the $K=5$ user setting with $A=2$ antidotes, $U=D=1$, where the capacity is $2/5$ per message. Incidentally, example 2 in Section \ref{sec:example2} also has $K=5$ users, $A=2$ antidotes, and has the same capacity of $2/5$ per message. In general, however, for a fixed number of users, $K$, and a fixed number of total antidotes, $A$, the capacity per message depends on the relative position of antidotes. Notably, for  fixed $K, A$ values and neighboring antidotes as considered in Theorem \ref{thm:sym1}, the capacity improves as the number of antidotes on either side becomes more evenly distributed, i.e., as $D-U$ becomes smaller. The best case setting, i.e., the setting with the highest capacity, is when the antidotes are symmetrically distributed on both sides, e.g., $A=2U=2D$, (for even $A$) which leads to a capacity of $\frac{A+2}{2K}$ per message. The worst case setting is when the antidotes are all on the same side, i.e., $A=D, U=0$, which leads to a capacity of $\frac{1}{K-A}$ per message. No interference alignment is needed in the latter case.

The achievability scheme in general is a vector linear coding scheme with one-to-one alignments, where each message is sent through $U+1$ scalar symbols over a $U+1$ dimensional signal space. Adjacent messages overlap in $U$ dimensions. Because of this interference alignment, at any receiver, the total number of signal dimensions occupied by the $K-A-1$ interfering messages is equal to $U+(K-A-1)$. The $U+1$ dimensional desired signal space  is chosen to not have an intersection with the interference space, so that the dimension of the total space, i.e., the number of channel uses $n$ equals $U+(K-A-1)+U+1=K-A+2U$ and the capacity (normalized by the number of channel uses) is $(U+1)/(K-A+2U)$ per message. The details of the achievability proof and the converse are presented in Section \ref{sec:sym1proof}.

\subsubsection{\bf Neighboring interference}
Consider the following CBIA setting. We have a locally connected network with $M=K=\infty$ where each receiver $k$ has only one desired message $W_{k}$ from its corresponding (base station) transmitter $k$. There are totally $U + D+1$ transmitters with non-zero channel coefficients to receiver $k$. One of them is the desired transmitter $k$, $U$ of them are  the transmitters  
before (up from) and $D$ of them are the transmitters after (down from) the desired transmitter, as illustrated in Fig. \ref{fig:interferingneighbors}. Basically, what is sent over the $U+D$ neighboring links constitutes interference for receiver $k$. The  index coding problem  for this locally connected network has the antidote graph that is the complement of the connectivity graph shown in Fig. \ref{fig:interferingneighbors}.

\begin{figure}[!h] \centering
\includegraphics[width=4in]{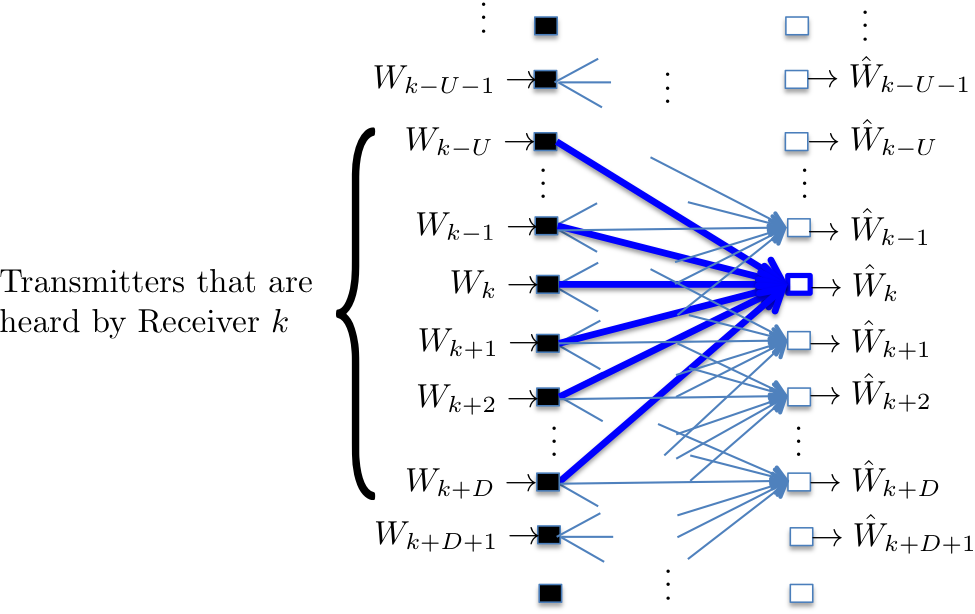}
\caption{\small CBIA setting: Locally connected network representation for multiple unicast with neighboring interference. The corresponding index coding problem has the complementary antidote graph, i.e., there is an antidote link for each transmitter receiver pair that are not connected above, and there is no antidote link for each transmitter receiver pair that are connected in the picture shown above.}
\label{fig:interferingneighbors}
\end{figure}

For such a network, we have the following:
\begin{theorem}\label{thm:sym2}
The capacity of the index coding problem associated with Fig. \ref{fig:interferingneighbors}, where $M=K=\infty$,  
\begin{eqnarray}
D_k&=& \left(\{W_k\}, \{W_{k-U,k-U+1,\cdots,k-1,k,k+1,k+2,\cdots,k+D}^c\}\right)
\end{eqnarray}
and 
\begin{eqnarray}
&&U, D\in\mathbb{Z}\\
&&0\leq U\leq D\\
\end{eqnarray}
 is
\begin{eqnarray}
C = \frac{1}{D+1}
\end{eqnarray}
per message.
\end{theorem}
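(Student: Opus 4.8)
The plan is to prove the two inequalities $C\le\frac{1}{D+1}$ and $C\ge\frac{1}{D+1}$ separately, and in both directions the work is light once one notices that the ``up'' parameter $U$ never enters the answer: because $U\le D$, the problem is governed entirely by the $D$ downstream interferers, and one should think of the whole setting as the neighboring-antidote complement with effective window size $D$.

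For the converse I would exhibit the single set of messages $J=\{W_0,W_1,\ldots,W_D\}$ and show it is a decoding chain when ordered by increasing index. For $0\le a\le D$, the antidote set $\mathcal{A}_a$ is the complement of $\{W_{a-U},\ldots,W_{a+D}\}$, so among the members of $J$ the only ones that $D_a$ has as antidotes are the $W_b$ with $b<a-U$, hence in particular $b<a$. Therefore, given the bottleneck symbol $S^n$ together with all messages $Y$ outside $J$ (there are infinitely many of them, but all the entropies we use are finite), one can decode $W_0$, then $W_1$ --- whose $J$-antidotes are already in hand --- and so on up to $W_D$; this is the standard $\mathrm{MAIS}$/acyclic-induced-subgraph situation, since every arc of the side-information digraph restricted to $J$ points from a larger to a smaller index. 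The usual Fano/entropy estimate (up to the customary $\delta,\epsilon$ slack)
\[
\sum_{a=0}^{D} nR_a\log|\mathcal{S}| = H(W_0,\ldots,W_D) \le I(W_0,\ldots,W_D;S^n,Y)+n\epsilon' \le H(S^n)+n\epsilon' \le n\log|\mathcal{S}|+n\epsilon'
\]
then gives $\sum_{a=0}^{D}R_a\le 1$, i.e.\ $R\le\frac{1}{D+1}$ in the symmetric case. The argument uses $U\ge 0$ only; the hypothesis $U\le D$ is merely a normalization, since by the up--down symmetry $k\mapsto -k$ the set $\{W_0,\ldots,W_U\}$ gives the (weaker) bound $R\le\frac{1}{U+1}$, so in general the capacity equals $\frac{1}{\max(U,D)+1}$.

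For achievability I would give a $(D+1)$-periodic scalar linear scheme over $n=D+1$ channel uses that uses no genuine subspace alignment, only an index-coloring. Assign message $W_k$ the precoding vector $\mathbf{V}_k=e_{1+(k\bmod(D+1))}$, so that coordinate $c+1$ of $S^{D+1}$ carries $\sum_{j\equiv c}x_j$ and each message transmits one symbol per $D+1$ channel uses, i.e.\ rate $\frac{1}{D+1}$. Correctness is immediate from $U\le D$: any two distinct same-colored messages $W_k,W_j$ satisfy $|k-j|\ge D+1$, so $j\notin[k-U,k+D]$ and $k\notin[j-U,j+D]$, whence $D_k$ holds $W_j$ as an antidote and vice versa; thus $D_k$ reads coordinate $1+(k\bmod(D+1))$ of $S^{D+1}$, subtracts all the same-colored interferers, and recovers $x_k$. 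Since all coefficients are $0$ or $1$, the scheme works over any field, so it also runs over $\mathbb{R}$ or $\mathbb{C}$ and simultaneously establishes DoF $\frac{1}{D+1}$ per message for the associated CBIA problem.

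There is no serious obstacle here; the two points that require care are \emph{conceptual} --- recognizing that $U$ does not appear in the answer when $U\le D$, so that one must select the length-$(D+1)$ downstream chain (not some window of size $U+D+1$) for the converse and period $D+1$ (not $U+D+1$) for the code --- and the \emph{bookkeeping} of making ``capacity per message'' and the linear scheme precise in the $M=K=\infty$ regime, which is handled by running the above on the cycle $\mathbb{Z}_N$ with $(D+1)\mid N$ and letting $N\to\infty$, the boundary effects being asymptotically negligible.
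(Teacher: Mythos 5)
Your proof is correct and takes essentially the same route as the paper's. Your achievability is exactly the paper's own remark --- the period-$(D+1)$ coordinate coloring (what the paper phrases as aligning each of the $U$ ``up'' interferers with one of the last $U$ ``down'' interferers) --- and your converse is the same decoding chain over $\{W_i,\ldots,W_{i+D}\}$, presented directly as a MAIS bound rather than via the paper's intermediate genie step that first reduces to the $U=0$ case.
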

Evidently, if each destination has totally $L=U+D$ missing antidotes corresponding to the $U$ messages before (ÒupÓ from) and $D$ messages after (ÒdownÓ from) its desired message, the best case setting, i.e., the setting with the highest capacity, is again when the missing antidotes are symmetrically distributed on both sides, i.e., $U=\lfloor\frac{L}{2}\rfloor$ and $D=\lceil\frac{L}{2}\rceil$, which leads to a capacity of $1/(\lceil\frac{L}{2}\rceil +1)$ per message. The worst case setting is when the missing antidotes are all on the same side, i.e., $D = L$, $U = 0$, which leads to a capacity of $\frac{1}{L+1}$ per message. No interference alignment is needed in the latter case.

The achievability scheme in general is a scalar linear coding scheme with one-to-one alignments, where
each message is sent through one scalar symbol over a $D + 1$ dimensional signal space. $U$ messages before (ÒupÓ from) each desired message are aligned with the last $U$ messages among $D$ messages after (ÒdownÓ from) that desired message, respectively. Because of this interference alignment, at any receiver, the total number of
signal dimensions occupied by the $D+U$ interfering messages is equal to $D$. The one dimensional desired signal is chosen to not have intersection with the interfering signal space and over $D+1$ channel uses, it is resolvable. The details of the achievability proof and the converse are presented in Section \ref{sec:sym2proof}.

\subsubsection{ X network setting with local connectivity}\label{sec:X}
Consider the following CBIA setting. We have a locally connected network where each destination is connected to $L$ consecutive base stations.  Suppose it is the X network setting where each base station has a distinct message for each connected destination. Without loss of generality, we can rename the destinations such that destination $i$ is connected to sources $i,i+1,\ldots,i+L-1$ as depicted in Figure \ref{fig:Xnetwork}. The index coding problem for this locally connected network 
has the antidote graph that is the complement of the connectivity graph shown in Fig. \ref{fig:Xnetwork}. For such a network, we have the following:
\begin{theorem}\label{thm:sym3}
The capacity of the symmetric index coding problem with $M=KL$ and $M,K\rightarrow\infty$, where
\begin{eqnarray}
\mathcal{W}_k&=& \{W_{kL,kL+L-1, (k+1)L+L-2,\cdots,(k+i)L+L-i-1,\cdots, (k+L-2)L+1}\}, \nonumber\\
\mathcal{A}_k&=&\{W_{(k-1)L+1:(k-1)L+L-1,\cdots,(k+i)L+1:(k+i)L+L-i-2,(k+i)L+L-i:(k+i)L+L,\cdots,(k+L-2)L+2:(k+L-2)L+L}\cup\mathcal{W}_k\}^c\nonumber
\end{eqnarray}
 is
\begin{eqnarray}
C = \frac{2}{L(L+1)}
\end{eqnarray}
per message.
\end{theorem}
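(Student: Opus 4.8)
\proof My plan is to prove matching inner and outer bounds, both exploiting the shift-invariance of the network (relabel messages as $W_{s\to k}$, the message from source $s$ to its connected destination $k$; destination $k$ is connected to the $L$ consecutive sources $s=k,k+1,\ldots,k+L-1$, and the shift $k\mapsto k+1$ is a symmetry).

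\textbf{Achievability.} I would generalize the construction of Example~3 in Section~\ref{sec:Xexample}, which is exactly the $L=3$ instance of this theorem. The scheme is a scalar linear code over $n=\frac{L(L+1)}{2}$ channel uses, with precoders $\mathbf V_m\in\mathbb F^{n\times 1}$ assigned by a pattern periodic under the shift (so only finitely many distinct precoders need to be chosen, and, since decoding is local, the periodic pattern achieves the claimed rate directly on the infinite network, or on any cyclic version with $K\to\infty$). Destination $k$ has as antidotes every message originating at a source \emph{outside} $\{k,\ldots,k+L-1\}$; after canceling the antidotes from $S^n$ it is left, in the $n$-dimensional space, with its $L$ desired precoders together with the $L(L-1)$ interfering precoders (the other messages of those $L$ sources). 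The design goal is that at every destination the $L(L-1)$ interfering precoders span only $n-L=\frac{L(L-1)}{2}$ dimensions and that the $L$-dimensional desired span is a complement of the interference span. Fixing a basis $\mathbf T_1,\ldots,\mathbf T_n$ and sweeping the destinations in order, these requirements reduce --- exactly as in (\ref{eqn:alignexample0})--(\ref{eqn:lincomb7}) --- to a system of bilinear equations in the expansion coefficients of the (finitely many) precoders; I would solve it either by a closed-form pattern extending the $\pm 1$ solution found for $L=3$, or, allowing $|\mathbb F|$ to be large, by a Schwartz--Zippel argument showing generic coefficients work. Property~1 of the linear-coding definition then holds by construction, and Property~2 (equivalently $\mathcal D_k\cap\mathcal I_k=\{0\}$, which since $L+\frac{L(L-1)}{2}=n$ is the full-rankness of the combined desired-plus-aligned-interference matrix) holds generically or for the explicit choice. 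Because $\mathbb F$ is arbitrary the same scheme runs over $\mathbb C$, giving DoF $\frac{2}{L(L+1)}$ per message for the corresponding CBIA $X$ network.

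\textbf{Converse.} I would use the standard acyclic index-coding bound: if a set of messages $\mathcal J$ admits an ordering $W_{j_1},\ldots,W_{j_t}$ such that for each $\ell$ the (unique) destination desiring $W_{j_\ell}$ has \emph{none} of $W_{j_\ell},\ldots,W_{j_t}$ as an antidote, then $\sum_\ell R_{j_\ell}\le 1$ --- this is immediate from $H(\mathcal J)\le H(S^N\mid\mathcal J^c)+\sum_\ell H(W_{j_\ell}\mid S^N,\mathcal J^c,W_{j_1},\ldots,W_{j_{\ell-1}})$, the first term being $\le N$ (normalized) and each summand vanishing by Fano once the conditioning determines the antidote set of the relevant destination. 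I would then exhibit one such $\mathcal J$ of size $\frac{L(L+1)}{2}$: fix a destination $D_c$ and, for $i=0,1,\ldots,L-1$, put into $\mathcal J$ the $L-i$ messages desired by $D_{c-i}$ that originate at sources in $\{c,c+1,\ldots,c+L-1-i\}$. These are distinct, all originate in the single window $\{c,\ldots,c+L-1\}$, and number $L+(L-1)+\cdots+1=\frac{L(L+1)}{2}$. Ordering $\mathcal J$ layer by layer ($D_c$'s messages first, then $D_{c-1}$'s, then $D_{c-2}$'s, $\ldots$), when one reaches layer $i$ the remaining tail consists only of messages originating at sources in $\{c,\ldots,c+L-1-i\}\subseteq\{c-i,\ldots,c-i+L-1\}$ --- precisely the window connected to $D_{c-i}$ --- so $D_{c-i}$ has none of the tail as an antidote and can decode the current message from $S^N$ and the already-revealed messages. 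Hence $\sum_{j\in\mathcal J}R_j\le 1$, and since the problem is shift-invariant the symmetric-rate point $(C,C,\ldots)$ satisfies $\frac{L(L+1)}{2}\,C=\sum_{j\in\mathcal J}R_j\le 1$, i.e.\ $C\le\frac{2}{L(L+1)}$.

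\textbf{Main obstacle.} The converse above is short and robust; the real difficulty is the achievability --- proving that the staircase of bilinear alignment equations is solvable for \emph{every} $L$ while simultaneously keeping the desired subspace transversal to the aligned interference at every destination (Property~2). For $L=3$ this already required solving the nonlinear system (\ref{eqn:lincomb6})--(\ref{eqn:lincomb7}) explicitly; for general $L$ one must either produce a pattern that provably solves the analogous system and verify the rank condition uniformly in $L$, or set things up so that a single genericity argument disposes of both at once. \QED
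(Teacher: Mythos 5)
Your converse is sound and is essentially the paper's argument in different clothing: the paper exhibits the same triangular set of $\frac{L(L+1)}{2}$ messages (its $\mathcal{W_O}$, the colored entries above the line in the diamond (\ref{eqn:xrec})), and establishes $\sum_{j\in\mathcal{W_O}}R_j\le 1$ via a genie-aided decoding chain at a single destination rather than citing the acyclic-ordering lemma directly, but the information-theoretic content is identical and your layered ordering is a valid acyclic ordering. (Your set is the mirror image of theirs, indexed by $D_{c-i}$ instead of $D_{k+i}$; either works.)

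Your achievability proposal, however, misidentifies the structure of the problem and consequently proposes a much harder route than is needed. You propose to generalize the \emph{subspace}-alignment construction of Example~3 (Section~\ref{sec:Xexample}), reducing to a staircase of bilinear equations in the precoder coefficients and solving by an explicit pattern or a Schwartz--Zippel genericity argument; you flag this as the main obstacle. But Example~3 is the \emph{finite} ($K=5$) cyclic instance, and it is precisely the cyclic wrap-around at small $K$ that forces linear-combination precoders and the attendant nonlinear constraints. Theorem~\ref{thm:sym3} concerns $K\to\infty$, and there the paper's scheme is a purely combinatorial \emph{one-to-one} alignment: each message is assigned a single vector from a fixed basis $\mathbf{V}_1,\ldots,\mathbf{V}_{L(L+1)/2}$ according to the periodic table (\ref{eqn:xvec}), with exactly two messages riding each basis vector in a period. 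No linear combinations of basis vectors appear anywhere, so there is no bilinear system to solve, and Property~2 reduces to checking that the $L$ desired entries on any anti-diagonal of (\ref{eqn:xvec}) land on distinct basis vectors disjoint from the interfering ones---a direct bookkeeping check. The precoders can in fact be taken as the columns of the $\frac{L(L+1)}{2}\times\frac{L(L+1)}{2}$ identity over $\mathbb{F}_2$, i.e.\ the scheme is orthogonal (no coherence is required in the CBIA interpretation). The paper's conclusion makes this contrast explicit: sophisticated subspace alignment is needed for finite $K$, whereas simple orthogonal one-to-one alignment suffices as $K\to\infty$. So the obstacle you highlight is a red herring for this theorem; if you follow the paper's combinatorial assignment, there is nothing nonlinear left to solve.
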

\begin{figure}[!t] \centering
\includegraphics[width=4in]{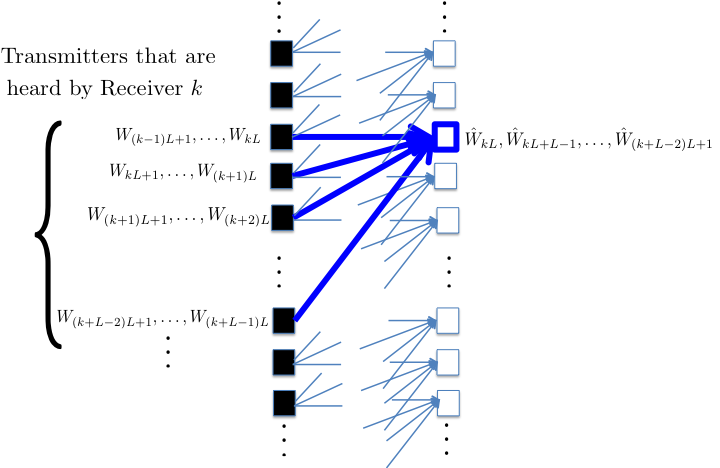}
\caption{\small CBIA setting: X network setting with local connectivity representation. The corresponding index coding problem has the complementary antidote graph, i.e., there is an antidote link for each transmitter receiver pair that are not connected above, and there is no antidote link for each transmitter receiver pair that are connected in the picture shown above.}
\label{fig:Xnetwork}
\end{figure}
The achievability scheme in general is a scalar linear coding scheme with one-to-one alignments, where
each message is sent through one scalar symbol over a $\frac{L(L+1)}{2}$ dimensional signal space. $L^2-L$ interfering messages are aligned into $\frac{L(L+1)}{2}-L$ dimensions using the pattern shown in (\ref{eqn:pattern}) where this pattern is repeated for each group of $L+1$ consecutive transmitters for the shown CBIA setting in Figure \ref{fig:Xnetwork}. The $L$ desired symbols are chosen to not have intersection with the interfering signal space and over $\frac{L(L+1)}{2}$ channel uses, they are resolvable. The details of the achievability proof and the converse are presented in Section \ref{sec:Xproof}.

 \section{Proof of Theorem \ref{thm:insuff}: Insufficiency of Linear  Codes for Multiple Unicast Index Coding}\label{sec:insuff}
 \noindent{\it Notation:}
We use the overline notation to describe quantities in the equivalent multiple unicast setting. For example, for a given achievable scheme over the multiple unicast setting $\overline{\mathcal{S}},\overline{n}, \overline{\mathcal{R}}$ represent the alphabet, the number of channel uses, and the rate vector respectively. The codeword and the encoding function for this achievable scheme are denoted as $\overline{{S}}^{n}$ and $\overline{f}(\overline{W}_{1,0},\overline{W}_{1,1},\ldots,\overline{W}_{M,L})$ respectively.

\subsection{Proof of Theorem \ref{thm:MUConstruction} - Property 1}
The rate $(R_1, R_2, \ldots, R_M)$ is achievable in the groupcast setting. This means that, for every given $\delta>0$ there exists $n$ and an encoding function $f(W_1, W_2, \ldots, W_M)$  such that over $n$ channel uses, destination $D_m$ can decode message $W_{\lceil m/L \rceil}$ of rate $R_{\lceil m/L \rceil} - \delta_{\lceil m/L \rceil}$ with a probability of error \emph{equal}\footnote{This is because, as noted in Section \ref{sec:Problem Formulation} the $\epsilon$-error capacity is the same as the zero-error capacity in the index coding problem.} to $0$ for some $\delta_{\lceil m/L \rceil} \leq \delta$. Without loss of generality, assume that the alphabet for the achievable scheme is $\mathcal{S} = \{0,1\}.$ Now, we turn to the equivalent  multiple unicast setting. In this setting, we will show that for any $\epsilon > 0$ there exists an achievable encoding scheme (over the same alphabet $\overline{\mathcal{S}} = \mathcal{S}= \{0,1\}$) with rates as in (\ref{eq:rates}) such that the probability of error for each message is smaller than $\epsilon$. The encoding scheme for achievability can be described in the following three steps.
\begin{itemize}
\item[Step 1:] Re-use $f$ to encode $\overline{W}_{i,j}, j \neq 0,i=1,2,\ldots,M$, and 
\item[Step 2:] form a random codebook for messages $\overline{W}_{i,0}, i=1,2,\ldots,M$, and then, 
\item[Step 3:] superpose the $M+1$ codewords formed in Step 1 and Step 2.  
\end{itemize}
We will show that this approach achieves the desired rates in the multiple unicast setting. 
For the achievable scheme, we set the block length to be $\overline{n}=n n_0$ where $n_0$ is chosen to be sufficiently large for purposes that will be described. For this blocklength, we have $|\overline{\mathcal{W}}_{i,j}| = 2^{n n_0 (R_{i}-\delta_i)}.$ For Step 1 above, we first represent $\overline{W}_{i,j}$ as  $n n_0 (R_{i}-\delta_i)$ bits, and then split it into $n_0$ blocks of length $n(R_i - \delta_i).$ Thus, we have 
\begin{equation} \overline{W}_{i,j} = (\overline{W}_{i,j}^{[1]},\overline{W}_{i,j}^{[2]},\ldots, \overline{W}_{i,j}^{[n_0]})\label{eq:message_splitting1}\end{equation}
where $\overline{W}_{i,j}^{[k]}$ is a bit vector of length $n(R_i - \delta_i)$. Now, we set 
$$\overline{W}_{i}^{[m]} = \sum_{j=1}^{L}\overline{W}_{i,j}^{[m]}$$
where the sum above is a bit-wise XOR of all the messages. Note that $\overline{W}_{i}^{[m]}$ is a bit-vector of length $n(R_i - \delta_i)$. Now, we are ready to form the codeword of Step 1, denoted as $\overline{S}_1^{n n_0} \in \{0,1\}^{n n_0}$.
\begin{equation}
\label{eq:step1}\overline{{S}}_1^{n n_0} = \left[\begin{array}{c}{f}(\overline{W}_1^{[1]}, \overline{W}_{2}^{[1]},\ldots,\overline{W}_{M}^{[1]})\\ 
{f}(\overline{W}_1^{[2]}, \overline{W}_{2}^{[2]},\ldots,\overline{W}_{M}^{[2]})\\
\vdots\\
{f}(\overline{W}_1^{[n_0]}, \overline{W}_{2}^{[n_0]},\ldots,\overline{W}_{M}^{[n_0]})
\end{array}\right] \define \left[\begin{array}{c} \overline{S}_{1}^{[1]}\\ \overline{S}_{1}^{[2]} \\ \vdots \\ \overline{S}_{1}^{[n_0]} \end{array} \right] \end{equation}
where $\overline{S}_1^{[m]}$ is an $n$ bit column vector. 

Now, we proceed to Step 2. Message $\overline{W}_{k,0},$ is encoded in $\{0,1\}^{n n_0}$ by treating the codeword as $n_0$ block-symbols, each of size $n,$ and using i.i.d. random coding for each block. In particular, we generate a random codebook of length $n_0$ over alphabet ${\{0,1\}}^{n},$ where each of the $n_0$ entries are chosen uniformly over all possible $n$-bit vectors, and independently of each other. We denote by $\overline{{S}}_{i,0}^{n n_0},$ the codeword corresponding to $\overline{W}_{i,0},$ and set 
\begin{equation} \overline{{S}}_{0}^{n n_0} = \sum_{i=1}^{M} \overline{{S}}_{i,0}^{n n_0},\label{eq:step2}\end{equation}
where  $\overline{{S}}_{0}, \overline{{S}}_{i,0} \in \{0,1\}$.

We are now ready for Step 3. The codeword sent on the multiple unicast channel is set as 
$$\overline{{S}}^{nn_0} = \overline{{S}}_{1}^{nn_0} + \overline{{S}}_{0}^{nn_0},$$
where $\overline{S}_{1}^{nn_0}$ and $\overline{S}_{0}^{nn_0}$ are chosen from (\ref{eq:step1}) and (\ref{eq:step2}), and the addition is bit-wise XOR of the corresponding entries of the vector.

Now we need to show that each message can be decoded with a probability of error smaller than $\epsilon$. 
First, consider message $\overline{W}_{i,j}, j \neq 0$ to be decoded at destination $\overline{D}_{i,j}$. Note that this destination has, as antidotes, $\mathcal{W}_{.,0}$ because of (\ref{eq:equivalent_antidotes}). Therefore, it can compute $\overline{S}_{0}^{nn_0}$ and subtract it from the received codeword to obtain $f(\overline{W}_1^{[m]}, \overline{W}_{2}^{[m]},\ldots,\overline{W}_{M}^{[m]})$ for $m=1,2,\ldots,n_0$. For all $l$ such that $W_{l}\in \mathcal{A}_{(i-1)L+j},$ this destination has, as antidote, $\{\overline{W}_{l,j}:j=0,1,2,\ldots, L\}$ and can therefore compute $\overline{W}_{l}$. Now, destination $\overline{D}_{i,j}$ resembles destination $D_{(i-1)L+j}$ in the groupcast setting. In particular, the decoding strategy used by $D_{(i-1)L+j}$ in the groupcast setting can be used by $\overline{D}_{i,j}$ in the equivalent multiple unicast setting to decode $\overline{W}_{i}^{[m]},$ free of error.  Also, note that the destination has $\{\overline{\mathcal{W}}_{i,l}: l \neq j\}$ as antidotes. Therefore, it can decode $\overline{W}_{i,j}^{[m]}$ from $\overline{W}_{i}^{[m]}.$

Now, consider destination $\overline{D}_{i,0}$. This destination has $\overline{\mathcal{W}} - \overline{\mathcal{W}}_{i,.}$ as antidotes. Therefore, it can compute $\overline{S}_{k,0}^{nn_0}, k \neq  i$ and cancel the effect of this to obtain $\overline{S}_{1}^{nn_0}+ \overline{S}_{i,0}^{nn_0}.$ Intuitively, the signal obtained after this cancellation can be interpreted as an additive noise channel over alphabet $\{0,1\}^{n},$ where the noise vector faced by the $m$th channel use is $\overline{S}_{1}^{[m]},m=1,2,\ldots,n_0.$ Using the random coding argument, for sufficiently large $n_0$, $\overline{W}_{i,0}$ can be decoded by this destination with probability of error smaller than $\epsilon,$ for any $\epsilon > 0$ for any rate $R_{i,0}$ up to the mutual information


\begin{eqnarray}
\label{eq:part1proof1}
 \frac{1}{nn_o}I\left(\overline{S}_{1}^{nn_0}+ \overline{S}_{i,0}^{nn_0}, \overline{\mathcal{A}}_{i,0}; \overline{W}_{i,0}\right)
&=& \frac{1}{nn_o}I\left(\overline{S}_{1}^{nn_0}+ \overline{S}_{i,0}^{nn_0} ; \overline{W}_{i,0} | \overline{\mathcal{A}}_{i,0}\right)\\
&=& \frac{1}{nn_o}H(\overline{S}_1^{nn_0}+ \overline{S}_{i,0}^{nn_0}| \overline{\mathcal{A}}_{i,0}) - \frac{1}{nn_o}H(\overline{S}_1^{nn_0}+ \overline{S}_{i,0}^{nn_0}|\overline{\mathcal{A}}_{i,0}, \overline{{W}}_{i,0})\\
&=&1 - \frac{1}{nn_o}H(\overline{S}_1^{nn_0}|\overline{\mathcal{A}}_{i,0},\overline{{W}}_{i,0})\label{eq:(a)}\\
&=& 1 - (R_i-\delta_i)
\label{eq:part1proof2}
\end{eqnarray}
where, (\ref{eq:(a)}) comes from the fact that $\overline{S}_{i,0}^{nn_0}$ is uniformly distributed over $\{0,1\}^{nn_0}$ through the random coding construction (and is independent of $\overline{\mathcal{A}}_{i,0}$). The final  bound comes from the fact that given $(\overline{\mathcal{A}}_{i,0},\overline{{W}}_{i,0})$, $S_1^{nn_o}$ is an invertible function of $\overline{W}_i$ and $H(\overline{W}_i)=nn_o(R_i-\delta_i)$.

%

\subsection{Proof of Theorem \ref{thm:MUConstruction} - Property 2}
Here, we first consider the case where rate $(R_1, R_2, \ldots, R_K)$ is achievable in the groupcast setting via a \emph{linear} coding based achievable scheme. This means that for any $\delta>0,$ there exists $n,\mathbf{U}_{l}, \mathbf{V}_{k}, l=1,2,\ldots,ML, k=1,2,\ldots, M$, such that
\begin{eqnarray}
&& {\bf U}_{l}{\bf V}_{k}=0, ~~~~\forall l=1,2,\ldots,{ML},   W_{k}\notin\mathcal{A}_{l}, k \neq \lceil l/L\rceil \label{eq:groupcast_linear1}\\
&& \mbox{det}\left({\bf U}_{l}{\bf V}_{\lceil l/L \rceil}\right)\neq 0, ~~~~\forall l \in \{1,2,\ldots, ML\}, \label{eq:groupcast_linear2}
\end{eqnarray}
where $\mathbf{V}_{k}$ is a $n \times n(R_k - \delta_k)$ dimensonal matrix, and $\mathbf{U}_{l}$ is a $n(R_{\lceil{l/L}\rceil} - \delta_{\lceil{l/L}\rceil}) \times n$ matrix, where $\delta_k \leq \delta$.
The encoding scheme for this setting is 
$$ {S}^{n} = \sum_{k=1}^{M}\mathbf{V}_k {X}_{k} ,$$
where $S\in \mathbb{F}$ and $X_k$ is a $n(R_k-\delta_k)\times 1$ vector. Now, we turn to the equivalent multiple unicast setting, where we provide linear encoding achievable scheme for rates
\begin{equation}\overline{R}_{i,j} = \left\{\begin{array}{cc} R_{i}-\delta_i, & j \neq 0 \\ 1-R_{i}, & j =0\end{array}\right\}. \end{equation}
The above automatically implies achievability of rate (\ref{eq:rates2}).
 The encoding scheme for the multiple unicast setting is formed by setting $\overline{\mathbf{V}}_{i,j} = \mathbf{V}_{i}$ for $j\neq 0$ so that 
$$ \overline{S}^{n} = \sum_{i,j} \overline{\mathbf{V}}_{i,j}\overline{X}_{i,j} =  \sum_{i=1}^{M} \mathbf{V}_{i}\left(\overline{X}_{i,1}+ \overline{X}_{i,2}+\ldots+\overline{X}_{i,L}\right) + \sum_{i=1}^{M} \overline{\mathbf{V}}_{i,0} \overline{X}_{i,0}$$
where $\overline{S} \in \mathbb{F},$ $\overline{X}_{i,j}$ is a $n(R_{i}-\delta_i)\times 1$ vector for $j \neq 0$ and $n(1-R_i) \times 1$ vector for $j=0$.  The matrix $\overline{\mathbf{V}}_{i,0}$ is chosen to be a $n \times n(1-R_{i})$ full rank matrix so that 
$$\mbox{colspan}(\mathbf{V}_{i}) \cap \mbox{colspan}(\overline{\mathbf{V}}_{i,0}) = \{0\}.$$ Since $\mbox{rank}(\mathbf{V}_{i}) \leq n(R_{i} - \delta_i)$ there exists a matrix $\overline{\mathbf{V}}_{i,0}$ satisfying the above condition. This condition automatically implies that there exists $\overline{\mathbf{U}}_{i,0}$ such that 
$$\overline{\mathbf{U}}_{i,0} \mathbf{V}_{i} = 0,$$
$$\mbox{det}(\overline{\mathbf{U}}_{i,0} \overline{\mathbf{V}}_{i,0}) \neq 0.$$
Since $\{\overline{W}_{i,j}: j=1,2,\ldots,L\}$ are the only messages not present at $\overline{D}_{i,0}$ as antidotes, and all these messages are encoded using $\mathbf{V}_{i},$ the above equations imply that the destination can decode message $\overline{W}_{i,0}.$ Now, consider destination $\overline{D}_{i,j}, j \neq 0.$ We can set $\overline{\mathbf{U}}_{i,j} = \mathbf{U}_{(i-1)L+j}.$ Then, we have for $j\neq 0,$
\begin{eqnarray*} \mathbf{U}_{(i-1)L+j} \mathbf{V}_{k} &=& 0, \forall k \notin \mathcal{A}_{(i-1)L+j}, k \neq i\\
\Rightarrow  \overline{\mathbf{U}}_{i,j} \overline{\mathbf{V}}_{k,l} &=&0, \forall \overline{W}_{k,l} \notin \overline{\mathcal{A}}_{i,j}, k \neq i \end{eqnarray*}
and, because of (\ref{eq:groupcast_linear2}), we have
$$\mbox{det}\left(\overline{\mathbf{U}}_{i,j} \overline{\mathbf{V}}_{i,j}\right) = \mbox{det}\left(\mathbf{U}_{(i-1)L+j}\mathbf{V}_{i}\right) \neq 0$$
as required. This ensures achievability of rates (\ref{eq:rates2}) via linear coding in the equivalent multiple unicast setting.


Now, we consider the case where rates in (\ref{eq:rates2}) are achievable in the multiple unicast problem via linear coding. This means that for any $\delta>0$ there exist matrices $\overline{\mathbf{V}}_{i,j}, \overline{\mathbf{U}}_{i,j}, i=1,2,\ldots,M$ $j=0,1,2,\ldots,L$, such that 
\begin{equation} \overline{\mathbf{U}}_{i,j} \overline{\mathbf{V}}_{k,l} = 0, \forall k,l\mbox{ s.t. }\overline{W}_{k,l} \notin \overline{\mathcal{A}}_{i,j}, (k,l)\neq(i,j) \label{eq:MULinearAchievability1} \end{equation}
$$ \mbox{det}(\overline{\mathbf{U}}_{i,j} \overline{\mathbf{V}}_{i,j}) \neq 0$$
where $\overline{\mathbf{V}}_{i,j}$ is a $n \times n(R_{i}-\delta_{i,j})$ matrix if $j\neq 0$ and $n \times n(1-R_{i} - \delta_{i,0})$ matrix if $j = 0,$ for some $\delta_{i,j} \leq \delta,j=0,1,2,\ldots,L,i=1,2,\ldots,M$. Similarly, $\overline{\mathbf{U}}_{i,j}$ is a $n(R_i - \delta_{i,j}) \times n$ matrix if $j \neq 0$ and it is a $n(1-R_i - \delta_{i,0}) \times n$ matrix if $j=0.$ Now, note that the above implies that 
$$\overline{\mathbf{U}}_{i,0}\overline{\mathbf{V}}_{i,j} = 0, j=1,2,\ldots,L.$$
We now intend to show that $(R_1, R_2,\ldots, R_M)$ is achievable in the groupcast setting via linear coding. For the achievable scheme on the groupcast setting, we choose \begin{equation} \mathbf{V}_{i} = \langle \bigcap_{j=1}^{L} \mbox{colspan}(\overline{\mathbf{V}}_{i,j})\rangle.\label{eq:Vconstruction}\end{equation}
where, for a set of column vectors $\mathcal{V},$ the matrix denoted by $\langle \mathcal{V} \rangle$ is a full rank matrix whose columns span $\mathcal{V}.$ We now intend to show that rate $R_{i}^{'}$ is achievable for  $W_{i}$ where $nR_{i}^{'}=\mbox{rank}(\mathbf{V}_{i}).$ To show this, we need to prove the following. 
\begin{itemize}
\item[(C1)] We need to show that there exist $nR_i^{'} \times n$ decoding matrices $\mathbf{U}_{i},i=1,2,\ldots,ML$ that satisfy the desired critiera, i.e., that 
$$\mathbf{U}_{l} \mathbf{V}_{m} = 0,W_m \notin \mathcal{A}_{l}, m \neq \lceil l/L \rceil,$$ 
$$ \mbox{det}\left(\mathbf{U}_{l} \mathbf{V}_{\lceil l/L \rceil}\right) \neq 0.$$ 
\item[(C2)] We need to show that the above has an appropriate rate, i.e., we need to show that $R_{i}^{'}$ can be made arbitrarily close to $R_{i}.$ 
\end{itemize}
Instead of (C1) above, we show equivalently that there exist matrices $\mathbf{U}_{i}^{'}$ such that 
\begin{eqnarray*}
\mathbf{U}_{l}^{'} \mathbf{V}_{m} &=& 0, \forall~W_m \notin \mathcal{A}_{l}\cup \{W_{\lceil l/L\rceil}\}\\
\mbox{rank}(\mathbf{U}_{l}^{'} \mathbf{V}_{\lceil l/L \rceil}) &=& \mbox{rank}(\mathbf{V}_{\lceil l/L \rceil}),\forall~i\in\{1,2,\ldots,ML\}.
\end{eqnarray*}
For this we set \begin{equation}\label{eq:Uconstruction}\mathbf{U}^{'}_{(i-1)L+j} = \overline{\mathbf{U}}_{i,j}.\end{equation} Let $W_{m} \notin \mathcal{A}_{l}, m \neq \lceil l/L \rceil.$ Because of (\ref{eq:equivalent_antidotes}), we infer that 
\begin{eqnarray}
\overline{W}_{m,k} &\notin& \overline{\mathcal{A}}_{\lceil l/L \rceil, l- L(\lceil l/L \rceil-1)}\cup \{\overline{W}_{\lceil l/L \rceil, l- L(\lceil l/L \rceil-1)}\}, \forall k=1,2,\ldots,L\\
\Rightarrow \overline{\mathbf{U}}_{\lceil l/L \rceil, l- L(\lceil l/L \rceil-1)} \overline{\mathbf{V}}_{m, k} &=& 0, \forall k=1,2,\ldots,L\\
\Rightarrow \mathbf{U}^{'}_{l} \mathbf{V}_{m} &=& 0
\end{eqnarray}
where the final equation comes from (\ref{eq:Vconstruction}) and (\ref{eq:Uconstruction}). In particular, the final equation comes from noting that $\mbox{colspan}(\mathbf{V}_{m})\subseteq \mbox{colspan}(\mathbf{V}_{m,l}),l\neq 0.$ Similarly, we have 
\begin{eqnarray}
\mbox{det}\left(\overline{\mathbf{U}}_{\lceil l/L \rceil, l- L(\lceil l/L \rceil-1)} \overline{\mathbf{V}}_{\lceil l/L \rceil, l- L(\lceil l/L \rceil-1)} \right)&\neq& 0\\
\Rightarrow \mbox{det}\left(\mathbf{U}_{l}^{'}\overline{\mathbf{V}}_{\lceil l/L \rceil, l- L(\lceil l/L \rceil-1)}\right) &\neq& 0\\
\Rightarrow \mbox{rank}\left(\mathbf{U}_{l}^{'}\mathbf{V}_{\lceil l/L \rceil}\right) &=& \mbox{rank}\left(\mathbf{V}_{\lceil l/L \rceil}\right),
\end{eqnarray}
 where, the final equation follows because $\mbox{colspan}(\mathbf{V}_{\lceil l/L\rceil}) \subseteq \mbox{colspan}(\overline{\mathbf{V}}_{\lceil l/L\rceil,l-L(\lceil l/L\rceil -1)}).$  This completes the proof of (C1). We now need to show (C2).
To show this, we use the following Lemma 
\begin{lemma}
\label{lem:UsefulForInsuff}
Let $\mathbf{A},\mathbf{B},\mathbf{C}$ be three matrices respectively of sizes $n \times D_{A}, n \times D_{B}$ and $n \times D_{C}.$ If $\mbox{colspan}(\mathbf{A}) \cap \mbox{colspan}(\mathbf{C})=\{0\}$ and $\mbox{colspan}(\mathbf{B}) \cap \mbox{colspan}(\mathbf{C})=\{0\}$, then, 
$$\mbox{dim}\left(\mbox{colspan}(\mathbf{A})\cap \mbox{colspan}(\mathbf{B})\right) \geq \mbox{rank}(\mathbf{A})+\mbox{rank}(\mathbf{B})+\mbox{rank}(\mathbf{C})-n.$$
\end{lemma}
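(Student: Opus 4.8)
The plan is to forget the matrices and argue purely about the three subspaces $\mathcal{U}=\mathrm{colspan}(\mathbf{A})$, $\mathcal{V}=\mathrm{colspan}(\mathbf{B})$, $\mathcal{W}=\mathrm{colspan}(\mathbf{C})$ of $\mathbb{F}^{n}$, with $a=\mathrm{rank}(\mathbf{A})$, $b=\mathrm{rank}(\mathbf{B})$, $c=\mathrm{rank}(\mathbf{C})$, and to extract the role of $\mathcal{W}$ by passing to the two ``inflated'' spaces $\mathcal{U}+\mathcal{W}$ and $\mathcal{V}+\mathcal{W}$. The two pairwise-disjointness hypotheses enter exactly once, to make these sums direct: $\mathcal{U}\cap\mathcal{W}=\{0\}$ gives $\dim(\mathcal{U}+\mathcal{W})=a+c$, and $\mathcal{V}\cap\mathcal{W}=\{0\}$ gives $\dim(\mathcal{V}+\mathcal{W})=b+c$.

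Next I would apply the modular identity $\dim X+\dim Y=\dim(X+Y)+\dim(X\cap Y)$ twice: once to $X=\mathcal{U}+\mathcal{W}$ and $Y=\mathcal{V}+\mathcal{W}$, whose span is $\mathcal{U}+\mathcal{V}+\mathcal{W}$, and once to expand $\dim(\mathcal{U}+\mathcal{V}+\mathcal{W})$ via $\dim(\mathcal{U}+\mathcal{V})=a+b-\dim(\mathcal{U}\cap\mathcal{V})$ and $\dim((\mathcal{U}+\mathcal{V})\cap\mathcal{W})$. After cancelling the two direct-sum contributions recorded above, the bookkeeping collapses to the exact identity
\[
\dim\!\big((\mathcal{U}+\mathcal{W})\cap(\mathcal{V}+\mathcal{W})\big)=c+\dim(\mathcal{U}\cap\mathcal{V})+\dim\!\big((\mathcal{U}+\mathcal{V})\cap\mathcal{W}\big).
\]
Since $(\mathcal{U}+\mathcal{W})+(\mathcal{V}+\mathcal{W})\subseteq\mathbb{F}^{n}$, the modular identity also gives $\dim\big((\mathcal{U}+\mathcal{W})\cap(\mathcal{V}+\mathcal{W})\big)\ge(a+c)+(b+c)-n$. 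Combining the two displays and cancelling one copy of $c$ yields
\[
\dim(\mathcal{U}\cap\mathcal{V})\ \ge\ a+b+c-n-\dim\!\big((\mathcal{U}+\mathcal{V})\cap\mathcal{W}\big),
\]
and the asserted bound is precisely this inequality with the last term dropped.

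So the main obstacle is exactly that last term: the pairwise hypotheses alone do not force $(\mathcal{U}+\mathcal{V})\cap\mathcal{W}=\{0\}$ (they are perfectly compatible with $\mathcal{U}+\mathcal{V}=\mathbb{F}^{n}$, in which case the term equals $c$), so what really closes the gap is that $\mathrm{colspan}(\mathbf{C})$ meets $\mathrm{colspan}(\mathbf{A})+\mathrm{colspan}(\mathbf{B})$ trivially. I would therefore add this to the hypothesis or, at the point where the lemma is invoked inside the proof of Theorem~\ref{thm:MUConstruction}, verify it there: the three matrices are $\overline{\mathbf{V}}_{i,j_{1}},\overline{\mathbf{V}}_{i,j_{2}},\overline{\mathbf{V}}_{i,0}$ and there is a combiner $\overline{\mathbf{U}}_{i,0}$ with $\overline{\mathbf{U}}_{i,0}\overline{\mathbf{V}}_{i,j}=0$ for all $j\neq0$ and $\det(\overline{\mathbf{U}}_{i,0}\overline{\mathbf{V}}_{i,0})\neq0$, so $\mathrm{colspan}(\overline{\mathbf{V}}_{i,j_{1}}),\mathrm{colspan}(\overline{\mathbf{V}}_{i,j_{2}})\subseteq\ker\overline{\mathbf{U}}_{i,0}$ while $\mathrm{colspan}(\overline{\mathbf{V}}_{i,0})\cap\ker\overline{\mathbf{U}}_{i,0}=\{0\}$, which forces $(\mathcal{U}+\mathcal{V})\cap\mathcal{W}=\{0\}$ and hence $\dim(\mathcal{U}\cap\mathcal{V})\ge a+b+c-n$. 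Everything else — the two modular expansions and the ambient-dimension bound — is routine.
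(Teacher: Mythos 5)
Your diagnosis is exactly right, and it exposes a genuine flaw in the lemma as stated: with only the two pairwise hypotheses the conclusion is false. Take $n=2$, $\mathbf{A}=(1,0)^{T}$, $\mathbf{B}=(0,1)^{T}$, $\mathbf{C}=(1,1)^{T}$; then $\mbox{colspan}(\mathbf{A})\cap\mbox{colspan}(\mathbf{C})=\mbox{colspan}(\mathbf{B})\cap\mbox{colspan}(\mathbf{C})=\{0\}$, yet $\mbox{dim}(\mbox{colspan}(\mathbf{A})\cap\mbox{colspan}(\mathbf{B}))=0<1=1+1+1-2$. The paper's own proof commits precisely the leap you isolate: its very first line asserts $\mbox{colspan}(\langle\mbox{colspan}(\mathbf{A})\cup\mbox{colspan}(\mathbf{B})\rangle)\cap\mbox{colspan}(\mathbf{C})=\{0\}$ as though it followed from the pairwise hypotheses, and then does the dimension count ($R_{A\cup B}+R_{C}\leq n$ together with $R_{A\cup B}=R_{A}+R_{B}-R_{A\cap B}$). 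Your two applications of the modular identity arrive at the same count, just routed through the inflated spaces $\mathcal{U}+\mathcal{W}$ and $\mathcal{V}+\mathcal{W}$, with the added value that the residual term $\mbox{dim}((\mathcal{U}+\mathcal{V})\cap\mathcal{W})$ is made explicit rather than silently set to zero.

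Your proposed repair is also the right one, and it restores every use of the lemma in the paper. In the converse of Property 2 the lemma is applied along the chain with $\mathcal{U}=\bigcap_{l\leq L-k}\mbox{colspan}(\overline{\mathbf{V}}_{i,l})$, $\mathcal{V}=\mbox{colspan}(\overline{\mathbf{V}}_{i,L-k+1})$ and $\mathcal{W}=\mbox{colspan}(\overline{\mathbf{V}}_{i,0})$; since $\overline{\mathbf{U}}_{i,0}\overline{\mathbf{V}}_{i,j}=0$ for all $j\neq 0$ while $\mbox{det}(\overline{\mathbf{U}}_{i,0}\overline{\mathbf{V}}_{i,0})\neq 0$, both $\mathcal{U}$ and $\mathcal{V}$ lie in $\ker\overline{\mathbf{U}}_{i,0}$ and $\mathcal{W}$ meets that kernel trivially, so $(\mathcal{U}+\mathcal{V})\cap\mathcal{W}=\{0\}$ exactly as your strengthened hypothesis requires. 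In short: your argument is correct, the lemma needs the extra hypothesis you identify (that $\mathcal{W}$ meets $\mathcal{U}+\mathcal{V}$ trivially, not merely each of $\mathcal{U}$ and $\mathcal{V}$ separately), and the gap is in the paper's proof, not yours.
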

\begin{proof}
Let the rank of $\mathbf{A}, \mathbf{B}$ and $\mathbf{C}$ respectively be $R_{A}, R_{B}$ and $R_{C}$. Also, let
$$R_{A \cup B}\define \mbox{dim}\left(\mbox{colspan}(\langle \mbox{colspan}(\mathbf{A}) \cup \mbox{colspan}(\mathbf{B})\rangle)\right)$$  $$R_{A\cap B} \define  \mbox{dim}\left(\mbox{colspan}(\mathbf{A}) \cap \mbox{colspan}(\mathbf{B})\right).$$
where $\langle \mathcal{V} \rangle$ is a matrix whose column-span is equal to the column-span of the set of column vectors $\mathcal{V}$. We have
\begin{eqnarray*}
\mbox{colspan}\left(\langle\mbox{colspan}(\mathbf{A})\cup\mbox{colspan}(\mathbf{B})\rangle\right) \cap \mbox{colspan}(\mathbf{C}) &=& \{0\}\\
\Rightarrow R_{A \cup B} + \mbox{dim}(\mbox{colspan}(\mathbf{C})) &=& \mbox{dim}\left(\mbox{colspan}\left(\langle\mbox{colspan}(\mathbf{A})\cup\mbox{colspan}(\mathbf{B})\cup \mbox{colspan}(\mathbf{C}) \rangle\right)\right)\\
\Rightarrow R_{A \cup B} + R_{C} &\leq& n\\
\Rightarrow R_{A} + R_{B} - R_{A\cap B}+ R_{C} &\leq& n\\
\Rightarrow\mbox{dim}\left(\mbox{colspan}(\mathbf{A})\cap \mbox{colspan}(\mathbf{B})\right) = R_{A\cap B} &\geq& R_A+R_B+R_C-n
\end{eqnarray*}
\end{proof}

Using the fact that $\mbox{colspan}(\overline{\mathbf{V}}_{i,j}) \cap \mbox{colspan}(\overline{\mathbf{V}}_{i,0}) = \{0\}$  and the above lemma, we lower bound the rank of $\mathbf{V}_{i} = \langle \bigcap_{l=1}^{L} \mbox{colspan}(\overline{\mathbf{V}}_{i,l})\rangle$ as follows. 
\begin{align}
&nR_{i}^{'} = \mbox{rank}(\mathbf{V}_{i}) = \mbox{dim}\left(\bigcap_{l=1}^{L} \mbox{colspan}(\overline{\mathbf{V}}_{i,l})\right)\\
&\geq \mbox{dim}\left(\bigcap_{l=1}^{L-1} \mbox{colspan}(\overline{\mathbf{V}}_{i,l})\right)+\mbox{rank}(\overline{\mathbf{V}}_{i,L})-(n-\mbox{rank}(\overline{\mathbf{V}}_{i,0})) \\
&\geq \mbox{dim}\left(\bigcap_{l=1}^{L-2} \mbox{colspan}(\overline{\mathbf{V}}_{i,l})\right)+\sum_{l=L-1}^{L} \mbox{rank}(\overline{\mathbf{V}}_{i,l})-2(n-\mbox{rank}(\overline{\mathbf{V}}_{i,0})) \\
&\vdots\\
&\geq \mbox{dim}\left(\mbox{colspan}(\overline{\mathbf{V}}_{i,1}) \cap \mbox{colspan}(\overline{\mathbf{V}}_{i,2})\right)+\sum_{l=3}^{L} \mbox{rank}(\overline{\mathbf{V}}_{i,l})-(L-2) (n-\mbox{rank}(\overline{\mathbf{V}}_{i,0})) \\
&\geq \sum_{l=1}^{L} \mbox{rank}(\overline{\mathbf{V}}_{i,l})-(L-1) (n-\mbox{rank}(\overline{\mathbf{V}}_{i,0})) \\
&= \sum_{l=1}^{L} n(R_{i} - \delta_{i,l}) - (L-1)(n-n(1-R_{i} - \delta_{i,0}))\\
&\geq n\left(R_{i} - \sum_{i=1}^{L} \delta_{i,l} - (L-1) \delta_{i,0}\right)
\end{align}
where, the final equation above comes from the achievability of rates (\ref{eq:rates}) on the multiple unicast setting. By choosing $\delta_{i,j}$ to be arbitrarily small, the rate $R_{i}'$ can be made arbitrarily close to $R_{i}.$ This completes the proof of (C2), and hence the proof of the Theorem \ref{thm:MUConstruction}.

\subsection{Proof of Theorem \ref{thm:insuff}}
We show Theorem \ref{thm:insuff} using the results of Theorem \ref{thm:MUConstruction} and \cite{Blasiak_Kleinberg_Lubetzky_2011}. \cite{Blasiak_Kleinberg_Lubetzky_2011} implies the following.  There exists a groupcast setting where, there exists a rate vector $(R_1, R_2, \ldots, R_K)$ in its capacity region that not achievable via linear coding. Now, we consider the equivalent multiple unicast setting based on Construction \ref{construction1}. Because of Property 2 of Theorem \ref{thm:MUConstruction}, this implies that rate-tuple
$$R_{i,j} = \left\{\begin{array}{cc}R_i & j \neq 0\\ 1-R_{i} & j = 0  \end{array} \right\} $$ is not achievable via \emph{linear coding} in this multiple unicast setting. Further, because of Property 1 of Theorem \ref{thm:MUConstruction}, the achievability of $(R_1, R_2,\ldots, R_K)$ in the groupcast setting (using a non-linear scheme) implies that the above rate-tuple is achievable in the multiple unicast setting. Therefore, linear coding does not suffice for achievability in the multiple unicast setting.
\section{Proof of Theorem \ref{thm:HRF}: Feasibility of rate $\frac{1}{L+1}$ per message}\label{sec:HRF}
\subsection{Achievability}\label{sec:achievable}
We partition $\mathcal{W}=\{W_1,W_2,\ldots,W_M\}$ into alignment subsets $\mathcal{P}_{1},\mathcal{P}_{2},\ldots,\mathcal{P}_{T}$, and define the mapping ${P}(m):\mathcal{M}\rightarrow\{1,2,\cdots,T\}$ so that $W_m\in\mathcal{P}_{{P}(m)}$, $\forall m\in\mathcal{M}$. 
 
 We use a scalar linear achievable scheme. In particular, we choose $ n=L+1$ and $\mathcal{S}=\mathbb{F}_q$, where $q$ is chosen to be sufficiently large so that there exist $T$ vectors $\mathbf{V}_1,\mathbf{V}_2,\ldots,\mathbf{V}_{T}$,  in the $L+1$ dimensional vector space over $\mathbb{F}_q$, such that every $L+1$ of them are linearly independent. These are the linear coding vectors along which the aligned messages from each partition will be sent.


Destination $r$, which desires messages $\mathcal{W}_r=\{W_{r_1}, W_{r_2},\cdots, W_{r_L}\}$,  receives
\begin{eqnarray*} S^n &=& \sum_{i=1}^{M} x_i \mathbf{V}_{{P}(i)} \\
&=& \sum_{i=1}^{L} x_{r_i} \mathbf{V}_{{P}(r_i)}  + \underbrace{\sum_{i: W_i \in \mathcal{A}_{r}} x_i \mathbf{V}_{i}}_{\begin{array}{c}\mbox{Side information}\\ \mbox{at destination $r$}\end{array}} + \left(\sum_{i:W_i \notin \mathcal{A}_{r}\cup\mathcal{W}_r} x_i\right) \mathbf{V}_{\mathcal{P}_{t}}
\end{eqnarray*}
where the last equation follows because all elements of $\{i: W_i \notin \mathcal{A}_{r}\cup\mathcal{W}_r\}$ belong to the same alignment subset by definition; this subset is denoted by $\mathcal{P}_{t}.$ After cancelling the second term above, destination $r$  obtains a linear combination of $L+1$ vectors, $\mathbf{V}_{P(r_1)},\mathbf{V}_{P(r_2)},\ldots,\mathbf{V}_{P(r_L)}$ and $\mathbf{V}_{\mathcal{P}_t}$. We need to show that these $L+1$ vectors are linearly independent if the feasibility condition is satisfied. The feasibility condition implies that for a feasible system, if two messages belong to the same alignment subset, then whenever one of them is desired, the other must be available as an antidote. Since this is not true for any two of the $L+1$ vectors $\mathbf{V}_{P(r_1)},\mathbf{V}_{P(r_2)},\ldots,\mathbf{V}_{P(r_L)}$ and $\mathbf{V}_{\mathcal{P}_t}$, they must each belong to distinct partitions. Since every $L+1$ coding vectors are linearly independent, we conclude that $x_{r_1}, x_{r_2}, \ldots, x_{r_L}$ are resolvable at destination $r$ as required.


\subsection{Outer Bounds}
We start with a simple information theoretic outer bound on the index coding problem.

\begin{theorem}
\label{thm:simplebound}
 \emph{The achievable rates $(R_1,R_2,...,R_M)$ for the index coding problem where  $|\mathcal{W}_k|=L$, $\forall~k\in\{1,2,\ldots,K\}$, satisfy the
following inequalities:} 
\begin{eqnarray}
\sum_{\{i:W_i\in\mathcal{W}_k\}}R_i+\sum_{\{i:W_i\in\mathcal{W}_j\cap\left(\mathcal{W}_k\cup\mathcal{A}_k\right)^c\}}R_i\leq 1, ~~~\forall k\in\mathcal{K}
\end{eqnarray}
\end{theorem}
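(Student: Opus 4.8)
\proof The plan is a genie-aided converse. I would introduce a single ``super-destination'' that is handed the bottleneck output $S^n$ together with a carefully chosen subset of the messages, and then bound the total entropy of everything this super-destination can subsequently reconstruct against $H(S^n)\le n\log|\mathcal{S}|$, which equals $n$ in the normalized units of the problem. The only genuine design choice is \emph{which} messages to hand the genie; the rest is bookkeeping. (By the zero-error property of index coding recalled in Section~\ref{sec:Problem Formulation}, we may assume the relevant decoders make no errors, so their outputs contribute no conditional entropy, and the $\delta$-slack in the definition of achievability is sent to zero at the end; alternatively Fano's inequality gives the same with vanishing terms.)

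Fix destinations $D_k$ and $D_j$. I would give the super-destination the messages in
$$\mathcal{G} \;=\; \mathcal{A}_k \,\cup\, \bigl(\mathcal{A}_j \setminus \mathcal{W}_k\bigr),$$
that is, all of $D_k$'s antidotes together with those antidotes of $D_j$ that are not desired by $D_k$. With $(S^n,\mathcal{G})$ available, the super-destination decodes in two stages. Since $\mathcal{A}_k\subseteq\mathcal{G}$, it first runs the decoding functions of $D_k$ to recover $\mathcal{W}_k$. It then knows $(S^n,\mathcal{G},\mathcal{W}_k)$, and since $\mathcal{A}_j\subseteq(\mathcal{A}_j\setminus\mathcal{W}_k)\cup\mathcal{W}_k\subseteq\mathcal{G}\cup\mathcal{W}_k$, it can now run the decoding functions of $D_j$ to recover $\mathcal{W}_j$. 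Hence $H\bigl(\mathcal{W}_k\cup\mathcal{W}_j\mid S^n,\mathcal{G}\bigr)=0$ in the zero-error reduction.

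Let $\mathcal{T}=(\mathcal{W}_k\cup\mathcal{W}_j)\setminus\mathcal{G}$ denote the set of reconstructed messages that were \emph{not} handed to the genie, and write $R_{\mathcal{T}}=\sum_{\{i:W_i\in\mathcal{T}\}}R_i$. Because the messages are mutually independent, $\mathcal{T}$ and $\mathcal{G}$ are independent, so
$$n R_{\mathcal{T}}\;=\;H(\mathcal{T})\;=\;H(\mathcal{T}\mid\mathcal{G})\;=\;I(\mathcal{T};S^n\mid\mathcal{G})+H(\mathcal{T}\mid S^n,\mathcal{G})\;\le\;H(S^n)\;\le\;n,$$
using $\mathcal{T}\subseteq\mathcal{W}_k\cup\mathcal{W}_j$ so that the last conditional entropy vanishes. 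It remains to evaluate $\mathcal{T}$: a short set computation using $\mathcal{W}_k\cap\mathcal{A}_k=\phi$ and $\mathcal{W}_j\cap\mathcal{A}_j=\phi$ gives the \emph{disjoint} decomposition
$$\mathcal{T}\;=\;\mathcal{W}_k\;\sqcup\;\bigl(\mathcal{W}_j\cap(\mathcal{W}_k\cup\mathcal{A}_k)^c\bigr),$$
whence $nR_{\mathcal{T}}=\sum_{\{i:W_i\in\mathcal{W}_k\}}nR_i+\sum_{\{i:W_i\in\mathcal{W}_j\cap(\mathcal{W}_k\cup\mathcal{A}_k)^c\}}nR_i$. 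Dividing by $n$, and then sending the achievability slack $\delta\to0$, yields the claimed inequality for every $k\in\mathcal{K}$ and every $j\in\mathcal{K}$.

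The one place that needs care --- and the step I would pin down first --- is the choice of $\mathcal{G}$. Handing the genie all of $\mathcal{A}_j$ is too generous: messages in $\mathcal{W}_k\cap\mathcal{A}_j$ would then drop out of $\mathcal{T}$ and weaken the bound. Handing it only $\mathcal{A}_k$ is too stingy: the decoder of $D_j$ cannot then be executed. The resolution is that $\mathcal{A}_j\cap\mathcal{W}_k$ need not be supplied, since it is already recovered in the first decoding stage; supplying exactly $\mathcal{A}_j\setminus\mathcal{W}_k$ both chains the two decoders and keeps all of $\mathcal{W}_k$ on the left-hand side. Once $\mathcal{G}$ is fixed, the proof is just the elementary entropy chain displayed above together with the set identity for $\mathcal{T}$. \QED
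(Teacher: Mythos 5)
Your proof is correct and takes essentially the same approach as the paper: a genie-aided two-stage decoding argument in which $\mathcal{W}_k$ is decoded first, the recovered $\mathcal{W}_k$ together with supplied antidotes then lets the decoder of $D_j$ be run, and the entropy of the jointly decoded messages is bounded by $H(S^n)\leq n$. The only cosmetic difference is that the paper hands the genie the full complement of $\mathcal{W}_k\cup\bigl(\mathcal{W}_j\cap(\mathcal{W}_k\cup\mathcal{A}_k)^c\bigr)$ whereas you supply the leaner set $\mathcal{G}=\mathcal{A}_k\cup(\mathcal{A}_j\setminus\mathcal{W}_k)$, but both make the same messages decodable and give the identical bound.
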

In other words, the sum of the rates of the messages desired at a destination $D_k$ and the rates of the interfering messages (i.e., undesired messages that are not available as antidotes to $D_k$) that are intended for any other destination $D_j$ cannot exceed 1.\\
\begin{proof}
 Consider a reliable achievable coding scheme for the network. Assume all the messages \emph{except} $\mathcal{W}_k$ and $\mathcal{W}_j\cap\left(\mathcal{W}_k\cup\mathcal{A}_k\right)^c$ are given by a genie to destination $k$ and destination $j$. Further, assume that the genie provides $\mathcal{W}_k$ to destination $j$ as well.
 Now, using the reliable  achievable scheme, destination $k$ can decode $\mathcal{W}_k$. Having decoded $\mathcal{W}_k$, in this genie aided index coding problem, destination $k$ has all the information that destination $j$ has. Therefore, destination $k$ can decode $\mathcal{W}_j\cap\left(\mathcal{W}_k\cup\mathcal{A}_k\right)^c$ as well. Since destination $i$ can decode messages $\mathcal{W}_k$ and $\mathcal{W}_j\cap\left(\mathcal{W}_k\cup\mathcal{A}_k\right)^c$,  the sum of $H(\mathcal{W}_k)$ and $H(\mathcal{W}_j\cap\left(\mathcal{W}_k\cup\mathcal{A}_k\right)^c)$ must be bounded above by the entropy of the bottleneck link $H(S^n)$. 
 \end{proof}

\emph{Remark:} A reader familiar with Carlieal's outer bound for the interference channel \cite{Carleial_int} may notice similarities with the approach in the above proof. Also note that the bound, as with all bounds in this section, is information theoretic, so it applies to all linear and non-linear coding schemes.

As an example, suppose we have an index coding problem with $M=4, K=2$ and $L=2$,  where $\mathcal{W}_1=\{W_1,W_2\},\mathcal{W}_2=\{W_3,W_4\}, \mathcal{A}_1=\O, \mathcal{A}_2=\{W_1,W_2\}$ as shown in Fig. \ref{fig:422case}. Using the above outer bound, we have $R_1+R_2+R_3+R_4\leq1$, which is tight because it is also achievable by sending each message separately in 4 consecutive time slots. 
\begin{figure}[!h] \centering
\includegraphics[width=3in]{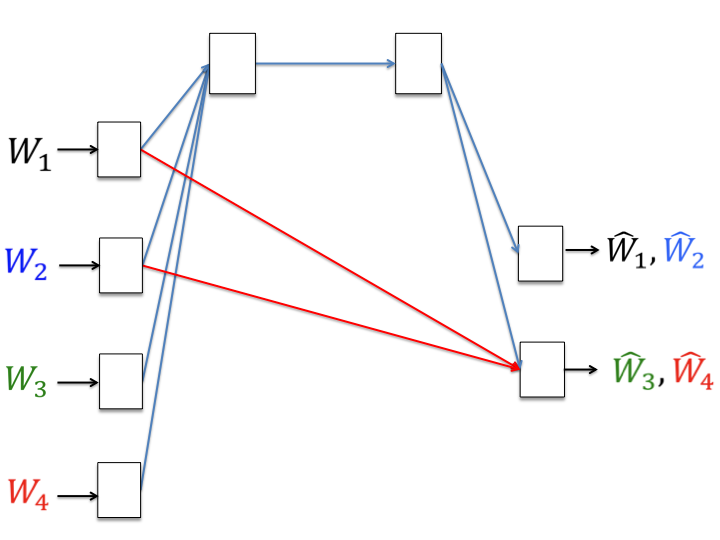}
\caption{Index coding problem with $M=4, K=2$ and $L=2$ where $R_1+R_2+R_3+R_4\leq1$}
\label{fig:422case}
\end{figure}

Next we derive  outer bounds which are valid for general index coding problems and are tight for the $\frac{1}{L+1}$ rate feasibility problem. The main result of this section is the following.
\begin{theorem}
\label{thm:outerbound1}
\emph{For any $N+1$ distinct indices $i_0, i_1,...,i_N \in \mathcal{M}$ and $N$ indices $j^1,...,j^N \in \mathcal{K},$ if}
\begin{eqnarray}
W_{i_0}\overset{j^1}\leftrightarrow W_{i_1}\overset{j^2}\leftrightarrow W_{i_2}\overset{j^3}\leftrightarrow...\overset{j^{N-1}}\leftrightarrow W_{i_{N-1}}\overset{j^N}\leftrightarrow W_{i_N}~~\text{and}~~W_{i_0} \notin \mathcal{A}_k~~\text{where}~~W_{i_N} \in \mathcal{W}_k~~\text{for}~~k \in \mathcal{K},
\end{eqnarray}
\emph{then}
\begin{eqnarray}
R_{i_0}+R_{j^1_{1:L}}+R_{i_1}+...+R_{j^N_{1:L}}+R_{i_N}\leq N
\end{eqnarray}
\end{theorem}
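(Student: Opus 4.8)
The plan is to establish the inequality by a genie‑aided entropy argument that invokes the bottleneck constraint $H(S^{n})\le n$ exactly $N$ times, once for each destination $j^{1},\dots,j^{N}$. Since the index coding $\epsilon$-error and zero-error capacities coincide here, it suffices (after the usual passage to the block length, absorbing the rate slack $\delta$) to show, for any zero-error scheme, the entropy inequality
$$\sum_{m=0}^{N}H(W_{i_{m}})\;+\;\sum_{m=1}^{N}H(\mathcal{W}_{j^{m}})\;\le\;Nn,$$
whose left-hand side equals $n$ times the claimed sum of rates because the messages are mutually independent and uniform (and the identity is unaffected by any coincidences among the listed messages, which are double-counted on both sides).

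The elementary gadget I would use generalizes the reasoning behind Theorem~\ref{thm:simplebound}: if $\mathcal{Y}\subseteq\mathcal{W}$ is a set of messages that can be \emph{decoded in sequence} from $S^{n}$ together with all messages outside $\mathcal{Y}$ --- meaning there is an ordering of $\mathcal{Y}$ and, for each entry, a destination desiring it whose antidotes restricted to $\mathcal{Y}$ all precede it --- then a virtual receiver holding $S^{n}$ and $\mathcal{W}\setminus\mathcal{Y}$ recovers all of $\mathcal{W}$, so $H(\mathcal{Y})\le H(S^{n}\mid\mathcal{W}\setminus\mathcal{Y})\le n$. The alignment relations are exactly what make this applicable along the chain: $W_{i_{m-1}}\overset{j^{m}}{\leftrightarrow}W_{i_{m}}$ asserts that both $W_{i_{m-1}}$ and $W_{i_{m}}$ are interference at $j^{m}$, so the decoder of $j^{m}$ can be fired from $S^{n}$ and any set containing $\mathcal{A}_{j^{m}}$ but disjoint from $\mathcal{W}_{j^{m}}$, after which $\mathcal{W}_{j^{m}}$ becomes available; and the closing hypothesis at $k$ plays the dual role, letting the decoder of $k$ be fired while $W_{i_{0}}$ (and $W_{i_{N}}\in\mathcal{W}_{k}$) are still undecoded.

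With these ingredients I would run a nested/telescoping induction along the chain, adjoining one link at a time: writing $\mathcal{E}_{m}=\{W_{i_{0}},\dots,W_{i_{m}}\}\cup\mathcal{W}_{j^{1}}\cup\cdots\cup\mathcal{W}_{j^{m}}$, the $m$-th step charges a single unit of $H(S^{n})$ to $\mathcal{W}_{j^{m}}$ together with the new chain message(s) it introduces, certifying via the relation at $j^{m}$ that the new messages are interference at $j^{m}$ and via the messages already decoded in earlier steps (and via the step that uses $k$ to close the cycle) that $j^{m}$ has everything else it needs. Summing the $N$ steps yields the bound $N$, provided the residual-entropy terms produced at the intermediate destinations cancel in the telescoping rather than accumulate.

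The step I expect to be the crux is precisely this bookkeeping of the count: the chain has $N$ links but $N+1$ distinguished messages $W_{i_{0}},\dots,W_{i_{N}}$, so a naive ``one bottleneck unit per link'' pays for one message too many. The resolution is that a single step must simultaneously discharge \emph{both} endpoints $W_{i_{0}}$ and $W_{i_{N}}$ --- this is the only place the hypothesis ``$W_{i_{N}}\in\mathcal{W}_{k}$ and $W_{i_{0}}\notin\mathcal{A}_{k}$'' is genuinely needed --- while each interior message $W_{i_{1}},\dots,W_{i_{N-1}}$, though it lies on two consecutive links, is charged only once. Verifying that the decoding orders demanded at the various intermediate destinations are mutually consistent, so that the interior messages are not double-charged and the telescoping closes, is the delicate part; for $N=1$ everything collapses to a one-shot genie argument in the style of Theorem~\ref{thm:simplebound}.
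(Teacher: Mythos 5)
Your plan recovers the paper's treatment of the base case $N=1$, where a genie‑aided decoding chain indeed works. But the central mechanism you propose for $N\geq 2$ --- partitioning the listed messages into $N$ sets $\mathcal{Y}_1,\dots,\mathcal{Y}_N$, each ``decodable in sequence'' once its complement is revealed by a genie --- does not close, and the obstruction is not just bookkeeping. The alignment relation $W_{i_{m-1}}\overset{j^m}{\leftrightarrow}W_{i_m}$ asserts that both messages are \emph{interference} at destination $j^m$, i.e.\ absent from $\mathcal{A}_{j^m}\cup\mathcal{W}_{j^m}$; it emphatically does \emph{not} make them decodable from $S^n$ and $\mathcal{W}_{j^m}$'s side information. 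So once $\mathcal{W}_{j^m}$ has been fired, the virtual receiver has $\mathcal{W}\setminus\{W_{i_{m-1}},W_{i_m}\}$ (say), and to peel off either $W_{i_{m-1}}$ or $W_{i_m}$ it must simulate \emph{some} destination desiring that message while the other is still unknown --- and nothing in the hypotheses guarantees such a destination whose antidotes avoid the other chain endpoint. Concretely, for $N=2$ any group that contains both $W_{i_0}$ and $W_{i_N}$ must be decoded starting from $j^N$ (which requires $W_{i_0}\notin\mathcal{A}_{j^N}$, not given), or from $k$ (which requires $\mathcal{A}_k\cap\mathcal{W}_{j^N}=\emptyset$, not given), or from a destination desiring $W_{i_0}$ (about whose antidotes nothing is known). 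Each ordering fails, and you cannot shuffle $W_{i_{N-1}}$ in or out to fix it, since $W_{i_{N-1}}\in\mathcal{A}_k$ is likewise possible.

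The paper's proof avoids any single decoding chain. It establishes, separately, the per‑link bounds $nR_{j^m_{1:L}} \le n - H(S^n\mid W^c_{i_{m-1},i_m}) + o(n)$ and the closing bound $nR_{i_N}\le H(S^n\mid W^c_{i_0,i_N})-nR_{i_0}+o(n)$, and then \emph{telescopes the conditional-entropy terms via submodularity of entropy} (its Lemma~\ref{lem:useful_lemma}): $H(S^n\mid W^c_{i_{m-1},i_N}) \le H(S^n\mid W^c_{i_{m-1},i_m}) + H(S^n\mid W^c_{i_m,i_N}) - H(S^n\mid W^c_{i_m})$, with $H(S^n\mid W^c_{i_m})=nR_{i_m}+o(n)$ absorbing the interior messages exactly once. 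That submodularity step is the missing ingredient in your proposal: it lets the conditional entropies overlap and cancel where a decodability argument would need a chain that does not exist. Your own flagged ``delicate part'' is in fact the place where the proof must change tools, not merely tighten bookkeeping.
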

The converse for Theorem \ref{thm:HRF} follows from the above theorem. This converse is expressed next as a corollary to the above theorem.
\begin{corollary}
The rate tuple $\mathcal{R}$ with $R_1 = R_2 = ... = R_M= \frac{1}{L+1}$ is not achievable in a single
bottleneck network where  $|\mathcal{W}_k|=L$, $\forall~k\in\{1,2,\ldots,K\}$ if and only if there exist distinct indices $i, j \in \mathcal{M}$ such that $W_i$, $W_j$ belong to the
same alignment subset and $W_j\in \mathcal{W}_k$ and $W_i\notin \mathcal{A}_k$ for $k\in \mathcal{K}$
\end{corollary}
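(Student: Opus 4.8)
The plan is to establish the two directions of the stated equivalence separately, recognizing that this corollary is the converse counterpart of Theorem~\ref{thm:HRF}: the ``only if'' direction is exactly the achievability already proved in Section~\ref{sec:achievable}, and the ``if'' direction is a one-line deduction from the chain outer bound of Theorem~\ref{thm:outerbound1} together with a routine arithmetic substitution.

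For the ``only if'' direction I would argue the contrapositive, namely that if \emph{no} pair $W_i,W_j$ lies in a common alignment subset with $W_j\in\mathcal{W}_k$ and $W_i\notin\mathcal{A}_k$, then $R_m=\frac{1}{L+1}$ for all $m$ is achievable. But the negation of that bad-pair condition is precisely the feasibility hypothesis under which the scalar linear scheme of Section~\ref{sec:achievable} works: assign each of the $T$ alignment subsets one of $T$ generic coding vectors in $\mathbb{F}_q^{L+1}$, chosen so that every $L+1$ of them are independent, so that after cancelling side information every destination sees its $L$ desired streams plus one aligned interference vector in $L+1$ mutually independent directions. Hence nothing new is needed here; I would simply invoke that construction.

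For the ``if'' direction, suppose distinct $i,j$ lie in the same alignment subset with $W_j\in\mathcal{W}_k$ and $W_i\notin\mathcal{A}_k$. By the definition of alignment subsets as the connected components of the graph on $\mathcal{W}$ whose edges are the relations $\overset{\cdot}{\leftrightarrow}$, there is a walk $W_i=W_{i_0}\overset{j^1}{\leftrightarrow}W_{i_1}\overset{j^2}{\leftrightarrow}\cdots\overset{j^N}{\leftrightarrow}W_{i_N}=W_j$; deleting interior repetitions preserves the endpoints $i,j$ and keeps each remaining link a valid $\overset{\cdot}{\leftrightarrow}$ relation, so I may assume $i_0,\ldots,i_N$ are distinct, and $i\neq j$ forces $N\geq 1$. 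Together with $W_{i_0}=W_i\notin\mathcal{A}_k$ and $W_{i_N}=W_j\in\mathcal{W}_k$, the hypotheses of Theorem~\ref{thm:outerbound1} hold, giving
\begin{equation*}
R_{i_0}+R_{j^1_{1:L}}+R_{i_1}+\cdots+R_{j^N_{1:L}}+R_{i_N}\le N .
\end{equation*}
Substituting $R_m=\frac{1}{L+1}$ for every $m$: the $N+1$ terms $R_{i_0},\ldots,R_{i_N}$ contribute $\tfrac{N+1}{L+1}$, and each of the $N$ terms $R_{j^n_{1:L}}$ is a sum of exactly $L$ message rates (using $|\mathcal{W}_{j^n}|=L$), hence equals $\tfrac{L}{L+1}$, contributing $\tfrac{NL}{L+1}$ in total. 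The left side is therefore $\tfrac{N+1+NL}{L+1}=N+\tfrac{1}{L+1}>N$, contradicting the bound; so the symmetric rate $\frac{1}{L+1}$ is infeasible.

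The only step in deducing the corollary that requires care is the passage from ``$W_i,W_j$ in a common alignment subset'' to an explicit \emph{simple} alignment chain of the exact form consumed by Theorem~\ref{thm:outerbound1}; once that chain is in hand the rest is pure substitution. The genuine difficulty is upstream, in proving Theorem~\ref{thm:outerbound1} itself --- a genie-aided entropy argument iterated along the chain, in the spirit of Carleial's interference-channel bound --- which we are taking as given here.
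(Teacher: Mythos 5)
Your proof takes essentially the same route as the paper: the ``if'' direction is exactly the paper's argument (extract an alignment chain connecting $W_i$ and $W_j$, apply Theorem~\ref{thm:outerbound1}, substitute the symmetric rate, and observe $N+\tfrac{1}{L+1}>N$), while the ``only if'' direction is, as in the paper, deferred to the scalar-linear achievability construction of Section~\ref{sec:achievable}. Your explicit step of pruning the walk to a simple path so that the $i_0,\ldots,i_N$ are genuinely distinct as Theorem~\ref{thm:outerbound1} requires is a small clarification the paper leaves implicit, but it does not change the argument.
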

\begin{proof}
If $W_i, W_j$ belong to the same alignment subset, there must exist a chain of alignment relations connecting $W_i$, $W_j$ as
\begin{eqnarray}
W_i\overset{j^1}\leftrightarrow W_{i_1}\overset{j^2}\leftrightarrow W_{i_2}\overset{j^3}\leftrightarrow...\overset{j^{N-1}}\leftrightarrow W_{i_{N-1}}\overset{j^N}\leftrightarrow W_{j}
\end{eqnarray}
where $N$ is the length of the chain. If $W_j\in \mathcal{W}_k$ and $W_i\notin \mathcal{A}_k$ for $k \in \mathcal{K}$, from the result of Theorem 5 shown later in this paper we have an explicit rate bound
\begin{eqnarray}
R_i+R_{j^1_{1:L}}+R_{i_1}+R_{j^2_{1:L}}+...+R_{i_{N-1}}+R_{j^{N}_{1:L}}+R_j\leq N
\end{eqnarray}
Clearly, $R_i = R_{i_1}=R_{i_2}=\ldots = R_{i_{N-1}}=R_j = R_{j^1_1} =R_{j^1_2}=\ldots=  R_{j^N_L} = \frac{1}{L+1}$ does not satisfy the above bound. This completes the proof of the corollary.
\end{proof}

\subsubsection{Proof of Theorem \ref{thm:outerbound1}}
Theorem \ref{thm:outerbound1} affords a simple proof for $N=1$. We therefore begin with this case.

If $N=1$, we have $W_i\overset{k}\leftrightarrow W_j \label{eqn:intuitive1}, W_j \in \mathcal{W}_r, W_i \notin \mathcal{A}_r~~ \text{for}~~r\in \mathcal{K} \label{eqn:intuitive2}$, then we intend to show that
$R_i+R_j+R_{k_{1:L}}\leq1.$

Consider any reliable index coding scheme. Now, we form a genie-enhanced index coding problem (where the scheme continues to remain reliable) as follows. Let $W_i \in \mathcal{W}_{m}$. In this enhanced problem, assume that destinations $k,r,m$ are given all the messages except messages $W_{i,j,k_1,k_2,\ldots,k_L}$ by a genie. We also assume that $W_{k_1,k_2,\ldots,k_L}$ are provided via the genie to destinations $r,m$. Since $W_j\in\mathcal{W}_r$, for a reliable scheme, destination $r$ can decode $W_{j}.$ In this enhanced problem, either $m=r$ or destination $m$ is a degraded version of destination $r$. In both cases, destination $r$ can decode $W_{i}$  as well. 
Now consider destination $k$. Achievability implies that destination $k$ can decode $W_{k_1,k_2,\ldots,k_L}$. Now, destination $k$ has all the information available at destination $r$ and can therefore deocode $W_{i},W_{j}$ as well. Since all messages $W_{i,j,k_1,k_2,\ldots,k_L}$ are decoded at a single destination, we can follow steps similar to the proof of Theorem \ref{thm:simplebound}, to the conclusion that $R_i+R_j+R_{k_{1:L}}\leq1$. 

As an example, suppose we have an index coding problem with $M=4, K=3$ and $L=2$ where $\mathcal{W}_1=\{W_1,W_3\},\mathcal{W}_2=\{W_2,W_3\}, \mathcal{W}_3=\{W_3,W_4\}, \mathcal{A}_1=\{W_2,W_4\}, \mathcal{A}_2=\O, \mathcal{A}_3={W_2}$ as shown in Fig. \ref{fig:432case}. We have the following alignment chain $W_1\overset{2}\leftrightarrow W_{4}$, $W_4\in \mathcal{W}_3, W_1 \notin \mathcal{A}_3$. Using the above outer bound, we have $R_1+R_2+R_3+R_4\leq1$,  which is tight because it is also achievable by sending each message separately in 4 consecutive time slots.

\begin{figure}[!h] \centering
\includegraphics[width=3in]{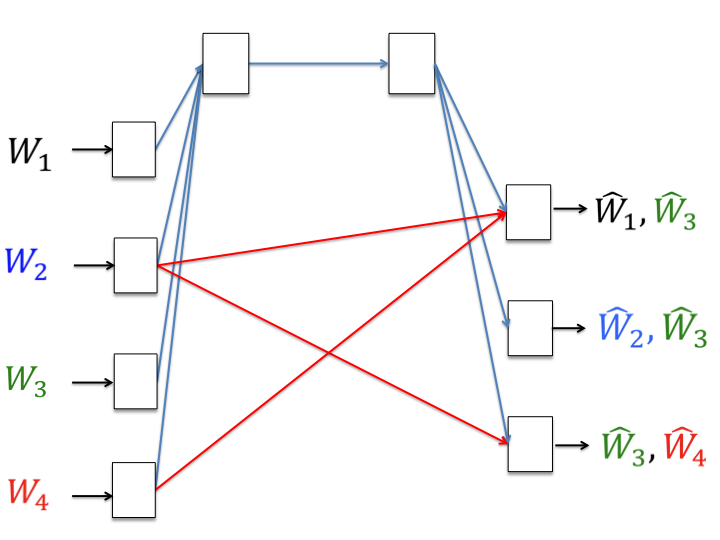}
\caption{Index coding problem with $M=4, K=3$ and $L=2$ where $R_1+R_2+R_3+R_4\leq1$ }
\label{fig:432case}
\end{figure}

Now, we consider the case where $N > 1$. Without loss of generality, we assume that $\mathcal{S}=\{0,1\}$ in the proof. We first present the proof explicitly for $N=2$ and $N=3$. These cases do not afford a simple explanation as above, and capture all the ideas required for proving the theorem for an arbitrary value of $N$. We begin with $N=2.$ In what follows we show that for any 3 distinct indices $i, j, k \in \mathcal{M}$ and indices $l, m \in \mathcal{K}$ if $W_i\overset{l}\leftrightarrow W_j \overset{m} \leftrightarrow W_k$, $W_{k} \in \mathcal{W}_p, W_{i} \notin \mathcal{A}_p$ for $p \in \mathcal{K}$ then $R_i+R_j+R_k+R_{l_{1:L}}+R_{m_{1:L}}\leq2$. 

\begin{eqnarray}
H(W_{l_1,l_2,\ldots,l_L}) = nR_{l_{1:L}}&=&I(W_{l_1,l_2,\ldots,l_L};S^n,\mathcal{W}_{\mathcal{A}_l})+o(n) \label{eqn:54}\\ 
&\leq& I(W_{l_1,l_2,\ldots,l_L};S^n,W_{i,j,l_1,l_2,\ldots,l_L}^c)+o(n)\\
&=&I(W_{l_1,l_2,\ldots,l_L};S^n \mid W_{{i,j,l_1,l_2,\ldots,l_L}}^c)+o(n)\\
&=&H(S^n\mid W_{{i,j,l_1,l_2,\ldots,l_L}}^c)\nonumber\\
&&-H(S^n\mid W_{i,j}^c)+o(n)\\
&\leq& n-H(S^n\mid W_{i,j}^c)+o(n) \label{eqn:58}
\end{eqnarray}
Similarly,
\begin{eqnarray}
H(W_{m_1,m_2,\ldots,m_L}) = nR_{m_{1:L}} &\leq& n-H(S^n\mid W_{j,k}^c)+o(n) \label{eqn:63}
\end{eqnarray}

\noindent For destination $p$ that is interested in $W_k$, we have
\begin{eqnarray}
nR_k&=&I(W_k;S^n,\mathcal{A}_p)+o(n)\label{eqn:64}\\
&\leq& I(W_k;S^n,W_{i,k}^c)+o(n)\\
&=&I(W_k;S^n \mid W_{i,k}^c)+o(n)\\
&\leq& H(S^n\mid W_{i,k}^c)-H(S^n\mid W_{i}^c)+o(n)\\
&\leq& H(S^n\mid W_{i,k}^c)-nR_i+o(n) \label{eqn:68}\\
&\leq& H(S^n\mid W_{i,j}^c)+H(S^n\mid {W}_{j,k}^c)\nonumber\\
&&-H(S^n \mid W_{j}^c)-nR_i+o(n)\label{eqn:followslemma2}\\
&=&H(S^n\mid W_{i,j}^c)+H(S^n\mid W_{j,k}^c)\nonumber\\
&&-nR_j-nR_i+o(n)\label{eqn:70}\\
&\leq& n(1-R_{l_{1:L}})+n(1-R_{m_{1:L}})-nR_i-nR_j+o(n) \label{eqn:71}\\
\Rightarrow R_k&\leq& 2-R_i-R_j-R_{l_{1:L}}-R_{m_{1:L}}, \label{eqn:72}
\end{eqnarray}
where (\ref{eqn:followslemma2}) follows from Lemma \ref{lem:useful_lemma} (proved later in this section). Inequality (\ref{eqn:71}) follows from substituting from (\ref{eqn:58}) and (\ref{eqn:63}) into (\ref{eqn:70}), and (\ref{eqn:72}) is obtained by dividing by $n$ and taking the limit as $n\rightarrow \infty$. 

As an example, suppose we have an index coding problem with $M=5, K=5$ and $L=2$ where $\mathcal{W}_1=\{W_1,W_5\},\mathcal{W}_2=\{W_1,W_2\}, \mathcal{W}_3=\{W_2,W_5\},  \mathcal{W}_4=\{W_2,W_4\}, \mathcal{W}_5=\{W_2,W_3\}, \mathcal{A}_1=\{W_2\}, \mathcal{A}_2=\{W_3\}, \mathcal{A}_3=\{W_1,W_4\}, \mathcal{A}_4=\{W_1,W_3,W_5\}, \mathcal{A}_5=\{W_1,W_4,W_5\}$ as shown in Fig. \ref{fig:552case}. We have the following alignment chain $W_3\overset{1}\leftrightarrow W_{4}\overset{2}\leftrightarrow W_{5}$, $W_5\in \mathcal{W}_3, W_3 \notin \mathcal{A}_3$. Using the above outerbound, we have $R_3+R_1+R_5+R_4+R_1+R_2+R_5\leq2$.

\begin{figure}[!h] \centering
\includegraphics[width=3in]{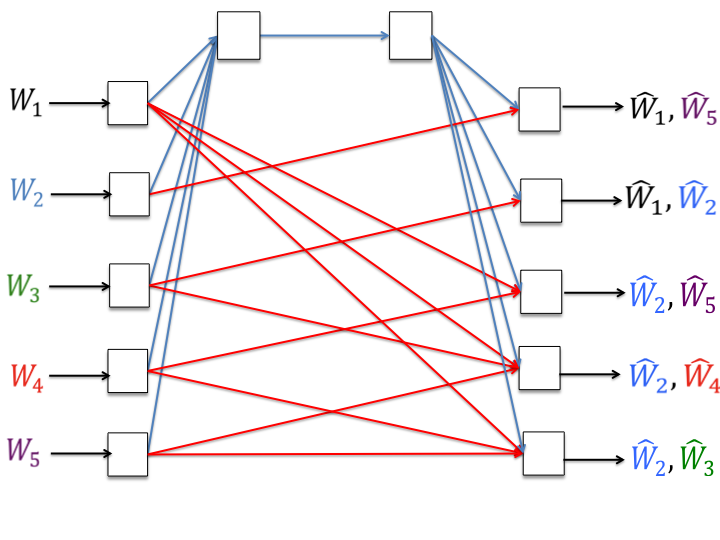}
\caption{\small Index coding problem with $M=5, K=5$ and $L=2$ where $R_3+R_1+R_5+R_4+R_1+R_2+R_5\leq2$}
\label{fig:552case}
\end{figure}

We next consider the case where the length of the alignment chain is $N=3.$ Suppose we have an alignment chain of length 3, i.e.,  $W_{i}\overset{p}\leftrightarrow W_{j}\overset{q}\leftrightarrow W_{k}\overset{r}\leftrightarrow W_{l}$ and the ends of the chain interfere, i.e., $W_{l} \in \mathcal{W}_m, W_{i}\notin \mathcal{A}_m$, then our goal is to show that $R_i+R_{p_{1:L}}+R_j+R_{q_{1:L}}+R_k+R_{r_{1:L}}+R_l\leq3$.\\
We start with the inequalities at the destinations $p, q, r$ , each of which is shown exactly through the steps followed in (\ref{eqn:54})-(\ref{eqn:58}). 
\begin{eqnarray}
H(W_{p_1,p_2,\ldots,p_L}) = nR_{p_{1:L}}\leq n-H(S^n\mid {W}_{i,j}^c)+o(n)\\
H(W_{q_1,q_2,\ldots,q_L}) = nR_{q_{1:L}}\leq n-H(S^n\mid {W}_{j,k}^c)+o(n)\\
H(W_{r_1,r_2,\ldots,r_L}) = nR_{r_{1:L}}\leq n-H(S^n\mid {W}_{k,l}^c)+o(n)
\end{eqnarray}
For destination $m$ that is interested in $W_{l}$, we have
\begin{eqnarray}
nR_l&=&I(W_l;S^n,\mathcal{A}_m)+o(n)\\
&\leq& I(W_l;S^n,{W}_{i,l}^c)+o(n)\\
&=&I(W_l;S^n \mid {W}_{i,l}^c)+o(n)\\
&=&H(S^n\mid {W}_{i,l}^c)-H(S^n\mid {W}_{i}^c)+o(n)\\
&\leq& H(S^n\mid {W}_{i,l}^c)-nR_i+o(n) \label{eqn:90}\\
&\leq& H(S^n\mid {W}_{i,j}^c)+H(S^n\mid {W}_{j,l}^c)\nonumber\\
&&-H(S^n \mid {W}_{j}^c)-nR_i+o(n) \label{eqn:91}\\
&\leq& n(1-R_{p_{1:L}})+H(S^n\mid {W}_{j,l}^c)-nR_j-nR_i+o(n) \label{eqn:92}\\
&\leq& n(1-R_{p_{1:L}})+H(S^n\mid {W}_{j,k}^c)\nonumber\\
&&+H(S^n\mid {W}_{k,l}^c)-H(S^n \mid {W}_{k}^c)\nonumber\\
&&-nR_j-nR_i+o(n) \label{eqn:93}\\
&\leq& n(1-R_{p_{1:L}})+n(1-R_{q_{1:L}})+n(1-R_{r_{1:L}})-nR_k-nR_j\nonumber\\
&&-nR_i+o(n)\label{eqn:94}\\
\Rightarrow R_l&\leq &3-R_{p_{1:L}}-R_{q_{1:L}}-R_{r_{1:L}}-R_k-R_j-R_i
\end{eqnarray}

Note that Lemma \ref{lem:useful_lemma} is applied twice, first in arriving at (\ref{eqn:91}) and then to obtain (\ref{eqn:93}). 

Now we prove Theorem \ref{thm:outerbound1} for an arbitrary $N>2$. Suppose we have an alignment chain of length $N$, $W_{i_0}\overset{j^1}\leftrightarrow W_{i_1}\overset{j^2}\leftrightarrow W_{i_2}\overset{j^3}\leftrightarrow...\overset{j^{N-1}}\leftrightarrow W_{i_{N-1}}\overset{j^N}\leftrightarrow W_{i_N}~~\text{and}~~W_{i_0} \notin \mathcal{A}_k~~\text{where}~~W_{i_N} \in \mathcal{W}_k~~\text{for}~~k \in \mathcal{K}$, then our goal is to show that $R_{i_0}+R_{j^1_{1:L}}+R_{i_1}+...+R_{j^N_{1:L}}+R_{i_N}\leq N$. We start with the inequalities at the destinations $j^1, \ldots, j^N$ , each of which is shown exactly through the steps followed in (\ref{eqn:54})-(\ref{eqn:58}). 
\begin{eqnarray}
H(W_{j^1_1,j^1_2,\ldots,j^1_L}) &=& nR_{j^1_{1:L}}\leq n-H(S^n\mid {W}_{i_0,i_1}^c)+o(n)\\
H(W_{j^2_1,j^2_2,\ldots,j^2_L}) &=& nR_{j^2_{1:L}}\leq n-H(S^n\mid {W}_{i_1,i_2}^c)+o(n)\\
&\vdots&\\
H(W_{j^N_1,j^N_2,\ldots,j^N_L}) &=& nR_{j^N_{1:L}}\leq n-H(S^n\mid {W}_{i_{N-1},i_N}^c)+o(n)
\end{eqnarray}
For destination $k$ that is interested in $W_{i_N}$, we have
\begin{eqnarray}
nR_{i_N}&=&I(W_{i_N};S^n,\mathcal{A}_k)+o(n)\\
&\leq& I(W_{i_N};S^n,{W}_{i_0,i_N}^c)+o(n)\\
&=&I(W_{i_N};S^n \mid {W}_{i_0,i_N}^c)+o(n)\\
&=&H(S^n\mid {W}_{i_0,i_N}^c)-H(S^n\mid {W}_{i_0}^c)+o(n)\\
&\leq& H(S^n\mid {W}_{i_0,i_N}^c)-nR_{i_0}+o(n) \\
&\leq& H(S^n\mid {W}_{i_0,i_1}^c)+H(S^n\mid {W}_{i_1,i_N}^c)\nonumber\\
&&-H(S^n \mid {W}_{i_1}^c)-nR_{i_0}+o(n) \label{eqn:uselemma1}\\
&\leq& n(1-R_{j^1_{1:L}})+H(S^n\mid {W}_{i_1,i_N}^c)-nR_{i_1}-nR_{i_0}+o(n) \\
&\leq& n(1-R_{j^1_{1:L}})+H(S^n\mid {W}_{i_1,i_2}^c)+H(S^n\mid {W}_{i_2,i_N}^c)-H(S^n \mid {W}_{i_2}^c)\nonumber\\
&&-nR_{i_1}-nR_{i_0}+o(n) \label{eqn:uselemma2}\\
&\leq& n(1-R_{j^1_{1:L}})+n(1-R_{j^2_{1:L}})+H(S^n\mid {W}_{i_2,i_N}^c)-nR_{i_2}-nR_{i_1}\nonumber\\
&&-nR_{i_0}+o(n)\\
&&\vdots \nonumber\\
&\leq& n(1-R_{j^1_{1:L}})+n(1-R_{j^2_{1:L}})+\ldots+n(1-R_{j^{N-2}_{1:L}})+H(S^n\mid {W}_{i_{N-2},i_{N-1}}^c)+H(S^n\mid {W}_{i_{N-1},i_{N}}^c)\nonumber\\
&&-H(S^n\mid {W}_{i_{N-1}}^c)-nR_{i_{N-2}}+\ldots-nR_{i_{1}}-nR_{i_0} \label{eqn:uselemma3}\\
&\leq& n(1-R_{j^1_{1:L}})+n(1-R_{j^2_{1:L}})+\ldots+n(1-R_{j^{N-2}_{1:L}})+n(1-R_{j^{N-1}_{1:L}})+n(1-R_{j^{N}_{1:L}})\nonumber\\
&&-nR_{i_{N-1}}-nR_{i_{N-2}}+\ldots-nR_{i_{1}}-nR_{i_0}\\
&\Rightarrow& R_{i_N}\leq N-R_{j^1_{1:L}}-R_{j^2_{1:L}}-\ldots-R_{j^N_{1:L}}-R_{i_{N-1}}-R_{i_{N-2}}-\ldots-R_{i_0}
\end{eqnarray}
Note that Lemma \ref{lem:useful_lemma} is applied in (\ref{eqn:uselemma1}),  (\ref{eqn:uselemma2}) and (\ref{eqn:uselemma3}). This proves Theorem \ref{thm:outerbound1}.

Finally, we prove Lemma \ref{lem:useful_lemma} which is used repeatedly in the proof of outer bounds.
\begin{lemma}
\label{lem:useful_lemma}
\begin{eqnarray}
H(S^n\mid {W}_{j,k}^c)\leq H(S^n\mid {W}_{i,j}^c)
+H(S^n\mid {W}_{i,k}^c)-H(S^n\mid {W}_{i}^c)
\end{eqnarray}
\end{lemma}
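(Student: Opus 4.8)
The plan is to reduce the stated inequality to three elementary facts: that conditioning cannot increase entropy, submodularity of Shannon entropy, and mutual independence of the message random variables. The first move is bookkeeping. Let $Y \define W_{i,j,k}^c$ be the set of all messages other than $W_i, W_j, W_k$ (the indices are distinct, as in every use of the lemma). Then $W_{j,k}^c = \{W_i\} \cup Y$, $W_{i,k}^c = \{W_j\} \cup Y$, $W_{i,j}^c = \{W_k\} \cup Y$, and $W_i^c = \{W_j, W_k\} \cup Y$, so the assertion of the lemma is exactly
\begin{equation*}
H(S^n \mid W_i, Y) + H(S^n \mid W_j, W_k, Y) \;\le\; H(S^n \mid W_j, Y) + H(S^n \mid W_k, Y).
\end{equation*}
Because $H(S^n \mid W_i, Y) \le H(S^n \mid Y)$, it is enough to prove the sharper-looking inequality obtained by replacing $H(S^n \mid W_i, Y)$ by $H(S^n \mid Y)$.

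For that sharper inequality I would rearrange it into the form $I(S^n; W_k \mid Y) \le I(S^n; W_k \mid W_j, Y)$ and expand both sides: their difference equals $\big(H(W_k \mid W_j, Y) - H(W_k \mid Y)\big) + I(W_k; W_j \mid S^n, Y)$, where the first bracket vanishes because the messages are mutually independent (hence $W_k \perp W_j \mid Y$), and the remaining term is a conditional mutual information, hence nonnegative. Equivalently, one may read $H(S^n \mid Y) + H(S^n, W_j, W_k \mid Y) \le H(S^n, W_j \mid Y) + H(S^n, W_k \mid Y)$ off submodularity conditioned on $Y$, in which case the passage back to conditional entropies of $S^n$ alone uses precisely the identity $I(W_j; W_k \mid Y) = 0$. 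Either way, combining the result with $H(S^n \mid W_i, Y) \le H(S^n \mid Y)$ and undoing the substitution $Y = W_{i,j,k}^c$ recovers the statement of the lemma.

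I do not anticipate a genuine obstacle; the argument is short. The only point that must be handled with care is invoking message independence in the right place: without $W_j \perp W_k \mid Y$ the submodularity bound does not collapse to the clean conditional-entropy form, and indeed the claimed inequality would then be false (for instance if $W_j$ and $W_k$ coincided). It is also worth recording, and I would mention it in passing, that the proof never uses that $S^n$ is a deterministic function of the messages --- only that the message set is mutually independent --- so the lemma in fact holds for an arbitrary $S^n$ jointly distributed with the messages.
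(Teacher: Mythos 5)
Your proof is correct and rests on the same two ingredients as the paper's --- submodularity of Shannon entropy and the fact that conditioning cannot increase entropy --- applied to the same quantities, just in the opposite order and reparametrized via $Y = W_{i,j,k}^c$; the submodular step you invoke conditionally on $Y$ (equivalently $I(S^n;W_k\mid Y)\le I(S^n;W_k\mid W_j,Y)$) is exactly the paper's submodularity applied to $\{S^n,W_{i,j}^c\}$ and $\{S^n,W_{i,k}^c\}$. One thing you surface that the paper glosses over is that passing from the joint-entropy inequality $H(S^n, W_{i,j}^c)+H(S^n,W_{i,k}^c)\ge H(S^n,W_i^c)+H(S^n,W_{i,j,k}^c)$ to the conditional-entropy form needs mutual independence of the messages (to cancel the marginal-entropy terms), and your closing remark that determinism of $S^n$ plays no role is likewise accurate.
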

\begin{proof}
Since entropy function is a submodular function, for any two subsets of random variables $\mathcal{C}, \mathcal{D}$, we have the following
\begin{eqnarray}
H(\mathcal{C})+H(\mathcal{D})\geq H(\mathcal{C}\cup \mathcal{D})+H(\mathcal{C}\cap \mathcal{D} )
\end{eqnarray}
If we choose subsets $\mathcal{C}, \mathcal{D}$ as $\mathcal{C}=\{S^n, {W}_{i,j}^c\}$, $\mathcal{D}=\{S^n, {W}_{i,k}^c\}$ and use the submodular property of entropy function, we have,

\begin{eqnarray}
H(S^n, {W}_{i,j}^c)+H(S^n, {W}_{i,k}^c)\geq H(S^n, {W}_{i}^c)+H(S^n, {W}_{i,j,k}^c)
\end{eqnarray}
which is equivalent to
\begin{eqnarray}
H(S^n\mid {W}_{i,j}^c)+H(S^n\mid {W}_{i,k}^c)&\geq& H(S^n\mid {W}_{i}^c)+H(S^n\mid {W}_{i,j,k}^c)\nonumber\\
&\geq& H(S^n\mid {W}_{i}^c)+H(S^n\mid {W}_{j,k}^c) \label{eqn:conditioning}
\end{eqnarray}
where (\ref{eqn:conditioning}) is true because conditioning reduces entropy. This completes the proof

\end{proof}

\section{Proof of Theorem \ref{thm:sym1}}\label{sec:sym1proof}
\subsection{Achievability}
First, note that if $A= U+D = K-1$, then it is obvious that each source can send 1 symbol per time slot and achieve a rate of $1$. If $A=K-2$, then, it is easy to verify that a rate of $\frac{1}{2}$ is achievable using Theorem \ref{thm:HRF}. Here we show achievability for $A \leq K-3.$ Using linear coding, we show that each user can send $L=(U+1)$ symbols in a $n=K-A+2U$ dimensional space $\mathbb{F}_{q}^{n},$ when $q$ is sufficiently large. Note that because $U\leq D$, we have $A-2U=D-U\geq 0$, and therefore, $n\leq K$. To begin the construction of our achievable scheme, we pick $K$ vectors over $\mathbb{F}_{q}^{n}$ such that any $n$ of them are linearly independent. We denote the vectors by $\mathbf{z}_1, \mathbf{z}_{2},\ldots, \mathbf{z}_{K}.$ Note that over a sufficiently large field, the vectors $\mathbf{z}_{i}, i=1,2,\ldots,K$ can be chosen to satisfy this property\footnote{One approach to such a construction is to use Vandermonde matrices.}. Our construction for the $n\times (U+1)$ matrix $\mathbf{V}_{i}$ --- whose columns are the beamforming vectors for the $U+1$ symbols comprising message $W_i$ --- is as follows.
$$ \mathbf{V}_{i} = \left[\mathbf{z}_{i}~~ \mathbf{z}_{i + 1}~~\mathbf{z}_{i + 2} ~~ \ldots \mathbf{z}_{i + U}\right]$$
where all subscripts are interpreted modulo $K$. Note that any two adjacent messages overlap in $U$ dimensions. For example, the signal spaces spanned by ${\bf V}_i$ and ${\bf V}_{i+1}$ overlap in dimensions $\mathbf{z}_{i+1}, \mathbf{z}_{i+2}, \cdots, \mathbf{z}_{i+U}$. This is the basis for interference alignment. With this construction, we intend to show that (\ref{eq:resolvability}) is satisfied. 

First, note that the $U+1$ columns of $\mathbf{V}_i$ are linearly independent since $U+1 < n$. In particular, denoting
$\mathbf{X}_{i} = \left[x_{i,0}~~ x_{i,1}~~ \ldots, x_{i,U} \right]$, we have
$$ S^n = \sum_{i=1}^{K} \sum_{j=0}^{U} x_{i,j}  \mathbf{z}_{i+ j}.$$
Now, because of the symmetric nature of the problem, we only need to show that $W_1$ is linearly resolvable at $D_1.$ This ensures resolvability at all other destination nodes. Note that destination $1$ has side information of $x_{i,j}$ for all $i \in \{K-U+1, K-U+2, \ldots, K, 2, 3, \ldots, D+1\}.$ The precoding vectors are known to everyone apriori. Cancelling the effect of the known symbols, destination 1 obtains
$$ S^n_{1} = \underbrace{\sum_{j=0}^{U}x_{1,j} \mathbf{z}_{j+1}}_{\mbox{Desired Signal}}+ \underbrace{\sum_{i=D+2}^{K-U} {\sum_{j=0}^{U} {x}_{i,j}} \mathbf{z}_{i+j}}_{\begin{array}{c}\mbox{Aligned}\\ \mbox{Interference}\end{array}} $$

Thus, destination 1 sees $(U+1)+(K-(D+1))=n$ signal vectors $\mathbf{z}_{1},\mathbf{z}_{2},\ldots, \mathbf{z}_{U+1}$,$\mathbf{z}_{D+2}$, $\mathbf{z}_{D+3}$, $\ldots$, $\mathbf{z}_{K}.$ Since these $n$ vectors are linearly independent by design,  the desired scalars $x_{1,j}, j=0,1,\ldots,U$ are resolvable at destination $1$. By symmetry, the same conclusion is applicable at all the $K$ destinations. 

To understand the role of alignment, note that there are $(K-A-1)$ undesired messages that interfere at destination 1, each occupying a $(U+1)$ dimensional signal space. However, because any two adjacent interferers overlap in $U$ dimensions, the $(K-A-1)$ interferers collectively occupy only $U+1+(K-A-1)-1=K-D-1$ dimensions. Since the desired signal  occupies $U+1$ dimensions, and $(U+1) + (K-D-1) = n$, the observed signal space is big enough to resolve desired signals from the interference. See Figure \ref{fig:outerboundexample} for an example.
\begin{figure}[!h] \centering
\includegraphics[width=4in]{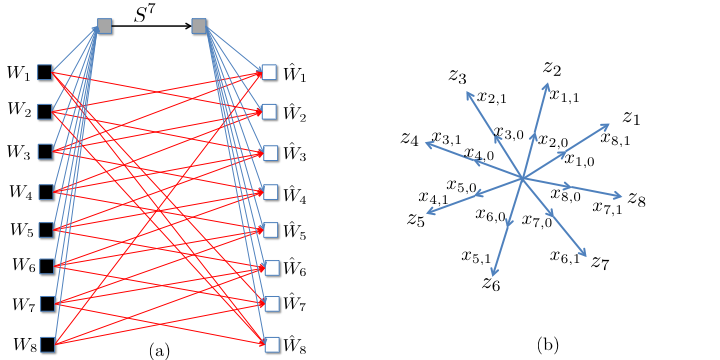}
\caption{\small Multiple unicast with neighboring antidotes where $(D,U,K)=(2,1,8)$. (a) Desired message and antidote
sets, (b) Capacity optimal solution -- 2 symbols per message are sent
with one-to-one pairwise alignments and a rate of $\frac{2}{7}$ per message is achieved using the scheme pictorially depicted. $z_1,\ldots,z_8$ may be chosen to be the $7$ columns of the $7\times 7$ identity matrix and the all 1's vector.}
\label{fig:outerboundexample}
\end{figure}

\subsection{Outer bound}
The trivial bound, that a message can not be transmitted at more than rate 1 (the capacity of the bottleneck link)  is tight when $A=K-1$, i.e., all undesired messages are available as antidotes. If $A=K-2$, i.e., only one message is missing from the antidote set, then the rate $\frac{1}{2}$ per user is the outerbound, as shown in the proof of Theorem \ref{thm:simplebound}. Therefore we focus on the setting $A\leq K-3$.

We first present an outer bound on the rates achievable via linear achievable schemes to gain intuition. Later, we will use this intuition to get information theoretic outer bounds for \emph{any} achievable scheme.
\subsubsection{Dimension Counting Outerbound}
Our goal is to show that the sum-capacity per message is bounded above by $\frac{U+1}{K-A+2U}$. To show this, we first prove that if we want to achieve a rate $d$ per message, then the total dimension of interference at each destination satisfies
\begin{eqnarray}
\dim(\text{interference})\geq \left[\frac{K-A+2U}{U+1}-1\right]d=\frac{K-A-1+U}{U+1}d
\end{eqnarray}
$K-A-1$ is the total number of interferers at each destination. Suppose $\mathcal{V}_i$ denotes the subspace assigned to user $i$. Hence, our goal is bounding $\sum_{i=1}^K\dim(\mathcal{V}_i \cup \mathcal{V}_{i+1} \cup \ldots \cup \mathcal{V}_{i+K-A-2})$, i.e., the total number of interfering dimensions at all the destinations. We define $\alpha_j$ as follows
\begin{eqnarray}
\alpha_j\overset{\triangle}=\sum_{i=1}^K {\dim(\mathcal{V}_i\cup \mathcal{V}_{i+1}\cup \ldots \cup \mathcal{V}_{i+j-1})}
\end{eqnarray}
So our goal is bounding the total number of interference at all destinations, i.e., $\alpha_{K-A-1}$. 
\begin{lemma}\label{lemma1}
For the following $m$ and $j$ 
\begin{eqnarray}
\left\{ 
  \begin{array}{l l}
    m=\lfloor\frac{K-A-1}{U+1}\rfloor,~ j=(K-A-1) \mod (U+1)& \quad \text{if}~K-A-1\mod (U+1)\neq 0\\
     m=\frac{K-A-1}{U+1}-1,~j=U+1 & \quad \text{if}~K-A-1\mod (U+1)=0\\
  \end{array} \right.
\end{eqnarray}
 we have 
\begin{eqnarray}
\alpha_{K-A-1} \geq m\alpha_{1}+\alpha_{j}
\end{eqnarray}
\end{lemma}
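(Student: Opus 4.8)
The plan is to prove Lemma \ref{lemma1} by establishing one recursive inequality, $\alpha_\ell\ge\alpha_{\ell-(U+1)}+\alpha_1$ for every window length $\ell$ with $U+2\le\ell\le K-A-1$, and then iterating it. Throughout, $\mathcal{V}_i\subseteq\mathbb{F}_q^n$ denotes the signal subspace carrying message $W_i$ in a linear scheme, and all indices are read modulo $K$. The starting point is the resolvability condition (\ref{eq:resolvability}) specialized to the neighboring-antidote setting: at destination $i$ the desired subspace $\mathcal{V}_i$ must intersect trivially with the span of the $K-A-1$ interfering subspaces $\mathcal{V}_{i+D+1},\mathcal{V}_{i+D+2},\ldots,\mathcal{V}_{i+K-U-1}$, these being exactly the messages outside $\mathcal{W}_i\cup\mathcal{A}_i$. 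Since disjointness from a span implies disjointness from the span of any subfamily, I extract the following combinatorial consequence: for $U+2\le\ell\le K-A-1$ and any starting index $i$, the last subspace of the length-$\ell$ window $\{i,i+1,\ldots,i+\ell-1\}$, namely $\mathcal{V}_{i+\ell-1}$, has trivial intersection with $\mathcal{V}_i+\mathcal{V}_{i+1}+\cdots+\mathcal{V}_{i+\ell-U-2}$, the span of the first $\ell-(U+1)$ subspaces of the window. The reason is that $\{i,i+1,\ldots,i+\ell-U-2\}$ is contained in the interference arc $\{(i+\ell-1)+D+1,\ldots,(i+\ell-1)+K-U-1\}$ of destination $i+\ell-1$; checking this containment is a short modular-arithmetic verification that uses precisely $U+2\le\ell\le K-A-1$ so that the interference arc wraps around far enough to cover $\{i,\ldots,i+\ell-U-2\}$.

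Given this disjointness, the recursion is immediate. Using the rank identity $\dim(P+Q)=\dim P+\dim Q-\dim(P\cap Q)$ with $P=\mathcal{V}_i+\cdots+\mathcal{V}_{i+\ell-U-2}$ and $Q=\mathcal{V}_{i+\ell-1}$, together with monotonicity of dimension (the full window contains $P+\mathcal{V}_{i+\ell-1}$),
\[
\dim\bigl(\mathcal{V}_i+\cdots+\mathcal{V}_{i+\ell-1}\bigr)\ \ge\ \dim\bigl(\mathcal{V}_i+\cdots+\mathcal{V}_{i+\ell-U-2}\bigr)+\dim\bigl(\mathcal{V}_{i+\ell-1}\bigr).
\]
Summing this over the $K$ cyclic shifts of $i$ gives $\alpha_\ell\ge\alpha_{\ell-(U+1)}+\alpha_1$, valid for every $\ell\in\{U+2,\ldots,K-A-1\}$.

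It remains to iterate, peeling off a block of size $U+1$ at each step, so $\alpha_{K-A-1}\ge t\,\alpha_1+\alpha_{K-A-1-t(U+1)}$ as long as the current window length remains $\ge U+2$, which is what legitimizes the next step. If $(K-A-1)\bmod(U+1)\ne 0$, the process halts after $m=\lfloor (K-A-1)/(U+1)\rfloor$ steps at window length $j=(K-A-1)\bmod(U+1)\in\{1,\ldots,U\}$, giving $\alpha_{K-A-1}\ge m\,\alpha_1+\alpha_j$. If instead $(U+1)\mid(K-A-1)$, the last admissible window length is $U+1$ (not $0$), since a further step would start from length $U+1<U+2$; hence the iteration stops one step earlier, at $\alpha_{K-A-1}\ge\bigl(\tfrac{K-A-1}{U+1}-1\bigr)\alpha_1+\alpha_{U+1}$, which is the claimed bound with $m=\tfrac{K-A-1}{U+1}-1$ and $j=U+1$.

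The main obstacle is bookkeeping rather than any hard estimate: (i) pinning down that the first $\ell-(U+1)$ subspaces of a window indeed lie inside the interference set of the destination attached to the window's last subspace --- this is where the cyclic geometry and the exact count $K-A-1=K-U-D-1$ of interferers matter; and (ii) terminating the recursion at the right place, in particular recognizing that the divisible case forbids the final descent to window length $0$ and therefore must stop at $U+1$, which is exactly why Lemma \ref{lemma1} singles that case out. I also note $\alpha_\ell\ge\alpha_1$ for all $\ell\ge1$ trivially, so the peeled-off $\alpha_1$'s are never wasted; combining Lemma \ref{lemma1} with a separate base estimate for $\alpha_j$, $j\le U+1$, then yields the target $\alpha_{K-A-1}\ge\frac{K-A-1+U}{U+1}\alpha_1$ used in the outer bound.
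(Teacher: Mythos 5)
Your proof is correct and takes essentially the same approach as the paper: unrolling your one-step recursion $\alpha_\ell \ge \alpha_{\ell-(U+1)} + \alpha_1$ exactly $m$ times isolates precisely the subfamily $\{\mathcal{V}_i,\ldots,\mathcal{V}_{i+j-1},\mathcal{V}_{i+j-1+(U+1)},\ldots,\mathcal{V}_{i+j-1+m(U+1)}\}$ that the paper drops to in a single step, and both rely on the same resolvability/disjointness fact at destination $i+\ell-1$. Your termination bookkeeping in the divisible case (stopping at $j=U+1$ rather than $0$) is also handled correctly.
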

\begin{proof}
\begin{eqnarray}
\alpha_{K-A-1}&=&\alpha_{m(U+1)+j} \nonumber\\
&=&\sum_{i=1}^K\dim(\mathcal{V}_i\cup \mathcal{V}_{i+1}\cup \ldots \cup \mathcal{V}_{m(U+1)+j-1})\\
&\geq&\sum_{i=1}^K\{\dim(\mathcal{V}_i\cup \mathcal{V}_{i+1}\cup \ldots \cup  \mathcal{V}_{i+j-1} \cup \mathcal{V}_{i+j-1+(U+1)} \cup \mathcal{V}_{i+j-1+2(U+1)} \cup \ldots\cup \mathcal{V}_{i+j-1+m(U+1)})\}\\
&=&\sum_{i=1}^K\{\dim(\mathcal{V}_i\cup \mathcal{V}_{i+1}\cup \ldots \cup \mathcal{V}_{i+j-1})+\dim(\mathcal{V}_{i+j-1+(U+1)})+\dim(\mathcal{V}_{i+j-1+2(U+1)})+\ldots\nonumber\\
&&+\dim(\mathcal{V}_{i+j-1+m(U+1)})\} \label{eqn:disjointsubspaces}\\
&=&m\alpha_{1}+\alpha_{j}
\end{eqnarray}
where (\ref{eqn:disjointsubspaces}) is true becasue $W_{i,\ldots,i+j-1,i+j-1+(U+1),\ldots,i+j-1+l(U+1)} \notin \mathcal{A}_{i+j-1+l(U+1)}$ for $l=0,1,\ldots,m$. This proves Lemma \ref{lemma1}.
\end{proof}

\begin{lemma} \label{lemma2}
For $j=2,3,\ldots,U+1$, we have 
\begin{eqnarray}
\alpha_{j} &\geq& \alpha_{j-1}+\frac{\alpha_1}{U+1}\label{eqn:recursion1}\\
\Rightarrow ~~~\alpha_{j} &\geq& \frac{U+j}{U+1}\alpha_{1}\label{eqn:recursion2}
\end{eqnarray}
\end{lemma}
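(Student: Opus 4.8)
The plan is to get (\ref{eqn:recursion2}) from (\ref{eqn:recursion1}) for free and then concentrate on (\ref{eqn:recursion1}). For the implication, induct on $j$: the base case $j=1$ is the equality $\alpha_1=\frac{U+1}{U+1}\alpha_1$, and the inductive step is $\alpha_j\ge\alpha_{j-1}+\frac{\alpha_1}{U+1}\ge\frac{U+j-1}{U+1}\alpha_1+\frac{\alpha_1}{U+1}=\frac{U+j}{U+1}\alpha_1$. So everything reduces to proving (\ref{eqn:recursion1}). To this end, reindex $\alpha_{j-1}=\sum_{i=1}^K\dim(\mathcal{V}_{i+1}\cup\cdots\cup\mathcal{V}_{i+j-1})$ and use the dimension identity $\dim(\mathcal{A}\cup\mathcal{B})=\dim\mathcal{A}+\dim\mathcal{B}-\dim(\mathcal{A}\cap\mathcal{B})$ to write
\[
\alpha_j-\alpha_{j-1}=\sum_{i=1}^{K}\Big[\dim\mathcal{V}_i-\dim\big(\mathcal{V}_i\cap(\mathcal{V}_{i+1}\cup\cdots\cup\mathcal{V}_{i+j-1})\big)\Big]=\alpha_1-\Theta_j,\qquad \Theta_j:=\sum_{i}\dim\big(\mathcal{V}_i\cap(\mathcal{V}_{i+1}\cup\cdots\cup\mathcal{V}_{i+j-1})\big).
\]
Enlarging the window only enlarges each intersection, so $\Theta_2\le\Theta_3\le\cdots\le\Theta_{U+1}$; hence it suffices to prove $\Theta_{U+1}\le\frac{U}{U+1}\alpha_1$ (equivalently $\alpha_{U+1}\ge\alpha_U+\frac{\alpha_1}{U+1}$), which then delivers (\ref{eqn:recursion1}) for every $j\in\{2,\dots,U+1\}$.

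Two ingredients drive the bound. First, (\ref{eq:resolvability}) together with the neighboring-antidote structure forces $\mathcal{V}_i$ into a direct sum with the span of every $\mathcal{V}_{i'}$ at large cyclic distance: in particular $W_i$ is an interferer at destination $i+U+1$ when $A\le K-3$, so $\mathcal{V}_i\cap\mathcal{V}_{i+U+1}=\{0\}$, and therefore every intersection of $U+2$ consecutive subspaces is trivial. Second, the same dimension identity supplies both the submodular inequality $\dim(\mathcal{A}\cup\mathcal{B})+\dim(\mathcal{A}\cap\mathcal{B})=\dim\mathcal{A}+\dim\mathcal{B}$ and the ``mirror'' identity $\sum_i\dim\big(\mathcal{V}_i\cap(\mathcal{V}_{i+1}\cup\cdots\cup\mathcal{V}_{i+U})\big)=\sum_i\dim\big(\mathcal{V}_i\cap(\mathcal{V}_{i-1}\cup\cdots\cup\mathcal{V}_{i-U})\big)=\alpha_1+\alpha_U-\alpha_{U+1}$ (each side follows from $\dim(\mathcal{A}\cap\mathcal{B})=\dim\mathcal{A}+\dim\mathcal{B}-\dim(\mathcal{A}\cup\mathcal{B})$ after summing).

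The engine is a convexity argument on a dimension sequence. It is most transparent for $j=2$: set $\gamma_t:=\sum_i\dim(\mathcal{V}_i\cap\mathcal{V}_{i+1}\cap\cdots\cap\mathcal{V}_{i+t-1})$, apply submodularity to $\mathcal{C}_i:=\mathcal{V}_i\cap\cdots\cap\mathcal{V}_{i+t-1}$ and $\mathcal{C}_{i+1}:=\mathcal{V}_{i+1}\cap\cdots\cap\mathcal{V}_{i+t}$ (whose meet is $\mathcal{V}_i\cap\cdots\cap\mathcal{V}_{i+t}$ and whose join lies inside $\mathcal{V}_{i+1}\cap\cdots\cap\mathcal{V}_{i+t-1}$), and sum over $i$ to obtain $2\gamma_t\le\gamma_{t-1}+\gamma_{t+1}$ for $2\le t\le U+1$; thus $(\gamma_t)$ is convex on $\{1,\dots,U+2\}$, and with $\gamma_1=\alpha_1$, $\gamma_{U+2}=0$ (first ingredient) and $\gamma_t\ge0$, the chord from $(1,\alpha_1)$ to $(U+2,0)$ gives $\gamma_t\le\frac{U+2-t}{U+1}\alpha_1$, in particular $\Theta_2=\gamma_2\le\frac{U}{U+1}\alpha_1$. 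For the remaining case $j=U+1$ one runs the same machinery on the sequence of overlap sums $\nu_p$ produced by alternately intersecting $\mathcal{V}_i$ with ever larger left- and right-blocks of neighbors, so that $\nu_0=\alpha_1$, $\nu_1=\Theta_{U+1}=\alpha_1+\alpha_U-\alpha_{U+1}$ (by the mirror identity), and $\nu_{U+1}=0$ by the far-apart disjointness; the submodular inequality makes $(\nu_p)$ convex, and the chord from $(0,\alpha_1)$ to $(U+1,0)$ yields $\nu_1\le\frac{U}{U+1}\alpha_1$, i.e.\ $\Theta_{U+1}\le\frac{U}{U+1}\alpha_1$, finishing (\ref{eqn:recursion1}).

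The step I expect to be the main obstacle is constructing the correct sequence $(\nu_p)$ for the general case: one needs subspaces whose successive ``meets'' and ``joins'' both line up so that submodularity yields exactly $2\nu_p\le\nu_{p-1}+\nu_{p+1}$, and whose last member genuinely vanishes by the disjointness of far-apart $\mathcal{V}_i$'s, so that the convex chord produces precisely the coefficient $\frac1{U+1}$ rather than a weaker constant; getting the bookkeeping of which neighbors enter at stage $p$ exactly right is the delicate part. As in the proof of Lemma \ref{lemma1}, one must also be a little careful with the cyclic indexing (all subscripts mod $K$): this is harmless when $K$ is large relative to $A$ and otherwise requires a small separate check.
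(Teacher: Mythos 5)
Your reduction to a single extremal case is correct: $\Theta_j$ is monotone increasing in $j$, so it suffices to prove $\Theta_{U+1}\le\frac{U}{U+1}\alpha_1$, i.e.\ $\alpha_{U+1}\ge\alpha_U+\frac{\alpha_1}{U+1}$. Your $\gamma_t$ convexity argument is also internally valid: submodularity gives $2\gamma_t\le\gamma_{t-1}+\gamma_{t+1}$, the far-apart disjointness $\mathcal{V}_i\cap\mathcal{V}_{i+U+1}=\{0\}$ gives $\gamma_{U+2}=0$, and the chord gives $\gamma_2\le\frac{U}{U+1}\alpha_1$.

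But $\gamma_2=\Theta_2$, and $\Theta_2$ is the \emph{smallest} of the $\Theta_j$'s — the one quantity that your own monotonicity reduction tells you is irrelevant. The bound you actually need is on $\Theta_{U+1}$, and here you write only that one should run ``the same machinery'' on a sequence $\nu_p$ of ``overlap sums produced by alternately intersecting $\mathcal{V}_i$ with ever larger left- and right-blocks,'' with $\nu_0=\alpha_1$, $\nu_1=\Theta_{U+1}$, $\nu_{U+1}=0$. You do not define the $\nu_p$ for $2\le p\le U$, do not verify $2\nu_p\le\nu_{p-1}+\nu_{p+1}$, and do not verify $\nu_{U+1}=0$; you yourself flag this construction as ``the main obstacle.'' Since this is exactly where the entire content of Lemma~\ref{lemma2} lives, the proposal as written has a genuine gap. (Your ``mirror identity'' $\Theta_{U+1}=\alpha_1+\alpha_U-\alpha_{U+1}$ is just the dimension formula $\dim(\mathcal{A}\cap\mathcal{B})=\dim\mathcal{A}+\dim\mathcal{B}-\dim(\mathcal{A}+\mathcal{B})$ summed over $i$; it restates $\alpha_{U+1}-\alpha_U=\alpha_1-\Theta_{U+1}$ and adds no leverage.)

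For comparison, the paper does not isolate $j=U+1$ at all; it proves $\alpha_j\ge\alpha_{j-1}+\frac{\alpha_1}{U+1}$ directly for each $j$ by a chain of $U$ submodularity steps applied to the union sums. Start from $\sum_i\dim(\mathcal{V}_i\cup\cdots\cup\mathcal{V}_{i+j-2}\cup\mathcal{V}_{i+j+U-1})$. The antidote structure makes $\mathcal{V}_{i+j+U-1}$ disjoint from $\mathcal{V}_i\cup\cdots\cup\mathcal{V}_{i+j-2}$ (the window lies entirely in the interference set of destination $i+j+U-1$), so the left side equals $\alpha_{j-1}+\alpha_1$. Then $U$ applications of $\dim(\mathcal{A}\cup\mathcal{B})\le\dim(\mathcal{A}\cup\mathcal{C})+\dim(\mathcal{C}\cup\mathcal{B})-\dim(\mathcal{C})$, each time sliding the ``odd man out'' one step closer to the contiguous window, bound the same sum above by $(U+1)\alpha_j-U\alpha_{j-1}$, and rearranging gives (\ref{eqn:recursion1}). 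If you want to salvage the convexity framing, the natural sequence to try is $\nu_p:=\sum_i\dim(\mathcal{V}_{i+p}\cup\cdots\cup\mathcal{V}_{i+p+j-2}\cup\mathcal{V}_{i+j+U-1})$ for $p=0,\dots,U$ — this is precisely the paper's telescoping chain read as a sequence — but you would still need to check the endpoints and the $2\nu_p\le\nu_{p-1}+\nu_{p+1}$ step, and that is exactly the paper's argument, not a shortcut around it. Your final cautionary remark about the cyclic indexing is well placed: the disjointness $\mathcal{V}_i\cap\mathcal{V}_{i+j+U-1}=\{0\}$ and the claim that the window sits inside the interferers of $D_{i+j+U-1}$ both require $j+A\le K$, and this needs a check for the larger values of $A$ permitted by $A\le K-2$.
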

\begin{proof}
\begin{equation}
\sum_{i=1}^K\dim(\mathcal{V}_i \cup \mathcal{V}_{i+1} \cup\ldots \cup \mathcal{V}_{i+j-2} \cup \mathcal{V}_{i+j+U-1}) ~~~~~~~~~~~~~~~~~~~~~~~~~~~~~~~~~~~~~~~~~~~~~~~~~~~~~~~~~~~~~~~~~~~~~~~~~~~~~~~~~~~
\end{equation}
\begin{eqnarray}
&\leq&\sum_{i=1}^K\dim(\mathcal{V}_i \cup \mathcal{V}_{i+1} \cup\ldots \cup \mathcal{V}_{i+j-2} \cup \mathcal{V}_{i+j-1})+\sum_{i=1}^K\dim( \mathcal{V}_{i+1} \cup \mathcal{V}_{i+2} \cup\ldots \cup \mathcal{V}_{i+j-1} \cup  \mathcal{V}_{i+j+U-1})\nonumber\\
&&-\sum_{i=1}^K\dim(\mathcal{V}_{i+1} \cup \mathcal{V}_{i+2} \cup\ldots \cup \mathcal{V}_{i+j-1})\\
&\leq&\sum_{i=1}^K\dim(\mathcal{V}_i \cup \mathcal{V}_{i+1} \cup\ldots \cup \mathcal{V}_{i+j-2} \cup \mathcal{V}_{i+j-1})+\sum_{i=1}^K\dim( \mathcal{V}_{i+1} \cup \mathcal{V}_{i+2} \cup\ldots \cup \mathcal{V}_{i+j-1} \cup  \mathcal{V}_{i+j})\nonumber\\
&&+\sum_{i=1}^K\dim( \mathcal{V}_{i+2} \cup \mathcal{V}_{i+3} \cup\ldots \cup \mathcal{V}_{i+j} \cup  \mathcal{V}_{i+j+U-1})-\sum_{i=1}^K\dim(\mathcal{V}_{i+2} \cup \mathcal{V}_{i+3} \cup\ldots \cup \mathcal{V}_{i+j})\nonumber\\
&&-\sum_{i=1}^K\dim(\mathcal{V}_{i+1} \cup \mathcal{V}_{i+2} \cup\ldots \cup \mathcal{V}_{i+j-1})\\
&\vdots& \nonumber\\
&\leq&\sum_{i=1}^K\dim(\mathcal{V}_i \cup \mathcal{V}_{i+1} \cup\ldots \cup \mathcal{V}_{i+j-2} \cup \mathcal{V}_{i+j-1})+\sum_{i=1}^K\dim( \mathcal{V}_{i+1} \cup \mathcal{V}_{i+2} \cup\ldots \cup \mathcal{V}_{i+j-1} \cup  \mathcal{V}_{i+j})\nonumber\\
&&+\sum_{i=1}^K\dim( \mathcal{V}_{i+2} \cup \mathcal{V}_{i+3} \cup\ldots \cup \mathcal{V}_{i+j} \cup  \mathcal{V}_{i+j+1})+\ldots+\sum_{i=1}^K\dim( \mathcal{V}_{i+U} \cup \mathcal{V}_{i+U+1} \cup\ldots \cup \mathcal{V}_{i+j+U-2} \cup  \mathcal{V}_{i+j+U-1})\nonumber\\
&&-\sum_{i=1}^K\dim(\mathcal{V}_{i+U} \cup \mathcal{V}_{i+U+1} \cup\ldots \cup \mathcal{V}_{i+j+U-2})-\ldots-\sum_{i=1}^K\dim(\mathcal{V}_{i+2} \cup \mathcal{V}_{i+3} \cup\ldots \cup \mathcal{V}_{i+j})\nonumber\\
&&-\sum_{i=1}^K\dim(\mathcal{V}_{i+1} \cup \mathcal{V}_{i+2} \cup\ldots \cup \mathcal{V}_{i+j-1})\label{eqn:submodularproperty1}
\end{eqnarray}
where at each step, we use the submodular property of $\dim$ function. On the other hand, since $\{W_{i},W_{i+1},\ldots,W_{i+j-2}\} \notin \mathcal{A}_{i+j+U-1}$, we have
\begin{eqnarray}
\dim(\mathcal{V}_i \cup \mathcal{V}_{i+1} \cup\ldots \cup \mathcal{V}_{i+j-2} \cup \mathcal{V}_{i+j+U-1})=\dim(\mathcal{V}_i \cup \mathcal{V}_{i+1} \cup\ldots \cup \mathcal{V}_{i+j-2} )+\dim(\mathcal{V}_{i+j+U-1}) \label{eqn:disjointsub1}
\end{eqnarray}
By combining (\ref{eqn:submodularproperty1}) and (\ref{eqn:disjointsub1}), we have
\begin{eqnarray}
\alpha_{j-1}+\alpha_{1}\leq\alpha_j+\alpha_j+\ldots+\alpha_j-\alpha_{j-1}-\ldots-\alpha_{j-1}\\
\alpha_j\geq \frac{1}{U+1}\{(U+1)\alpha_{j-1}+\alpha_1\}
\end{eqnarray}
\end{proof}
This proves (\ref{eqn:recursion1}). Solving this recursive equation, we get (\ref{eqn:recursion2}). This proves Lemma \ref{lemma2}.

Combining Lemma \ref{lemma1} and Lemma \ref{lemma2}, we have the following
\begin{eqnarray}
\alpha_{K-A-1} \geq m\alpha_{1}+\frac{U+j}{U+1}\alpha_{1}=\frac{m(U+1)+j+U}{U+1}\alpha_{1}=\frac{K-A-1+U}{U+1}\alpha_{1}
\end{eqnarray}
Then the dimension of desired message plus the dimension of interference at all destinations should be less than or equal to the total available dimensions, i.e., $K$
\begin{eqnarray}
\sum_{i=1}^K\{\dim(\mathcal{V}_{i-D-1})+\dim(\mathcal{V}_i\cup \mathcal{V}_{i+1}\cup \ldots \cup \mathcal{V}_{i+K-A-2})\} \leq \alpha_1+\frac{K-A-1+U}{U+1}\alpha_{1} \leq K
\end{eqnarray}
Hence the optimal symmetric rate per message is $d=\frac{1}{K}\alpha_1\leq\frac{U+1}{K-A+2U}$
\subsubsection{Information Theoretic Outerbound}
We define $\alpha_j$ as follows
\begin{eqnarray}
\alpha_j\overset{\triangle}=\sum_{i=1}^K{H(S^n|W_{i,i+1,\ldots,i+j-1}^c)}
\end{eqnarray} 
Our first goal is to bound $\alpha_{K-A-1}$. We proceed as follows.
\begin{lemma}\label{lemma3}
For the following $m$ and $j$
\begin{eqnarray}
\left\{ 
  \begin{array}{l l}
    m=\lfloor\frac{K-A-1}{U+1}\rfloor,~ j=(K-A-1) \mod (U+1)& \quad \text{if}~K-A-1\mod (U+1)\neq 0\\
     m=\frac{K-A-1}{U+1}-1,~j=U+1 & \quad \text{if}~K-A-1\mod (U+1)=0\\
  \end{array} \right.
\end{eqnarray}
 we have 
\begin{eqnarray}
\alpha_{K-A-1} \geq m\alpha_{1}+\alpha_{j}+o(n) 
\end{eqnarray}
\end{lemma}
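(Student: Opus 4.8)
\noindent\textit{Proof proposal.} The plan is to transcribe the dimension-counting argument behind Lemma \ref{lemma1} into entropy language, replacing $\dim(\mathcal{V}_i\cup\cdots\cup\mathcal{V}_{i+j-1})$ throughout by $H(S^n\mid W_{i,i+1,\ldots,i+j-1}^c)$ and replacing the ``disjoint subspaces'' splitting of (\ref{eqn:disjointsubspaces}) by a chain-rule decomposition whose increments are forced, by reliable decodability together with the neighboring-antidote structure, to each carry $\alpha_1/K$ worth of entropy. Fix $i$ and, since $K-A-1=m(U+1)+j$, set $d_l=i+j-1+l(U+1)$ for $l=1,\ldots,m$ (all indices mod $K$), $\mathcal{C}_0=\{i,i+1,\ldots,i+j-1\}$, and $\mathcal{C}_l=\mathcal{C}_{l-1}\cup\{d_l\}$, so that $\mathcal{C}_m=\{i,\ldots,i+j-1\}\cup\{d_1,\ldots,d_m\}$ and every index of $\mathcal{C}_m$ lies in the contiguous block $\{i,i+1,\ldots,i+m(U+1)+j-1\}$ (which has $K-A-1<K$ elements, with $d_m$ its last element).

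First I would note that $\alpha_{K-A-1}=\sum_{i=1}^{K}H(S^n\mid W_{i,i+1,\ldots,i+m(U+1)+j-1}^c)$ by definition, and that enlarging the conditioning set (keeping only the messages in $\mathcal{C}_m$ rather than the whole block) only decreases entropy, so
$$\alpha_{K-A-1}\ \ge\ \sum_{i=1}^{K}H\!\left(S^n\mid W_{\mathcal{C}_m}^c\right).$$
Then, since $W_{\mathcal{C}_{l-1}}^c=W_{\mathcal{C}_l}^c\cup\{W_{d_l}\}$, the chain rule gives the exact telescoping
$$H\!\left(S^n\mid W_{\mathcal{C}_m}^c\right)=H\!\left(S^n\mid W_{\mathcal{C}_0}^c\right)+\sum_{l=1}^{m}I\!\left(S^n;W_{d_l}\mid W_{\mathcal{C}_l}^c\right).$$

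The crux is to show each increment satisfies $I(S^n;W_{d_l}\mid W_{\mathcal{C}_l}^c)=H(S^n\mid W_{d_l}^c)+o(n)$. Since the messages are independent, $I(S^n;W_{d_l}\mid W_{\mathcal{C}_l}^c)=H(W_{d_l})-H(W_{d_l}\mid S^n,W_{\mathcal{C}_l}^c)$. Here I would verify the combinatorial fact underlying (\ref{eqn:disjointsubspaces}): every index of $\mathcal{C}_l$ other than $d_l$ sits at a circular ``down-distance'' from $d_l$ lying in $\{U+1,\ 2(U+1),\ \ldots,\ (l-1)(U+1)\}\cup\{l(U+1),\ l(U+1)+1,\ \ldots,\ l(U+1)+j-1\}$, whose minimum $U+1$ exceeds $U$ and whose maximum $l(U+1)+j-1\le m(U+1)+j-1=K-A-2<K-D$ (using $D\le A$; the statement is vacuous unless $m\ge 1$, which is where $A\le K-3$ enters). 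Hence none of these messages is desired at $D_{d_l}$ nor an antidote of $D_{d_l}$, so $\mathcal{A}_{d_l}\subseteq W_{\mathcal{C}_l}^c$, and reliability of the scheme (zero error, or $\epsilon$-error with Fano) gives $H(W_{d_l}\mid S^n,W_{\mathcal{C}_l}^c)\le H(W_{d_l}\mid S^n,\mathcal{A}_{d_l})=o(n)$, whence $I(S^n;W_{d_l}\mid W_{\mathcal{C}_l}^c)=H(W_{d_l})+o(n)$. The same decodability together with $S^n=f(W_1,\ldots,W_M)$ gives $H(S^n\mid W_{d_l}^c)=H(W_{d_l})+o(n)$, so the increment equals $H(S^n\mid W_{d_l}^c)+o(n)$.

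Finally I would sum over $i=1,\ldots,K$: the term $\sum_i H(S^n\mid W_{\mathcal{C}_0(i)}^c)$ is exactly $\alpha_j$, and for each fixed $l$ the map $i\mapsto d_l(i)$ is a bijection of $\{1,\ldots,K\}$, so $\sum_i H(S^n\mid W_{d_l(i)}^c)=\sum_k H(S^n\mid W_k^c)=\alpha_1$. Combining the three displays yields $\alpha_{K-A-1}\ge\alpha_j+m\alpha_1+o(n)$, which is the claim. The main obstacle — as in Lemma \ref{lemma1} — is purely the combinatorial bookkeeping of the crux step: confirming from the neighboring-antidote structure that all of $\mathcal{C}_l(i)\setminus\{d_l\}$ lies strictly inside the interference window of $D_{d_l}$ (down-distances in $(U,\,K-D)$), which is where the hypotheses $0\le U\le D$ and $A\le K-3$ are used; everything else is the chain rule, ``conditioning reduces entropy'', and a relabelling of the sum. \QED
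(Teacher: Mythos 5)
Your proof is correct and follows essentially the same route as the paper's: both proceed by choosing, inside the length-$(K-A-1)$ window, the sparser subset $\mathcal{C}_m=\{i,\ldots,i+j-1\}\cup\{d_1,\ldots,d_m\}$, invoking ``conditioning reduces entropy'' to pass from the full window to $\mathcal{C}_m$, and then using Fano at each of the destinations $D_{d_1},\ldots,D_{d_m}$ (justified by the neighboring-antidote structure $\mathcal{A}_{d_l}\cap\mathcal{C}_l=\emptyset$) to account for each increment. The only cosmetic differences are that the paper builds up $H(S^n\mid W_{\mathcal{C}_l}^c)$ step-by-step via successive Fano identities and records each increment as $nR_{d_l}$, whereas you first write the telescoping chain-rule decomposition in one shot and then identify each increment with $H(S^n\mid W_{d_l}^c)+o(n)$; and you spell out the circular-index bookkeeping (up-distances in $[U+1,\,K-A-2]\subset[U+1,\,K-D-1]$) more explicitly than the paper, which is a welcome addition but not a change of method.
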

\begin{proof}
Note that destination $D_{i+j-1+(U+1)}$ can decode $W_{i+j-1+(U+1)}$ from $(S^n,W_{i,i+1,\ldots,i+j-1,i+j-1+(U+1)}^c )$ with $P_e \rightarrow 0$ as $n\rightarrow \infty$. Therefore, we can write
\begin{eqnarray}
nR_{i+j-1+(U+1)}&=&H(W_{i+j-1+(U+1)})\\
&=&I(W_{i+j-1+(U+1)}; S^n| W_{i,i+1,\ldots,i+j-1,i+j-1+(U+1)}^c ) \nonumber\\
&&+H(W_{i+j-1+(U+1)}|S^n,W_{i,i+1,\ldots,i+j-1,i+j-1+(U+1)}^c )\\
&=&H(S^n|W_{i,i+1,\ldots,i+j-1,i+j-1+(U+1)}^c ) \nonumber\\
&&-H(S^n| W_{i,i+1,\ldots,i+j-1}^c )+o(n)
\end{eqnarray}
which gives us 
\begin{eqnarray}
H(S^n|W_{i,i+1,\ldots,i+j-1,i+j-1+(U+1)}^c)&=&nR_{i+j-1+(U+1)}+H(S^n| W_{i,i+1,\ldots,i+j-1}^c)+o(n)\label{eq:W_ic}
\end{eqnarray}
\noindent Next, note that destination $D_{i+j-1+2(U+1)}$ does not have $W_{i,\ldots,i+j-1,i+j-1+(U+1),i+j-1+2(U+1)}$ as antidote. So, given $(S^n, W_{i,\ldots,i+j-1,i+j-1+(U+1),i+j-1+2(U+1)}^c)$ it must be able to reliably decode $W_{i+j-1+2(U+1)}$. 
\begin{eqnarray}
nR_{i+j-1+2(U+1)}&=&H(W_{i+j-1+2(U+1)})\\
&=&I(W_{i+j-1+2(U+1)}; S^n, W_{i,\ldots,i+j-1,i+j-1+(U+1),i+j-1+2(U+1)}^c)\nonumber\\
&&+H(W_{i+j-1+2(U+1)}|S^n,W_{i,\ldots,i+j-1,i+j-1+(U+1),i+j-1+2(U+1)}^c)\\
&=&I(W_{i+j-1+2(U+1)}; S^n| W_{i,\ldots,i+j-1,i+j-1+(U+1),i+j-1+2(U+1)}^c)+o(n)\nonumber\\
&=&H(S^n|W_{i,\ldots,i+j-1,i+j-1+(U+1),i+j-1+2(U+1)}^c)\nonumber\\
&&-H(S^n|W_{i,\ldots,i+j-1,i+j-1+(U+1)}^c)+o(n)
\end{eqnarray}
which along with (\ref{eq:W_ic}) gives us
\begin{eqnarray}
H(S^n|W_{i,\ldots,i+j-1,i+j-1+(U+1),i+j-1+2(U+1)}^c)&=&nR_{i+j-1+2(U+1)}+nR_{i+j-1+(U+1)} \nonumber\\
&&+H(S^n| W_{i,i+1,\ldots,i+j-1}^c)+o(n)\label{eq:W_i2c}
\end{eqnarray}
Similarly, we note that destination $D_{i+j-1+l(U+1)}$ for $3\leq l \leq m$ does not have $W_{i,\ldots,i+j-1,i+j-1+(U+1),\ldots,i+j-1+l(U+1)}$ as antidotes, so it must be able to decode $W_{i+j-1+l(U+1)}$ from $(S^n, W_{i,\ldots,i+j-1,i+j-1+(U+1),\ldots,i+j-1+l(U+1)}^c)$. 
\begin{eqnarray}
nR_{i+j-1+m(U+1)}&=&H(W_{i+j-1+m(U+1)})\\
&=&I(W_{i+j-1+m(U+1)}; S^n, W_{i,\ldots,i+j-1,i+j-1+(U+1),\ldots,i+j-1+m(U+1)}^c)+\nonumber\\
&&H(W_{i+j-1+m(U+1)}|S^n,W_{i,\ldots,i+j-1,i+j-1+(U+1),\ldots,i+j-1+m(U+1)}^c)\\
&=&I(W_{i+j-1+m(U+1)}; S^n| W_{i,\ldots,i+j-1,i+j-1+(U+1),\ldots,i+j-1+m(U+1)}^c)+o(n)\\
&=&H(S^n|W_{i,\ldots,i+j-1,i+j-1+(U+1),\ldots,i+j-1+m(U+1)}^c)-\nonumber\\
&&H(S^n|W_{i,\ldots,i+j-1,i+j-1+(U+1),\ldots,i+j-1+(m-1)(U+1)}^c)+o(n)
\end{eqnarray}
which gives us
\begin{eqnarray}
H(S^n|W_{i,\ldots,i+j-1,i+j-1+(U+1),\ldots,i+j-1+m(U+1)}^c)&=&nR_{i+j-1+m(U+1)}+\ldots+nR_{i+j-1+2(U+1)}\nonumber\\
&&+nR_{i+j-1+(U+1)}+H(S^n| W_{i,i+1,\ldots,i+j-1}^c)+o(n)\nonumber\\
\label{eq:W_i3c}
\end{eqnarray}
Our goal is to bound $\alpha_{K-A-1}$. We have
\begin{eqnarray}
\alpha_{K-A-1}&=&\alpha_{m(U+1)+j}\\
&=&\sum_{i=1}^KH(S^n|W_{i,i+1,\ldots,i+K-A-2}^c) \\
&\geq&\sum_{i=1}^K H(S^n|W_{i,\ldots,i+j-1,i+j-1+(U+1),i+j-1+2(U+1),\ldots,i+j-1+m(U+1)}^c)\label{eqn:condreduce}\\
&=&\sum_{i=1}^K \{nR_{i+j-1+(U+1)}+\ldots+nR_{i+j-1+m(U+1)}+H(S^n| W_{i,i+1,\ldots,i+j-1}^c)+o(n)\} \label{eqn:replaceW_i3c}\\
&=&m\alpha_1+\alpha_j+o(n)
\end{eqnarray}
where (\ref{eqn:condreduce}) is true because conditioning reduces the entropy. (\ref{eqn:replaceW_i3c}) is derived by replacing (\ref{eq:W_i3c}) into (\ref{eqn:condreduce}). This proves Lemma \ref{lemma3}.
\end{proof}
\begin{lemma} \label{lemma4}
For $j=2,3,\ldots,U+1$, we have 
\begin{eqnarray}
\alpha_{j} &\geq& \alpha_{j-1}+\frac{\alpha_1}{U+1}+o(n)\label{eqn:recursion3}\\ 
\Rightarrow~~~\alpha_{j} &\geq& \frac{U+j}{U+1}\alpha_{1}+o(n)\label{eqn:recursion4}
\end{eqnarray}
\end{lemma}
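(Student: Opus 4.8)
The plan is to transcribe the dimension‑counting proof of Lemma \ref{lemma2} into information‑theoretic language, replacing $\dim(\mathcal{V}_{\mathcal{X}})$ throughout by the conditional entropy $H(S^n\mid W_{\mathcal{X}}^c)$ and the submodularity of $\dim$ by the submodularity of joint entropy. The one tool that must be set up first is a ``complement‑submodularity'' inequality: for any index sets $\mathcal{P},\mathcal{Q}\subseteq\mathcal{M}$,
$$H(S^n\mid W_{\mathcal{P}}^c)+H(S^n\mid W_{\mathcal{Q}}^c)\geq H(S^n\mid W_{\mathcal{P}\cup\mathcal{Q}}^c)+H(S^n\mid W_{\mathcal{P}\cap\mathcal{Q}}^c).$$
I would obtain this exactly as in the proof of Lemma \ref{lem:useful_lemma}: apply submodularity of joint entropy to the collections $\{S^n,W_{\mathcal{P}}^c\}$ and $\{S^n,W_{\mathcal{Q}}^c\}$, use $W_{\mathcal{P}}^c\cup W_{\mathcal{Q}}^c=W_{\mathcal{P}\cap\mathcal{Q}}^c$ and $W_{\mathcal{P}}^c\cap W_{\mathcal{Q}}^c=W_{\mathcal{P}\cup\mathcal{Q}}^c$, and cancel the message‑entropy terms using independence of the messages. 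I would also record the monotonicity fact that $H(S^n\mid W_{\mathcal{X}}^c)$ is non‑decreasing in $\mathcal{X}$ (conditioning on a smaller message set can only increase entropy), the analogue of $\dim(\mathcal{V}_{\mathcal{X}})$ being non‑decreasing in $\mathcal{X}$.

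The first step then mirrors the telescoping chain (\ref{eqn:submodularproperty1}): starting from $\sum_{i=1}^K H(S^n\mid W_{i,\ldots,i+j-2,i+j+U-1}^c)$, I would apply the complement‑submodularity inequality together with monotonicity $U$ times, each time sliding the isolated index $i+j+U-1$ one position closer to the consecutive block $\{i,\ldots,i+j-2\}$, exactly as in Lemma \ref{lemma2}. Summing over $i$, the $U+1$ ``consecutive‑block'' sums produced all equal $\alpha_j$ and the $U$ subtracted sums all equal $\alpha_{j-1}$, giving
$$\sum_{i=1}^K H(S^n\mid W_{i,\ldots,i+j-2,i+j+U-1}^c)\leq (U+1)\alpha_j-U\alpha_{j-1}+o(n).$$

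The second step is the analogue of the disjointness identity (\ref{eqn:disjointsub1}). Since $j\geq 2$, none of $W_i,\ldots,W_{i+j-2}$ is an antidote of $D_{i+j+U-1}$ (the same observation used in the proof of Lemma \ref{lemma2}), so $\mathcal{A}_{i+j+U-1}\subseteq W_{i,\ldots,i+j-2,i+j+U-1}^c$ and $D_{i+j+U-1}$ can reliably decode $W_{i+j+U-1}$ from $(S^n,W_{i,\ldots,i+j-2,i+j+U-1}^c)$. Feeding this into the chain rule and Fano's inequality gives, for each $i$,
$$H(S^n\mid W_{i,\ldots,i+j-2,i+j+U-1}^c)=nR_{i+j+U-1}+H(S^n\mid W_{i,\ldots,i+j-2}^c)+o(n),$$
and summing over $i$, together with the elementary identity $\alpha_1=n\sum_{i}R_i+o(n)$ (a one‑line Fano argument, since $S^n$ is a function of all messages), yields $\sum_{i} H(S^n\mid W_{i,\ldots,i+j-2,i+j+U-1}^c)=\alpha_1+\alpha_{j-1}+o(n)$. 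Combining the two steps gives $\alpha_1+\alpha_{j-1}\leq (U+1)\alpha_j-U\alpha_{j-1}+o(n)$, i.e.\ $\alpha_j\geq\alpha_{j-1}+\frac{\alpha_1}{U+1}+o(n)$, which is (\ref{eqn:recursion3}); unrolling this recursion from the base value at $j=1$ over the at most $U$ steps down to $j$ (so the finitely many accumulated $o(n)$ terms stay $o(n)$) gives $\alpha_j\geq\frac{U+j}{U+1}\alpha_1+o(n)$, which is (\ref{eqn:recursion4}).

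The main obstacle I anticipate is bookkeeping rather than conceptual: arranging the cyclic index arithmetic in the telescoping chain so that the positive contributions collapse to exactly $(U+1)\alpha_j$ and the negative ones to exactly $U\alpha_{j-1}$ (all modulo $K$, invoking the cyclic symmetry of the antidote structure at the right places), and verifying that each of the constantly‑many Fano and submodularity applications contributes only an $o(n)$ term. Once the dimension‑counting proof of Lemma \ref{lemma2} is in hand, the argument here is its verbatim information‑theoretic transcription.
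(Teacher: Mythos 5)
Your proposal is correct and is essentially the paper's own proof: you replace $\dim(\mathcal{V}_{\mathcal{X}})$ by $H(S^n\mid W_{\mathcal{X}}^c)$, your complement-submodularity inequality (proved via submodularity of joint entropy plus cancellation of message entropies and conditioning-reduces-entropy) is precisely the paper's Lemma \ref{lemma5}, your telescoping chain sliding the isolated index toward the consecutive block $U$ times is the paper's equation (\ref{eqn:replaceinto}), and your Fano step decoding $W_{i+j+U-1}$ to produce $nR_{i+j+U-1}+H(S^n\mid W_{i,\ldots,i+j-2}^c)$ is the paper's (\ref{eqn:replace}), which combine in the same way to give (\ref{eqn:recursion3}) and, by unrolling the recursion, (\ref{eqn:recursion4}).
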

\begin{proof} 
We use Lemma \ref{lemma5} (proved later in this section) at each step. So we have
\begin{eqnarray}
\sum_{i=1}^K H(S^n,W_{i,i+1,\ldots,i+j-2,i+j+U-1}^c)&\leq& \sum_{i=1}^K \{H(S^n,W_{i,i+1,\ldots,i+j-1}^c)\nonumber\\
&&+H(S^n,W_{i+1,i+2,\ldots,i+j-1,i+j+U-1}^c)\nonumber\\
&&-H(S^n,W_{i+1,i+2,\ldots,i+j-1}^c)\}\\
&\leq&\sum_{i=1}^K \{H(S^n,W_{i,i+1,\ldots,i+j-1}^c)\nonumber\\
&&+H(S^n,W_{i+1,i+2,\ldots,i+j-1,i+j}^c) \nonumber\\
&&+H(S^n,W_{i+2,i+3,\ldots,i+j-1,i+j,i+j+U-1}^c)\nonumber\\
&&-H(S^n,W_{i+2,i+3,\ldots,i+j}^c)  \nonumber\\
&&-H(S^n,W_{i+1,i+2,\ldots,i+j-1}^c)\}\\
&\vdots&\nonumber\\
&\leq&\sum_{i=1}^K \{H(S^n,W_{i,i+1,\ldots,i+j-1}^c)\nonumber\\
&&+H(S^n,W_{i+1,i+2,\ldots,i+j-1,i+j}^c) \nonumber\\
&&+\ldots+H(S^n,W_{i+U,i+U+1,\ldots,i+j+U-1}^c)\nonumber\\
&&-H(S^n,W_{i+U,i+U+1,\ldots,i+j+U-2}^c) \nonumber\\
&&-\ldots-H(S^n,W_{i+2,i+3,\ldots,i+j}^c)-\nonumber\\
&&H(S^n,W_{i+1,i+2,\ldots,i+j-1}^c)\} \label{eqn:replaceinto}
\end{eqnarray}
We note that destination $D_{i+j+U-1}$ does not have $W_{i,i+1,\ldots,i+j-2,i+j+U-1}$ as antidotes, so it must be able to decode $W_{i+j+U-1}$ from $(S^n, W_{i,i+1,\ldots,i+j-2,i+j+U-1}^c)$. 
\begin{eqnarray}
nR_{i+j+U-1}&=&H(W_{i+j+U-1})\\
&=&I(W_{i+j+U-1}; S^n, W_{i,i+1,\ldots,i+j-2,i+j+U-1}^c)+\nonumber\\
&&H(W_{i+j+U-1}|S^n,W_{i,i+1,\ldots,i+j-2,i+j+U-1}^c)\\
&=&I(W_{i}; S^n| W_{i,i+1,\ldots,i+j-2,i+j+U-1}^c)+o(n)\\
&=&H(S^n|W_{i,i+1,\ldots,i+j-2,i+j+U-1}^c)-H(S^n|W_{i,i+1,\ldots,i+j-2}^c)+o(n)
\end{eqnarray}
which gives us
\begin{eqnarray}
H(S^n|W_{i,i+1,\ldots,i+j-2,i+j+U-1}^c)=nR_{i+j+U-1}+H(S^n|W_{i,i+1,\ldots,i+j-2}^c)+o(n) \label{eqn:replace}
\end{eqnarray}
Replacing (\ref{eqn:replace}) into (\ref{eqn:replaceinto}), we have
\begin{eqnarray}
\alpha_{j-1}+\alpha_{1}+o(n)\leq\alpha_j+\alpha_j+\ldots+\alpha_j-\alpha_{j-1}-\ldots-\alpha_{j-1}\\
\alpha_j\geq \frac{1}{U+1}\{(U+1)\alpha_{j-1}+\alpha_1\}+o(n)
\end{eqnarray}
Solving this recursive equation, we get
\begin{eqnarray}
\alpha_{j} \geq \frac{U+j}{U+1}\alpha_{1}+o(n)
\end{eqnarray}
This proves Lemma \ref{lemma4}.

Combining Lemma \ref{lemma3} and Lemma \ref{lemma4}, we have the following
\begin{eqnarray}
\alpha_{K-A-1} \geq m\alpha_{1}+\frac{U+j}{U+1}\alpha_{1}+o(n)=\frac{m(U+1)+j+U}{U+1}\alpha_{1}+o(n)=\frac{K-A-1+U}{U+1}\alpha_{1}+o(n)
\end{eqnarray}
Finally, we note that destination $D_{i-D-1}$ does not have any of the messages $W_{i,i+1,\ldots,i+K-A-2}$ as antidotes. So it must be able to decode $W_{i-D-1}$ from $(S^n,W_{i-D-1,i,i+1,\ldots,i+K-A-2})$.

\begin{eqnarray}
\sum_{i=1}^{K} nR_{i-D-1}&\leq& \sum_{i=1}^{K}I\left(W_{i-D-1};S^n|W_{i-D-1,i,i+1,\ldots,i+K-A-2}^c\right)+o(n)\\
&=& \sum_{i=1}^{K}\left\{H\left(S^n|W_{i-D-1,i,i+1,\ldots,i+K-A-2}^c\right)-
  H(S^n|W_{i,i+1,\ldots,i+K-A-2}^c)\right\}+o(n)\nonumber\\
  &\leq&
 \sum_{i=1}^{K}\{ n-\frac{K-A-1+U}{U+1}R_i\}+o(n)
\end{eqnarray}
Rearranging terms and applying the limit $n\rightarrow\infty$ we have
\begin{eqnarray}
\sum_{i=1}^{K}R_i&\leq&\frac{U+1}{K-A+2U}K
\end{eqnarray}
so that we have the information theoretic capacity outer bound of $\frac{U+1}{K-A+2U}$ per message.

\end{proof}
\begin{lemma}\label{lemma5}For $l=0,1,\ldots,U-1$, we have
\begin{eqnarray}
H(S^n|W_{i+l,i+l+1,\ldots,i+l+j-2,i+j+U-1}^c) &\leq& H(S^n|W_{i+l,i+l+1,\ldots,i+l+j-1}^c)\nonumber\\
&&+H(S^n|W_{i+l+1,i+l+2,\ldots,i+l+j-1,i+j+U-1}^c)\nonumber\\
&&-H(S^n|W_{i+l+1,i+l+2,\ldots,i+l+j-1}^c)
\end{eqnarray}
\end{lemma}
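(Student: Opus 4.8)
The plan is to prove Lemma~\ref{lemma5} by exactly the submodularity‑plus‑conditioning template already used for Lemma~\ref{lem:useful_lemma}, the only extra work being a careful bookkeeping of the index sets. The underlying fact I will exploit is that, since $W_1,\dots,W_M$ are mutually independent, for any index set $P\subseteq\mathcal{M}$ we have $H(S^n,W_P^c)=H(S^n\mid W_P^c)+\sum_{m\notin P}H(W_m)$; consequently, by the inclusion--exclusion identity $\sum_{m\in P}H(W_m)+\sum_{m\in Q}H(W_m)=\sum_{m\in P\cap Q}H(W_m)+\sum_{m\in P\cup Q}H(W_m)$, any submodular inequality among the joint entropies $H(S^n,W_\cdot^c)$ translates \emph{verbatim} into the same inequality among the conditional entropies $H(S^n\mid W_\cdot^c)$, with no assumption of equal rates.

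First I would set $P=\{i+l,i+l+1,\dots,i+l+j-1\}$ and $Q=\{i+l+1,i+l+2,\dots,i+l+j-1,\,i+j+U-1\}$. Because $0\le l\le U-1$ and $2\le j\le U+1$, and because in the setting of Theorem~\ref{thm:sym1} the whole span $i+l,\dots,i+j+U-1$ of indices involved has length $j+U-1-l\le 2U\le A<K$, all these indices are distinct modulo $K$. A short check then gives $P\cap Q=\{i+l+1,\dots,i+l+j-1\}$ and $P\cup Q=\{i+l,\dots,i+l+j-1,\,i+j+U-1\}$: indeed $i+l\notin Q$ (the equality $i+l=i+j+U-1$ would force $l=j+U-1>U-1$) and $i+j+U-1\notin P$ (membership would force $U\le l$). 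It is precisely at these two checks that the hypothesis $l\le U-1$ is used.

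Next I apply submodularity of the entropy function to $\mathcal{C}=\{S^n,W_P^c\}$ and $\mathcal{D}=\{S^n,W_Q^c\}$, using the set identities $W_P^c\cup W_Q^c=W_{P\cap Q}^c$ and $W_P^c\cap W_Q^c=W_{P\cup Q}^c$, exactly as in the proof of Lemma~\ref{lem:useful_lemma}. Converting to conditional entropies by the remark above, this gives $H(S^n\mid W_P^c)+H(S^n\mid W_Q^c)\ge H(S^n\mid W_{P\cap Q}^c)+H(S^n\mid W_{P\cup Q}^c)$. Finally, since the index set $\{i+l,\dots,i+l+j-2,\,i+j+U-1\}$ appearing on the left‑hand side of Lemma~\ref{lemma5} is a subset of $P\cup Q$ (it is $P\cup Q$ with the index $i+l+j-1$ deleted), the random‑variable set $W_{P\cup Q}^c$ is contained in $W_{\{i+l,\dots,i+l+j-2,i+j+U-1\}}^c$, so ``conditioning reduces entropy'' yields $H(S^n\mid W_{P\cup Q}^c)\ge H(S^n\mid W_{\{i+l,\dots,i+l+j-2,i+j+U-1\}}^c)$. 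Chaining this with the submodularity inequality and rearranging gives exactly the claimed bound.

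I do not anticipate any genuine obstacle here: this is the same two‑line argument as Lemma~\ref{lem:useful_lemma}, and the only place requiring attention is the second step, namely verifying that the index sets behave as asserted (so that the set identities for $P\cap Q$ and $P\cup Q$ hold and the passage from joint to conditional entropies is an equality rather than a mere inequality). Once those routine membership checks are in place, the result follows immediately.
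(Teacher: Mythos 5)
Your proof is correct and takes the same route as the paper: apply submodularity of entropy to $\mathcal{C}=\{S^n,W_P^c\}$, $\mathcal{D}=\{S^n,W_Q^c\}$ with precisely the same $P$ and $Q$, pass from joint to conditional entropies, and then drop the index $i+l+j-1$ from $P\cup Q$ via conditioning-reduces-entropy. Your explicit bookkeeping of the index sets (where the hypothesis $l\le U-1$ is used) and the independence-plus-inclusion-exclusion argument justifying the joint-to-conditional conversion simply make precise steps the paper leaves implicit.
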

\begin{proof}
\noindent
Setting  $$\mathcal{C}=\{S^n,W_{i+l,i+l+1,\ldots,i+l+j-1}^c\},$$
$$\mathcal{D}=\{S^n,W_{i+l+1,i+l+2,\ldots,i+l+j-1,i+j+U-1}^c\}$$ and using the submodular property of entropy function
\begin{eqnarray}
H(\mathcal{C})+H(\mathcal{D})\geq H(\mathcal{C}\cup \mathcal{D})+H(\mathcal{C}\cap \mathcal{D} )
\end{eqnarray}
 we have
\begin{eqnarray*}
H(S^n,W_{i+l,\ldots,i+l+j-1}^c)+H(S^n,W_{i+l+1,\ldots,i+l+j-1,i+j+U-1}^c)&\geq& H(S^n,W_{i+l+1,i+l+2,\ldots,i+l+j-1}^c)\nonumber\\
&&+H(S^n,W_{i+l,i+l+2,\ldots,i+l+j-1,i+j+U-1}^c)
\end{eqnarray*}
Equivalently,
\begin{eqnarray}
H(S^n|W_{i+l,\ldots,i+l+j-1}^c)+H(S^n|W_{i+l+1,\ldots,i+l+j-1,i+j+U-1}^c)&\geq& H(S^n|W_{i+l+1,i+l+2,\ldots,i+l+j-1}^c)\nonumber\\
&&+H(S^n|W_{i+l,i+l+2,\ldots,i+l+j-1,i+j+U-1}^c)\nonumber\\
&\geq& H(S^n|W_{i+l+1,i+l+2,\ldots,i+l+j-1}^c)\nonumber\\
&&+H(S^n|W_{i+l,i+l+2,\ldots,i+l+j-2,i+j+U-1}^c) \nonumber\\ \label{eqn:condreduce2}
\end{eqnarray}
where (\ref{eqn:condreduce2}) is true because conditioning reduces the entropy. Equivalently,
\begin{eqnarray}
H(S^n|W_{i+l,i+l+1,\ldots,i+l+j-2,i+j+U-1}^c) &\leq& H(S^n|W_{i+l,i+l+1,\ldots,i+l+j-1}^c)\nonumber\\
&&+H(S^n|W_{i+l+1,i+l+2,\ldots,i+l+j-1,i+j+U-1}^c)\nonumber\\
&&-H(S^n|W_{i+l+1,i+l+2,\ldots,i+l+j-1}^c)
\end{eqnarray}
This proves Lemma \ref{lemma5}. 
\end{proof}

\section{Proof of Theorem \ref{thm:sym2}: Neighboring interference}\label{sec:sym2proof}
\subsection{Achievability}
We use a scalar linear achievable scheme. In particular, we choose $\mathcal{S}=\mathbb{F}_q^{D+1}$. Suppose we define $\mathbf{V}_1,\mathbf{V}_2,\ldots,\mathbf{V}_{D+1}$ as $D+1$ linearly independent vectors in a $D+1$ dimensional space. Then, starting from any arbitrarily chosen message, we assign these vectors to $D+1$ consecutive messages respectively and repeat this assignment periodically. So if we assign $\mathbf{V}_m$ to message $W_r$, then message $W_i$ will be encoded using $\mathbf{V}_{(i-r+m) \mod (D+1)}$. We now need to show the resolvability of message $W_r$ at destination $r$. Suppose $\mathbf{V}_m$ is the vector assigned to message $W_r$. Notice that  there are $U+D$ interfering messages at this destination: $W_{r-U, r-U+1,\ldots, r-1, r+1, r+2,\ldots, r+D}.$ The remaining messages are available as antidotes and can be cancelled. The $U$ interfering messages $W_{r-U},\ldots,W_{r-1}$ are encoded over $\mathbf{V}_{(m-U)\mod (D+1)}$,$\ldots,\mathbf{V}_{(m-1)\mod (D+1)}$, respectively. Since $\{i+D+1 \mod D+1\} = \{i \mod D+1\}$, these interfering vectors are the same as $\mathbf{V}_{(m+D+1-U)\mod (D+1)}$,$\ldots,\mathbf{V}_{(m+D)\mod (D+1)}$.  Also the remaining $D$ interfering messages of destination $r,$ i.e., $W_{r+1},\ldots,W_{r+D}$ are sent over $\mathbf{V}_{(m+1)\mod (D+1)},\ldots,\mathbf{V}_{(m+D)\mod (D+1)}$, respectively. As we can see all the interfering messages align in the $D$-dimensional space spanned by vectors $\mathbf{V}_{(m+1)\mod (D+1)},\ldots,\mathbf{V}_{(m+D)\mod (D+1)}.$ This interference space is linearly independent of $\mathbf{V}_m$ because of our construction. Since we are operating in a $D+1$ dimensional space, $W_{r}$ is linearly resolvable at destination $r$ as required. This proves achievability.

\emph{Remark:} $\mathbf{V}_1,\mathbf{V}_2,\ldots,\mathbf{V}_{D+1}$ can be chosen to be columns of $D+1$ dimensional identity matrix and the field can be chosen to be $\mathbb{F}_2$ . In other words, a $D+1$-symbol long achievable scheme is obtained by simply sending the bit $\ldots \oplus x_{r-D-1}\oplus x_{r} \oplus x_{r+D+1} \oplus \ldots, $ in the $r$th channel use, where $\oplus$ denotes the XOR. 

\subsection{Outerbound}
For the converse, note that we can set $U=0$. This is because, if $U > 0$,  a genie can provide messages $W_{r-U, r-U+1,\ldots,r-1}$ to destination $r$ as antidotes to make an enhanced index coding problem where each user is only missing $D+1$ antidotes after the desired message, where we will show that the capacity is $\frac{1}{D+1}$ per message.  In this setting where $U=0,$ we intend to show that 
$$R_{i}+R_{i+1}+\ldots+R_{i+D} \leq 1.$$ 
To do so, we give destinations $i, i+1, \ldots, i+D$ messages $W_{i,i+1,\ldots,i+D}^{c}$ through a genie. Now, in this genie-aided system, note that destination $i,$ which is missing antidotes $W_{i+1,i+2,\ldots,i+D}$ can decode $W_{i}$ (because of achievability in the original index coding problem). Having decoded $W_{i},$ this destination has all the messages present at destination $i+1,$ and can therefore decode $W_{i+1}.$ Having decoded $W_{i,i+1}$, destination $i$ is now equipped with all messages present as antidotes at destination ${i+2}$ and can therefore decode $W_{i+2}.$ Continuing this argument, it can be shown that destination $i$ can decode $W_{i,i+1,\ldots,i+D},$ and therefore we have, as required,
$$R_{i}+R_{i+1}+\ldots+R_{i+D} \leq 1.$$ 

\section{Proof of Theorem \ref{thm:sym3}: X network setting with local connectivity}\label{sec:Xproof}
\subsection{Achievability }
We use a scalar linear achievable scheme. In particular, we choose $\mathcal{S}=\mathbb{F}_{q}$ and $n={\frac{L(L+1)}{2}}$. We choose $\mathbf{V}_1,\mathbf{V}_2,\ldots,\mathbf{V}_{\frac{L(L+1)}{2}}$ to be $\frac{L(L+1)}{2}$ linearly independent vectors in a $n=\frac{L(L+1)}{2}$ dimensional space. Each message is sent over one of these vectors.
 
Consider an arbitrary destination, say destination $k$. After removing antidotes, this destination receives a linear combination of $L^2$ messages shown in (\ref{eqn:xrec}) above the line, and it is interested in $L$ messages 
$$\mathcal{W}_k=\{W_{kL,kL+L-1, (k+1)L+L-2,\cdots,(k+i)L+L-i-1,\cdots, (k+L-2)L+1}\}$$
 shown in blue in (\ref{eqn:xrec}). The destination faces $(L^2-L)$ interfering messages shown in (\ref{eqn:xrec}) with colors other than blue (and above the line). The ``blue'' messages are desired by destination $k$ and are  encoded using $L$ linearly independent vectors, $\mathbf{V}_1,\ldots,\mathbf{V}_L$. Because the total number of dimensions is $\frac{L(L+1)}{2}$, the $L^2-L$ interfering messages should align such that they span at most $\frac{L(L+1)}{2}-L$ dimensions. Among these interfering messages, the $L-1$ ``red'' messages $W_{(k+1)L, (k+1)L+L-1),\ldots,(k+L-2)+2}$ are desired by destination $k+1$ and should be linearly independent. These messages in red are encoded using $\mathbf{V}_{L+1},\ldots,\mathbf{V}_{2L-1}$.  Also all the $L-2$ ``green'' messages that are below the red messages and above the line, i.e., $W_{(k+2)L, (k+2)L+L-2,\ldots, (k+L-2)+3}$ are not available as antidotes at either destination $k$ or destination $k+1$ and therefore seen as interference at both these destinations. This implies that the red messages and blue messages can not align with these green messages. So we assign $\mathbf{V}_{2L},\ldots,\mathbf{V}_{3L-2}$ as encoding vectors respectively to green messages. Proceeding thus, we assign $V_{L(L+1)/2}$ to $W_{(k+L-1)L}$. So far, we have assigned $L(L+1)/2-L$ linearly independent vectors to $L(L+1)/2-L$ interferers at destination $k$.  

\begin{eqnarray}
\begin{array}{ccccc}\label{eqn:xrec}
W_{(k-1)L+1}&W_{(k-1)L+2}&\ldots&W_{(k-1)L+L-1}&\color{blue}W_{kL}\\
W_{kL+1}&W_{kL+2}&\ldots&\color{blue}W_{kL+L-1}& \color{red}W_{(k+1)L}\\
W_{(k+1)L+1}&W_{(k+1)L+2}&\ldots&\color{red}W_{(k+1)L+L-1}& \color{green}W_{(k+2)L}\\
\vdots&\vdots&\iddots&\vdots&\vdots\\
W_{(k+L-3)L+1}&\color{blue}W_{(k+L-3)L+2}&\ldots&\color{Orange}W_{(k+L-3)L+L-1}&\color{violet}W_{(k+L-2)L}\\
\color{blue}W_{(k+L-2)L+1}&\color{red}W_{(k+L-2)L+2}&\ldots&\color{violet}W_{(k+L-2)L+L-1}&\color{yellow}W_{(k+L-1)L}\\
\hline
W_{(k+L-1)L+1}&W_{(k+L-1)L+2}&\ldots&W_{(k+L-1)L+L-1}&W_{(k+L)L}\\
\end{array}
\end{eqnarray}

So the remaining interfering messages that are shown above the blue messages at (\ref{eqn:xrec}) should be sent over the vectors such that they stay in the same span of interfering messages below the blue messages. The way that we satisfy this constraint is by sending the messages shown in (\ref{eqn:xrec}) respectively over the following vectors

\begin{eqnarray}
\begin{array}{ccccc}\label{eqn:xvec}
\mathbf{V}_{L+1}&\mathbf{V}_{L+2}&\ldots&\mathbf{V}_{2L-1}&\mathbf{V}_1\\
 \mathbf{V}_{2L}&\mathbf{V}_{2L+1}&\ldots&\mathbf{V}_2& \mathbf{V}_{L+1}\\
 \mathbf{V}_{3L-1}&\mathbf{V}_{3L}&\ldots&\mathbf{V}_{L+2}& \mathbf{V}_{2L}\\
\vdots&\vdots&\iddots&\vdots&\vdots\\
\mathbf{V}_{\frac{L(L+1)}{2}}&\mathbf{V}_{L-1}&\ldots&\mathbf{V}_{\frac{L(L+1)}{2}-4}&\mathbf{V}_{\frac{L(L+1)}{2}-2}\\
\mathbf{V}_L&\mathbf{V}_{2L-1}&\ldots&\mathbf{V}_{\frac{L(L+1)}{2}-1}&\mathbf{V}_{\frac{L(L+1)}{2}}\\
\hline
\mathbf{V}_{1}&\mathbf{V}_{2}&\ldots&\mathbf{V}_{L-1}&\mathbf{V}_{L}
\end{array} \label{eqn:pattern}
\end{eqnarray}

Evidently, at destination $k$, all desired messages are seen over $\mathbf{V}_{1},\ldots,\mathbf{V}_{L}$ and all the interfering messages are seen over $\mathbf{V}_{L+1},\ldots,\mathbf{V}_{\frac{L(L+1)}{2}}$. Therefore, desired messages are resolvable at destination $k$. However our goal is showing that all the destinations are able to decode their desired messages. If we assign $\mathbf{V}_{1},\mathbf{V}_{2},\ldots,\mathbf{V}_{L-1},\mathbf{V}_{L}$ to $W_{(k+L-1)L+1},W_{(k+L-1)L+2},\ldots,W_{(k+L-1)L+L-1},W_{(k+L)L}$ respectively and repeat the pattern shown in (\ref{eqn:xvec}) as assigning vectors periodically for the remaining messages, we can show resolvability at every destination. This follows because, if we choose any $L$ consecutive rows (circularly) of (\ref{eqn:xvec}), the vectors assigned to anti-diagonal messages which are desired messages are linearly independent from each other and from the vectors assigned to interfering messages. This proves achievability.

\emph{Remark:} $\mathbf{V}_1,\mathbf{V}_2,\ldots,\mathbf{V}_{\frac{L(L+1)}{2}}$ can be chosen to be columns of $\frac{L(L+1)}{2}$ dimensional identity matrix, e.g., over $\mathbb{F}_2$ and therefore can be an orthogonal scheme.  For the corresponding CBIA problem this means that channel coherence is not required \cite{CBIA}.
\subsection{Outerbound}
To prove the outerbound, consider a set of $\frac{L(L+1)}{2}$ messages 
\begin{eqnarray}
 \mathcal{W_O}=\{W_{kL,kL+L-1:kL+L,\cdots,(k+i)L-i-1:(k+i)L+L,\cdots,(k+L-2)L+1:(k+L-1)L}\}, \nonumber
 \end{eqnarray}
i.e., the set of ``colored'' messages shown (above the line) in (\ref{eqn:xrec}).
  Among the messages in $\mathcal{W_O}$, the number of messages intended for destination $k+i$ is $L-i,$ where $i \in \{0,1,2,\ldots,L-1.\}$ Our goal is to argue that destination $k$ can decode all these $\frac{L(L+1)}{2}$ messages and hence symmetric rate per message is bounded as $C \leq \frac{2}{L(L+1)}$. 

Consider any reliable achievable index coding scheme.  Assume a genie provides all the messages except  $\mathcal{W_O}$ for all the destinations $k,k+1,\cdots,k+L-1$. With the considered index coding scheme, destination $k,$ which has $\mathcal{W_O}^{c}$ as antidotes, can decode the ``blue messages'' in (\ref{eqn:xrec}), i.e., $$\{W_{kL,kL+L-1, (k+1)L+L-2,\cdots,(k+i)L+L-i-1,\cdots, (k+L-2)L+1}\}.$$ destination $k+1$ can decode its desired $L-1$ ``red'' messages - $\{W_{kL+L, (k+1)L+L-1,\cdots,(k+i)L+L-i,\cdots, (k+L-2)L+2}\}$ - using $\mathcal{W_O}^{c}\cup \{W_{kL}\}$ as antidote. This automatically implies that destination $k,$ having decoded all the blue messages including $W_{kL}$ can decode the red messages as well. Now, having decoded all the blue messages and red messages, the set of messages known to destination $k$ includes $\mathcal{W_O}^c \cup W_{kL, kL+L-1, (k+1)L}$ - the antidote at destination $k+2$. Therefore all the green messages can be decoded at destination $k$. Proceeding further similarly we can argue that destination $k$ can decode all the messages in $\mathcal{W_O}$. This completes the proof.



\section{Conclusion}
As evident from this work, interference alignment is integral to the index coding problem.  The interference alignment perspective allows us to not only solve fairly complex index coding problems, but  also it makes the intuition behind the capacity optimal solutions quite transparent. As with wireless networks, while much of the initial intuition from interference alignment schemes is based on dimension counting based on linear codes, with few exceptions the dimension counting bounds are readily translated into tight information theoretic bounds. It is also remarkable that the interference alignment perspective allows us to prove the insufficiency of linear codes for multiple unicast index coding. As a side remark, we note that the capacity results of the index coding settings explored in this work, since they rely only on vector linear achievable schemes that readily translate into the field of complex numbers,  directly establish corresponding DoF results for the cellular blind interference alignment settings as well. We end this paper with a couple of intriguing questions. 

First, it is not clear that auxiliary messages and destinations, while convenient for our purpose, are \emph{necessary} in the equivalent multiple unicast problem. Since the purpose of auxiliary messages and destinations is only to force the expanded messages into alignment, the natural question is if auxiliary messages and destinations are not included, would it be possible to achieve a higher $\min$ rate for the expanded messages in the multiple unicast setting, presumably through a non-aligned solution? While this possibility seems unlikely, we do not yet have a proof that it is impossible. Such a proof would be desirable because it would make the equivalence between groupcast and unicast settings more direct. 

The second question pertains to the capacity of the $X$ channel setting with finite number of users. We showed in Section \ref{sec:X} that the capacity solution for the case where the number of users is infinity, is $\frac{2}{L(L+1)}$ per message, where $L$ is the number of messages per transmitter or per receiver. However, note that the finite user setting for the X channel, studied in Example 3 presented in Section \ref{sec:Xexample} is also consistent with this result. In that example, we have 3 messages per source/destination, i.e., $L=3$ and we achieve rate $\frac{2}{12}$ per message, even though the number of users is finite. A similar observation can be made for the $L=2$ setting considered in \cite{CBIA}. This suggests an interesting possibility -- is it always possible to achieve rate $\frac{2}{L(L+1)}$ per message with finite number of users $K$? Note that the outer bound applies to finite $K$ settings as well. On the other hand, even if the optimal rate per message is the same for finite $K$ as well as infinite $K$, evidently the alignment solution can be much more complex for finite $K$ settings. Note that a sophisticated subspace alignment solution is needed in Section \ref{sec:Xexample}, but much simpler orthogonal solutions suffice when $K$ is infinity, as shown in Section \ref{sec:Xproof}. 

Finally, we conclude with the observation that the index coding problem remains still an open problem of great interest, and we expect that the insights from interference alignment will continue to be useful not only in solving smaller networks or symmetric versions of extended networks as shown here, but perhaps also in other directions not explored in this work --- e.g., designing interference alignment inspired algorithms for arbitrary index coding settings, and studying order optimality of interference alignment techniques in index coding settings modeled as random graphs.

\bibliography{Thesis}
\end{document}